\def\bbordermatrix#1{\begingroup \m@th
  \@tempdima 4.75\p@
  \setbox\z@\vbox{%
    \def\cr{\crcr\noalign{\kern2\p@\global\let\cr\endline}}%
    \ialign{$##$\hfil\kern2\p@\kern\@tempdima&\thinspace\hfil$##$\hfil
      &&\quad\hfil$##$\hfil\crcr
      \omit\strut\hfil\crcr\noalign{\kern-\baselineskip}%
      #1\crcr\omit\strut\cr}}%
  \setbox\tw@\vbox{\unvcopy\z@\global\setbox\@ne\lastbox}%
  \setbox\tw@\hbox{\unhbox\@ne\unskip\global\setbox\@ne\lastbox}%
  \setbox\tw@\hbox{$\kern\wd\@ne\kern-\@tempdima\left[\kern-\wd\@ne
    \global\setbox\@ne\vbox{\box\@ne\kern2\p@}%
    \vcenter{\kern-\ht\@ne\unvbox\z@\kern-\baselineskip}\,\right]$}%
  \null\;\vbox{\kern\ht\@ne\box\tw@}\endgroup}
\pgfplotsset{compat=1.18} 
\tikzset{snake it/.style={decorate, decoration=snake}}
\newtheorem{prop}{Proposition}[section]
\newtheorem{theorem}{Theorem}[section]
\newtheorem{corollary}{Corollary}[theorem]
\newtheorem{lemma}[theorem]{Lemma}
\newtheorem{definition}{Definition}
\newtheorem{result}{Result}[section]
\newcommand{\iu}{{i\mkern1mu}}
\DeclarePairedDelimiter{\ceil}{\lceil}{\rceil}
    \newmdtheoremenv{thrm}{Theorem}
    \newmdtheoremenv{thrm*}{Theorem}
    \newmdtheoremenv{hlo}{Overview}
    \newmdtheoremenv{stm}{Statement}
\newcounter{protocol}
\newenvironment{protocol}[1][]
{%
  \refstepcounter{protocol}
  \begin{center}
  \begin{tabular}{p{0.97\textwidth}}
  \hline
  \textbf{Protocol \theprotocol.%
  \if\relax\detokenize{#1}\relax\else\ #1\fi}
  \\ \hline
}
{%
  \\ \hline
  \end{tabular}
  \end{center}
}
\newcommand{\sbline}{\\[.5\normalbaselineskip]}
\DeclareMathAlphabet{\mathgtt}{LGR}{cmtt}{m}{n}
\DeclareMathOperator\supp{supp}
\DeclareMathOperator{\Var}{Var}
\title{\large{\textbf{Efficient Fidelity Estimation with Few Local Pauli Measurements}}}
\author[1]{\normalsize Mingyu Sun}
\author[1,2]{\normalsize Gabriel Waite}
\author[1,2]{\normalsize Michael J. Bremner}
\author[1]{\normalsize Christopher Ferrie}
\affil[1]{\small Centre for Quantum Software and Information, School of Computer Science, \protect\\ Faculty of Engineering \& Information Technology, \protect\\ University of Technology Sydney, NSW 2007, Australia
}
\affil[2]{\small Centre for Quantum Computation and Communication Technology}
\date{}
\begin{document}
\maketitle

\begin{abstract}
As quantum devices scale, quantifying how close an experimental state aligns with a target becomes both vital and challenging. Fidelity is the standard metric, but existing estimators either require full tomography or apply only to restricted state/measurement families. Huang, Preskill, and Soleimanifar (Nature Physics, 2025) introduced an efficient certification protocol for Haar-random states using only a polynomial number of non-adaptive, single-copy, local Pauli measurements. Here, we adopt the same data collection routine but recast it as a fidelity estimation protocol with rigorous performance guarantees and broaden its applicability. We analyze the bias in this estimator, linking its performance to the mixing time $\tau$ of a Markov chain induced by the target state, and resolve the three open questions posed by Huang, Preskill, and Soleimanifar (Nature Physics, 2025). Our analysis extends beyond Haar-random states to state $t$-designs, states prepared by low-depth random circuits, physically relevant states and families of mixed states. We introduce a $k$-generalized local escape property that identifies when the fidelity estimation protocol is both efficient and accurate, and design a practical empirical test to verify its applicability for arbitrary states. This work enables scalable benchmarking, error characterization, and tomography assistance, supports adaptive quantum algorithms in high dimensions, and clarifies fundamental limits of learning from local measurements.
\end{abstract}

\clearpage
\tableofcontents
\clearpage

\section{Introduction}
Assessing the performance and reliability of quantum systems is crucial for the development of quantum technology. Fidelity estimation, which quantifies how close a prepared quantum state is to a target, remains a fundamental tool for benchmarking, certification, and characterization \cite{PhysRevLett.106.230501, Cerezo_2021}.
However, existing fidelity estimation methods face severe limitations: exponential sample complexity~\cite{PhysRevLett.106.230501, PhysRevLett.107.210404, article_Aolita}, deep quantum circuits or sophisticated experimental controls~\cite{10.1145/2897518.2897544, 10.1145/3055399.3055454, 7956181, PhysRevLett.113.190404, gupta2025singlequbitmeasurementssufficecertify, Chen_2021}, reliance on strong oracle models~\cite{gupta2025singlequbitmeasurementssufficecertify}, or applicability restricted to special states/measurements~\cite{PhysRevLett.106.230501, 10756060, PhysRevLett.120.190501, PhysRevX.8.02106, 2a6eac0a7fff47098b29d37402b877f0, gilyén2022improvedquantumalgorithmsfidelity, Cerezo_2020}. 

\paragraph{Prior work.}Direct Fidelity Estimation (DFE) estimates the fidelity between a known pure quantum state $\vert \psi \rangle$ and an unknown
state $\rho$ by importance-sampling Pauli observables~\cite{PhysRevLett.106.230501}. DFE is constant-cost for stabilizer states, but scales as $\mathcal{O}(2^n)$ for generic states due to the Pauli spectrum spreading over $2^n$ non-uniform terms. Moreover, when some importance weights are tiny, the estimator suffers "small-denominator" variance blow-up~\cite{wong2025efficientquantumtomographypolynomial}. 
Classical shadows excel in fidelity estimation with Clifford measurements; with Pauli measurements, the variance can be exponentially large for generic targets~\cite{Huang2020PredictingMP} and, unless $\vert \psi \rangle$ has an efficient classical description, the post-processing cost remains $\mathcal{O}(2^n)$ \cite{vairogs2024directfidelityestimationgeneric}. A combination of quantum amplitude estimation with classical shadows achieves a quadratic speedup, but requires deep circuits, ancilla qubits, and precise coherent control~\cite{vairogs2024directfidelityestimationgeneric}. Cha and Lee use randomized measurements with tailored post-processing to exploit structure in specific targets, yielding 1–2 orders-of-magnitude savings over DFE, but only for restricted families~\cite{cha2025efficientsamplingpaulimeasurementbased}.

Huang, Preskill, and Soleimanifar~\cite{10756060} introduced the shadow overlap certification protocol, using only a polynomially many non-adaptive single-copy (no collective measurements) Pauli measurements. Each round randomly measures $k$ qubits in random Pauli bases and the remaining $n-k$ qubits in $Z$, yielding the $k$-qubit classical shadow and the conditional state from $\vert \psi\rangle$ with which they compute one local overlap. The average of local overlaps is compared to a threshold for certification.

The authors posed three open questions~\cite{10756060}: \begin{inparaenum}[1)]
    \item Beyond Haar random states, which families of quantum states (e.g. states generated by random quantum circuits) admit efficient certification? 
    \item Can we identify states that cannot be certified using only polynomially many non-adaptive single-copy local measurements? 
    \item Can similar protocols efficiently certify certain families of mixed states?
\end{inparaenum} Note that certification is a binary test that determines whether the fidelity exceeds a threshold, while fidelity estimation seeks a precise value within a continuous range; thus is inherently harder. In principle, a certification protocol can be turned into a fidelity estimation by scanning the threshold using multiple certification tests, but this approach is inefficient and redundant. 

\paragraph{Contributions.} We adopt the same query access and data collection routine as the shadow overlap certification protocol, but we change the postprocessing and substantially extend its theoretical foundation and practical applicability:
\begin{itemize}
    \item \textbf{From certification to estimation}: We rigorously establish the protocol as a fidelity estimator, providing quantitative fidelity values rather than binary threshold tests. The estimator is biased, but we analyze and bound the bias and propose methods to partially mitigate the bias. We also support simultaneous estimation against $M$ target with sample overhead $\mathcal{O}(\ln M)$.
    \item \textbf{Proof extensions}: We extend the proof from Haar random state to $\epsilon$-approximate state $t$-designs, states prepared by certain low-depth random quantum circuits, ground state of (non-stoquastic) gapped local Hamiltonian, W states, and Dicke states. Our theoretical findings are corroborated by numerical simulations.
    \item \textbf{Practical applicability check}: We design an empirical check that diagnoses whether the estimator will be accurate for an arbitrary target state.
    \item \textbf{Extension to mixed states}: We extend the protocol to certify certain families of mixed states.
\end{itemize}
The last three points resolve the three open questions posed by Huang, Preskill, and Soleimanifar~\cite{10756060}; the first point transforms the shadow overlap from a certification tool into a fidelity estimation framework. We relate the accuracy and sample complexity of this fidelity estimator to the mixing time $\tau$ of a Markov chain induced by the target state’s distribution~\cite{10756060}. Our first theorem shows the sample complexity scales as $\mathcal{O}(\tau^2)$, and we introduce the $k$-Generalized Local Escape Property ($k$-GLEP) that guarantees $\tau$ is polynomially bounded. 

\subparagraph{Outline.} The corresponding Theorems and Result are presented in Section~\ref{Performance guarantees}, and discussed in Section~\ref{Main content}. Appendix~\ref{Observables and Fidelity} to~\ref{Mixed states} each includes the detailed analysis and proofs of the four theoretical contributions, respectively. Numerical results in Section~\ref{Result} support the theoretical findings and demonstrate the applications of this fidelity estimation protocol. Related work, variants are discussed in Section~\ref{Discussion}.

\section{Main Results}
\label{Performance guarantees}
Let $\pi$ denote the probability distribution of the target state $\ket{\psi}$ in the computational basis, i.e., $\pi(x) = \abs{\braket{x}{\psi}}^2$ for all $x \in \{0,1\}^n$.
Our main theorem establishes that fidelities with respect to multiple target states can be simultaneously estimated using a polynomial number of non-adaptive local Pauli measurements, provided the mixing times of the Markov chains with stationary distributions $\pi$ are polynomially bounded. The mixing time of a Markov chain quantifies how quickly it converges to its stationary distribution, and thus characterizes how efficiently the chain explores the state space.
\begin{theorem}[Sample Complexity of Fidelity Estimation]
\label{Main theorem}
    Let $\rho$ be an unknown $n$-qubit state and $\{ \vert \psi_i \rangle \}_{i=1}^M $ be $M$ pure target states. For each $\vert \psi_i \rangle$, let $\tau_i$ denote the mixing time of the Markov chain induced by $\vert \psi_i \rangle$. Suppose all $\tau_i$ are polynomially bounded in $n$, then local Pauli measurements on $T = \mathcal{O} \Big (2^{2k} \frac{\max_i \tau_i^2}{(c_b \epsilon)^2} \ln \Big ( \frac{2M}{\delta} \Big ) \Big ) $ samples of $\rho$ suffice to estimate fidelities $F_i= \langle \psi_i \vert \rho \vert \psi_i  \rangle$, $i= 1, \ldots, M$ up to additive error $\epsilon$ with a failure probability at most $\delta$, where $k \ll n$ is the number of qubits measured in random Pauli bases, $c_b$ accounts for the bias, $\frac{c_b}{\tau_i} \in (0,1)$.
\end{theorem}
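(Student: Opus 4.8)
The plan is to isolate a single-shot estimator from each measurement round, show that its expectation is an affine function of the fidelity whose slope is governed by the mixing time $\tau_i$, and then combine a concentration inequality with a union bound over the $M$ targets. The guiding arithmetic to aim for is that the sample complexity factorizes as $2^{2k}$ (the squared range of the per-shot estimator), times $\tau_i^2/(c_b\epsilon)^2$ (the squared statistical-error amplification from a slope $\sim c_b/\tau_i$), times $\ln(2M/\delta)$ (two-sided tail, union bound over $M$ targets).

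First I would formalize the per-round estimator. Each round produces a \emph{target-independent} outcome: a $k$-qubit classical shadow on a random subset $S$ measured in random Pauli bases, together with a $Z$-basis bit string on the complementary $n-k$ qubits. For a fixed target $\vert \psi_i \rangle$, the classical post-processing combines this data with the conditional (sub-normalized) state of $\vert \psi_i \rangle$ to form a bounded random variable $\hat f_i$. Using the observable-to-fidelity relation developed in Appendix~\ref{Observables and Fidelity}, I would establish that $\mathbb{E}[\hat f_i] = a_i + b_i F_i$, where the slope $b_i \sim c_b/\tau_i$ is pinned down by the spectral structure, hence the mixing time, of the Markov chain with stationary distribution $\pi_i$. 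Inverting this affine relation gives the fidelity estimator $\hat F_i = (\bar f_i - a_i)/b_i$, with $\bar f_i$ the empirical mean over the $T$ rounds; the rescaling by $1/b_i \sim \tau_i/c_b$ is precisely what amplifies statistical error by the factor $\tau_i$, and the hypothesis $c_b/\tau_i \in (0,1)$ guarantees the inversion is well-posed.

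Next I would bound the range of $\hat f_i$. Because the random Pauli shadow acts on only $k$ qubits, the inverse-channel reconstruction inflates magnitudes by at most $\mathcal{O}(2^k)$, so $\abs{\hat f_i} = \mathcal{O}(2^k)$ deterministically. Applying Hoeffding's inequality to the $T$ i.i.d.\ bounded variables yields $\Pr\!\big[\,\abs{\bar f_i - \mathbb{E}[\hat f_i]} > b_i \epsilon \,\big] \le 2\exp\!\big({-\Omega(T b_i^2 \epsilon^2 / 2^{2k})}\big)$. Translating a deviation of $\bar f_i$ by $b_i\epsilon$ into a deviation of $\hat F_i$ by $\epsilon$ and substituting $b_i \sim c_b/\tau_i$, a single target needs $T = \mathcal{O}\big(2^{2k}\tau_i^2/(c_b\epsilon)^2 \cdot \ln(1/\delta')\big)$. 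The decisive structural point is that the measurement data is collected once and reused for every target, since the data-collection routine does not depend on $\vert \psi_i \rangle$ and only the classical post-processing is target-specific. A union bound with per-target failure probability $\delta' = \delta/M$ then gives $T = \mathcal{O}\big(2^{2k}\max_i \tau_i^2/(c_b\epsilon)^2 \cdot \ln(2M/\delta)\big)$, the $\ln M$ overhead replacing a naive factor of $M$.

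The main obstacle is the first step: proving that the single-shot expectation is affine in $F_i$ with slope controlled by the mixing time. This is where the Markov-chain analysis does the work, connecting the purely \emph{local} quantity the estimator measures (a one-step overlap on $k$ qubits conditioned on the rest) to the \emph{global} fidelity, and showing that the signal strength degrades as $1/\tau_i$ rather than faster. Establishing the bias constant $c_b$ with $c_b/\tau_i \in (0,1)$ and verifying that residual bias does not corrupt the affine inversion is the delicate part; once the per-shot mean and range are under control, the concentration and union-bound arguments are routine.
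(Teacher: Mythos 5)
Your outer scaffolding matches the paper's proof: the per-shot quantity is bounded in magnitude by $\mathcal{O}(2^{k})$ (hence a $2^{2k}$ range-squared/variance factor), the measurement data is collected once and reused for every target since only the classical post-processing is target-specific, and a union bound over the $M$ targets yields the $\ln(2M/\delta)$ overhead; your choice of Hoeffding on the empirical mean in place of the paper's median-of-means with Chebyshev and a Chernoff bound is an interchangeable detail that the paper itself points out. The gap is in your central step. You assert an exact affine relation $\mathbb{E}[\hat f_i] = a_i + b_i F_i$ with slope $b_i \sim c_b/\tau_i$ and then invert it. No such relation holds. What the paper proves (Equation~\ref{Shadow and Fidelity}) is
\begin{equation*}
    \mathbb{E}[\omega] \;=\; F \;+\; \sum_{j \geq 1} \lambda_j \langle \lambda_j \vert \rho \vert \lambda_j \rangle ,
\end{equation*}
so the coefficient of $F$ is exactly $1$, and the additive term depends on the overlaps of $\rho$ with \emph{all} sub-dominant eigenvectors of the observable $L$; it is not a function of $F_i$ alone. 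Only a two-sided envelope is available, $F \leq \mathbb{E}[\omega] \leq \lambda_1 + (1-\lambda_1)F$ with $\lambda_1 = 1 - \frac{1}{\tau}$ (Equations~\ref{bias} and~\ref{bias 2}). Consequently there exist no constants $(a_i, b_i)$ to invert: if you invert the upper envelope (which is precisely the paper's $\hat{F}_{deb_2}$ in Section~\ref{Bias Analysis and Mitigation}), the unmodelled part of the bias is amplified by $\frac{1}{1-\lambda_1} = \tau_i$, and the resulting estimator can undershoot $F$ by as much as $\tau_i \lambda_1 (1-F)$ — the opposite of a controlled $\epsilon$-accurate estimate. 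Your closing caveat ("verifying that residual bias does not corrupt the affine inversion") is exactly the step that cannot be carried out.

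The paper's actual mechanism — and the actual meaning of $c_b$ — is different: the protocol outputs the raw (un-rescaled) median-of-means of $\omega$, \emph{assumes} the bias is small in the sense $\epsilon - \mathrm{Bias} \geq \frac{c_b \epsilon}{\tau}$, and splits the error budget by the triangle inequality (Equation~\ref{MoM batch bias 1}): the statistical deviation must be driven below $\epsilon - \mathrm{Bias} \geq \frac{c_b\epsilon}{\tau}$, and it is this shrunken accuracy target, combined with the $2^{2k}$ variance bound, that produces $T = \mathcal{O}\big(2^{2k} \frac{\max_i \tau_i^2}{(c_b\epsilon)^2}\ln\frac{2M}{\delta}\big)$ (Equations~\ref{complexity 2} and~\ref{complexity M}). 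So $\frac{c_b}{\tau}$ is not a signal slope; it is the fraction of the error budget reserved for statistical fluctuation after the bias is accounted for, and the $\tau^2$ arises from demanding that finer statistical accuracy, not from noise amplification under an inversion. To repair your argument, replace the affine-inversion step with: (i) the eigen-decomposition identity above, (ii) the bias bound $\mathrm{Bias} \leq \lambda_1(1-F) \leq 1 - \frac{1}{\tau}$, (iii) the assumption $\epsilon - \mathrm{Bias} \geq \frac{c_b\epsilon}{\tau}$ (this is what "$c_b$ accounts for the bias" means in the theorem statement), and (iv) the triangle inequality, after which your concentration and union-bound machinery applies unchanged.
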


The fidelity estimation procedure is presented in Section~\ref{Procedure} with detailed proofs given in Appendix~\ref{Observables and Fidelity} and~\ref{Bounding the Mixing Time of the Markov Chain}; we sketch an outline here. In Appendix~\ref{Observables and Fidelity} we show that the target state $\vert \psi \rangle$ is an eigenstate of the implicit observable in the fidelity estimation protocol with eigenvalue 1. That is, the implicit observable acts as an effective projector onto $\vert \psi \rangle$, thus the measured expectation provides a proxy for the fidelity. The deviation from the true fidelity is bounded by $1-\frac{1}{\tau}$, where $\tau$ is the mixing time of the induced Markov chain defined in Section~\ref{Procedure}. Namely, our protocol provides a biased estimator for fidelity. We take the bias into consideration via $\tau$ and the triangle inequality when analyzing the complexity and provide rigorous performance guarantees. We bound the bias to be at most $(1-\frac{c_b}{\tau}) \epsilon$. If the bias exceeds $(1-\frac{c_b}{\tau}) \epsilon$, we partially mitigate the bias using the postprocessing described in Section~\ref{Bias Analysis and Mitigation}, then the similar complexity analysis applies. We also extend this protocol to simultaneously estimate $M$ fidelities against $M$ different target states with sample complexity only increasing as $\mathcal{O}(\ln M)$.

\begin{theorem}[Fidelity Estimation for Most Quantum States]
\label{Theorem 2}
    Let $\vert \psi \rangle$ be an $n$-qubit pure state whose distribution $\pi$ satisfies the $k$-Generalized Local Escape Property defined in Definition~\ref{k-GLEP}, then the mixing time $\tau$ associated with $\vert \psi \rangle$ is bounded by $\mathcal{O}(\frac{n^2}{k}) $. Consequently, the fidelity against such states can be efficiently estimated with local Pauli measurements. 
\end{theorem}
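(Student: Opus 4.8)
The plan is to show that the $k$-Generalized Local Escape Property (GLEP) forces the induced Markov chain to mix in time $\tau = \mathcal{O}(n^2/k)$, and then invoke Theorem~\ref{Main theorem} to conclude efficient fidelity estimation. The Markov chain in question has stationary distribution $\pi(x) = \abs{\braket{x}{\psi}}^2$, and its transition rule (from the data-collection routine) picks a random $k$-subset of qubits, resamples those coordinates according to the conditional of $\pi$, and leaves the remaining $n-k$ coordinates fixed --- a block Gibbs/Glauber-type dynamics. The $k$-GLEP should be precisely the quantitative statement that, from any configuration, a local $k$-qubit update has a nonvanishing probability of moving toward regions of high $\pi$-mass (an ``escape'' or expansion condition). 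So the bulk of the argument is a mixing-time estimate for this block dynamics under the escape hypothesis.

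First I would fix the notation for the chain and write its Dirichlet form, then translate the $k$-GLEP of Definition~\ref{k-GLEP} into a spectral or isoperimetric statement. I see two natural routes. The cleaner route is a \emph{spectral-gap / Poincar\'e inequality} argument: show that GLEP implies a lower bound on the spectral gap of order $\Omega(k/n^2)$, whence $\tau = \mathcal{O}(n^2/k \cdot \log(1/\pi_{\min}))$ or, if GLEP directly controls the relaxation in total-variation, simply $\mathcal{O}(n^2/k)$. The factor $n^2$ strongly suggests a \emph{canonical-paths} or \emph{comparison} argument reminiscent of the $\mathcal{O}(n^2)$ mixing of the single-site heat-bath chain on the hypercube: one builds, for each pair $(x,y)$ with $\pi(x)\pi(y)>0$, a path flipping coordinates one block at a time, bounds the congestion using the escape property to guarantee every intermediate configuration carries enough $\pi$-mass, and reads off the gap. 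The $1/k$ speedup then comes from each transition updating $k$ coordinates at once, shortening every canonical path by a factor $k$ and hence reducing congestion. I would carry out the comparison against the lazy random walk on the Boolean hypercube (whose gap is $\Theta(1/n)$ and whose mixing is $\Theta(n\log n)$ to $\Theta(n^2)$ depending on the metric) so that only the GLEP-dependent congestion bound needs fresh work.

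The key steps, in order, are: (i) define the block-dynamics transition kernel explicitly and verify reversibility with respect to $\pi$; (ii) state $k$-GLEP and extract from it a uniform lower bound on the conditional escape probability of a single $k$-block update; (iii) construct canonical paths (or a test-function flow) between arbitrary pairs in the support of $\pi$, flipping one $k$-block per step; (iv) bound the maximal edge congestion using the escape bound from (ii), obtaining a gap bound $\gamma = \Omega(k/n^2)$; and (v) convert the gap to a mixing-time bound $\tau = \mathcal{O}(n^2/k)$ and feed it into Theorem~\ref{Main theorem} to get the stated sample complexity, establishing efficiency.

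The main obstacle I anticipate is step (iv): controlling the congestion uniformly over the support of $\pi$. The escape property guarantees \emph{local} progress but does not obviously prevent the paths from being funneled through a few low-$\pi$-mass bottleneck configurations, which is exactly what would inflate congestion and destroy the clean $n^2/k$ scaling. Making the path construction ``spread out'' --- e.g.\ by randomizing the order in which blocks are flipped and averaging, so no single edge is overloaded --- while keeping every intermediate configuration within the GLEP-guaranteed high-mass region is the delicate part. A secondary subtlety is ensuring the $2^{2k}$ overhead in Theorem~\ref{Main theorem} interacts correctly with the $k$-dependence here, so that the regime $k \ll n$ indeed yields an overall polynomial cost; I would check that the chosen $k$ (likely $k = \Theta(1)$ or $k = \Theta(\log n)$) keeps both $2^{2k}$ and $n^2/k$ polynomial simultaneously.
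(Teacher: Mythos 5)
Your plan follows essentially the same route as the paper: block-flipping canonical paths of length $\mathcal{O}(n/k)$ (the paper's Lemma~\ref{graph diam}), a congestion bound on the induced Metropolis-type chain, and then---exactly your ``randomize the block order and average'' fix---a multi-commodity flow that splits each demand $\pi(x)\pi(y)$ over the $\Omega(\alpha n^k)$ good edges per layer guaranteed by the Smoothness condition, which cancels the $N=\mathcal{O}(n^k)$ factor in $1/Q(e)$ and yields resistance $\mathcal{O}(n/k)$, hence $\tau = \mathcal{O}(n^2/k)$ after the $\ln \pi_{\min}^{-1} = \mathcal{O}(n)$ factor. The bottleneck obstacle you flag as the delicate step is resolved in the paper precisely by the Expansion condition of Definition~\ref{k-GLEP}, which supplies $\alpha N$ internally disjoint good detour paths around bad vertices, so the low-mass configurations you worry about never carry the flow.
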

We define the $k$-Generalized Local Escape Property ($k$-GLEP) in Section~\ref{Conditions for Fast Mixing}, which is a generalization of the local escape property from~\cite{10756060}. We prove in Appendix~\ref{path congestion} and~\ref{Multi-commodity Flow Analysis} that any target states $\vert \psi \rangle$ that satisfy $k$-GLEP have the mixing time polynomially bounded. $k$-GLEP is not restricted to any specific state ensembles; it is a set of conditions on the target state’s probability distribution in general. 
With Theorem~\ref{Theorem 2}, we can identify various classes of states, including $\epsilon$-approximate state $t$-designs, states prepared by certain low-depth random quantum circuits, and other physically relevant states like the ground state of (non-stoquastic) gapped local Hamiltonian, W states, and Dicke states, that have polynomially bounded mixing time, see Appendix~\ref{Haar random proof} to~\ref{W states} for detailed proofs. Most states should satisfy $k$-GLEP, hence admit efficient fidelity estimation.

\begin{result}[Empirical Check for Applicability (informal)]
\label{result:empirical-check}
There is a randomized test that, using $\textnormal{poly}(n^k, \frac{1}{\epsilon}, \ln \frac{1}{\delta}) $ classical queries to the target state's probability in the computational basis, will output \textnormal{\textsc{Pass}} with probability at least $1-\delta$ whenever the target state’s probability distribution satisfies the $k$-Generalized Local Escape Property on at least $1-\epsilon$ of its probability mass, and output \textnormal{\textsc{Fail}} otherwise.
\end{result}
We adopt the query access as Huang, Preskill, and Soleimanifar~\cite{10756060}: we are given query access to the amplitudes of the known target state $\vert \psi\rangle$, denoted as $\Psi(x) = \langle x | \psi \rangle$ for all $x \in \{0,1\}^n$, we assume the precision of the queried amplitudes is reasonably good to carry out the protocol faithfully, small amplitudes that are insignificant can be neglected. This empirical check only requires access to the probabilities $\vert \Psi(x) \vert^2$, which can be obtained from the same query access to amplitudes. We use rejection sampling to generate samples that represent the distribution $\pi$ \cite{devroye2006nonuniform, von195various}. 

This empirical check certifies whether the $k$-GLEP holds on most (at least $ 1-\epsilon$) of the probability mass; it does not guarantee $k$-GLEP holds pointwise for all basis states. It serves as a practical applicability test for the protocol. We establish its sample complexity using Hoeffding’s inequality and a union bound. One caveat is that the $k$-GLEP definition (see  Definition~\ref{k-GLEP} in Section~\ref{Conditions for Fast Mixing}) involves the support size of $\vert \psi \rangle$, which is not necessarily known for the state being checked. This quantity can be estimated via a collision estimator (reverse birthday paradox)~\cite{4557157}, which yields an effective support size sufficient for the tests. The procedure of this empirical check is listed in Protocol~\ref{Prot: Empirical Check} in Section~\ref {Main contect: Empirical Check}. Full details and proofs are presented in Appendix~\ref{Empirical check}.

\begin{theorem}[Mixed States Certification]
\label{Mixed States theorem}
    Let $\rho$ be an arbitrary state and let $\sigma = \sum_{i=1}^M p_i \vert \psi_i \rangle \langle \psi_i \vert, \ \ p_i \geq 0, \sum_{i=1}^M p_i=1, 1 \leq i \leq M $ be a mixed target, where each $\vert \psi_i \rangle$ is a pure state satisfying $k$-GLEP. Define $f_i =\langle \psi_i \vert \rho \vert \psi_i \rangle$. The fidelity between the arbitrary unknown state $\rho$ and the target mixed state is bounded as 
    \begin{equation*}
        \Big ( \sum_{i=1}^M p_i \sqrt{f_i}  \Big )^2 \leq F(\rho, \sigma) \leq \Big (\sum_{i=1}^M \sqrt{p_i f_i} \Big )^2
    \end{equation*}
\end{theorem}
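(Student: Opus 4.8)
The plan is to treat the claimed inequalities as a purely information-theoretic statement about the (Uhlmann--Jozsa) fidelity, independent of the measurement protocol: the role of $k$-GLEP here is only to guarantee, via Theorems~\ref{Main theorem} and~\ref{Theorem 2}, that each scalar $f_i=\langle\psi_i|\rho|\psi_i\rangle$ can be estimated efficiently, after which these two bounds convert the collection $\{f_i\}$ into a certificate for $F(\rho,\sigma)$. Throughout I would work with the squared fidelity $F(\rho,\sigma)=\big(\mathrm{Tr}\sqrt{\sqrt{\rho}\,\sigma\sqrt{\rho}}\big)^2$ and write $\mathcal{F}(\rho,\sigma)=\sqrt{F(\rho,\sigma)}=\|\sqrt{\rho}\sqrt{\sigma}\|_1$ for the root fidelity, so that taking square roots reduces the claim to the single chain $\sum_i p_i\sqrt{f_i}\le \mathcal{F}(\rho,\sigma)\le \sum_i\sqrt{p_i f_i}$. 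The two facts I would lean on are that for a pure state $\mathcal{F}(\rho,|\psi_i\rangle\langle\psi_i|)=\sqrt{\langle\psi_i|\rho|\psi_i\rangle}=\sqrt{f_i}$, and that $\sigma$ is already presented as the explicit convex mixture $\sum_i p_i|\psi_i\rangle\langle\psi_i|$.

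For the lower bound I would invoke joint concavity of the root fidelity, $\mathcal{F}(\sum_i p_i\alpha_i,\sum_i p_i\beta_i)\ge\sum_i p_i\,\mathcal{F}(\alpha_i,\beta_i)$. Choosing $\alpha_i=\rho$ for every $i$ and $\beta_i=|\psi_i\rangle\langle\psi_i|$, the left-hand side is exactly $\mathcal{F}(\rho,\sigma)$ and the right-hand side is $\sum_i p_i\sqrt{f_i}$; squaring delivers the lower bound. This step is essentially immediate once the mixture structure of $\sigma$ is exposed.

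The upper bound is the more delicate half, and I would obtain it from Uhlmann's theorem. Fix the natural purification $|\Phi_\sigma\rangle=\sum_{i=1}^M\sqrt{p_i}\,|\psi_i\rangle\otimes|i\rangle$ of $\sigma$ on $\mathcal{H}\otimes\mathcal{H}_{\mathrm{aux}}$, with $\{|i\rangle\}$ orthonormal; then $\mathcal{F}(\rho,\sigma)=\max_{|\Phi_\rho\rangle}|\langle\Phi_\rho|\Phi_\sigma\rangle|$ over purifications of $\rho$. Writing any such purification as $|\Phi_\rho\rangle=\sum_j|v_j\rangle\otimes|j\rangle$ subject to $\sum_j|v_j\rangle\langle v_j|=\rho$, the overlap collapses to $\langle\Phi_\rho|\Phi_\sigma\rangle=\sum_{i=1}^M\sqrt{p_i}\,\langle v_i|\psi_i\rangle$. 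The decisive observation is that $|\langle v_i|\psi_i\rangle|^2\le\sum_j|\langle\psi_i|v_j\rangle|^2=\langle\psi_i|\rho|\psi_i\rangle=f_i$, so $|\langle v_i|\psi_i\rangle|\le\sqrt{f_i}$ uniformly over purifications. The triangle inequality then gives $|\langle\Phi_\rho|\Phi_\sigma\rangle|\le\sum_i\sqrt{p_i}\sqrt{f_i}$ for every $|\Phi_\rho\rangle$, and maximizing yields $\mathcal{F}(\rho,\sigma)\le\sum_i\sqrt{p_i f_i}$; squaring completes the bound.

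I expect the main obstacle to be this upper bound, and specifically the possible non-orthogonality of the $\{|\psi_i\rangle\}$: since they need not be orthogonal, one cannot cheaply diagonalize $\sigma$, so the argument must route through the purification $|\Phi_\sigma\rangle$ that stores the weights $p_i$ in an orthonormal ancilla rather than through a spectral decomposition of $\sigma$. The inequality $|\langle v_i|\psi_i\rangle|^2\le f_i$ is exactly where the purification constraint $\sum_j|v_j\rangle\langle v_j|=\rho$ does the work, and I would carefully verify the dimension bookkeeping ($\dim\mathcal{H}_{\mathrm{aux}}\ge\max\{M,\mathrm{rank}\,\rho\}$) so that $|\Phi_\rho\rangle$ and $|\Phi_\sigma\rangle$ share a common space. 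Finally I would sanity-check tightness on the extremes $M=1$, where both bounds equal $f_1$, and $\rho=\sigma$ with orthonormal $|\psi_i\rangle$, where the upper bound saturates at $1$, to confirm the inequalities point the right way.
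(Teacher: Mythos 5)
Your proof is correct, and your lower bound is essentially the paper's own: both rest on concavity of the root fidelity $\Tr\sqrt{\sqrt{\rho}\,\sigma\sqrt{\rho}}$ applied to the mixture $\sigma=\sum_i p_i\vert\psi_i\rangle\langle\psi_i\vert$, together with $F(\rho,\vert\psi_i\rangle\langle\psi_i\vert)=f_i$. Your upper bound, however, takes a genuinely different route. The paper stays on the matrix-analysis side: it writes $\sqrt{\rho}\,\sigma\sqrt{\rho}=\sum_i X_i$ with $X_i=p_i\sqrt{\rho}\vert\psi_i\rangle\langle\psi_i\vert\sqrt{\rho}$ rank-one and positive semidefinite, invokes the Rotfel'd trace inequality $\Tr f(A+B)\le\Tr f(A)+\Tr f(B)$ for concave $f$ with $f(0)\ge 0$ (a nontrivial cited matrix inequality), and uses the rank-one structure to evaluate $\Tr\sqrt{X_i}=\sqrt{\Tr X_i}=\sqrt{p_i f_i}$. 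You instead route through Uhlmann's theorem: fixing the purification $\vert\Phi_\sigma\rangle=\sum_i\sqrt{p_i}\,\vert\psi_i\rangle\otimes\vert i\rangle$, expanding an arbitrary purification of $\rho$ as $\sum_j\vert v_j\rangle\otimes\vert j\rangle$ with $\sum_j\vert v_j\rangle\langle v_j\vert=\rho$, and observing that $\vert\langle v_i\vert\psi_i\rangle\vert^2\le\sum_j\vert\langle\psi_i\vert v_j\rangle\vert^2=\langle\psi_i\vert\rho\vert\psi_i\rangle=f_i$, so that the triangle inequality gives $\vert\langle\Phi_\rho\vert\Phi_\sigma\rangle\vert\le\sum_i\sqrt{p_i f_i}$ uniformly over purifications. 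Both arguments exploit the rank-one decomposition of $\sigma$ at the decisive step --- the paper through $\Tr\sqrt{X_i}=\sqrt{\Tr X_i}$, you through the componentwise bound on $\vert\langle v_i\vert\psi_i\rangle\vert$ --- but yours is more self-contained, replacing the specialized Rotfel'd inequality with textbook Uhlmann plus elementary estimates, and it makes transparent where slack enters (imperfect alignment of the optimal purification with the ancilla basis that stores the $p_i$). The paper's route buys a different kind of generality: trace subadditivity applies verbatim when the rank-one components are replaced by arbitrary positive summands, though the bound then no longer collapses to $\sqrt{p_i f_i}$. Your attention to the ancilla dimension ($\dim\mathcal{H}_{\mathrm{aux}}\ge\max\{M,\mathrm{rank}\,\rho\}$) is exactly the care required for the one-sided form of Uhlmann's theorem you invoke, since otherwise the maximization over purifications of $\rho$ in the common space could be empty or fail to attain the fidelity.
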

Prior works have shown that we cannot efficiently certify or reconstruct arbitrary mixed states in general~\cite{Buadescu2019Certification}. A common response is to restrict to low-rank mixed states~\cite{chen2022toward, 2a6eac0a7fff47098b29d37402b877f0, gilyén2022improvedquantumalgorithmsfidelity, Cerezo_2020}, which offer a more tractable structure. However, the operational relevance of this assumption is questionable. Although low-rankness is widely used and not unreasonable, it is unclear what additional operationally meaningful insight this assumption provides over simply assuming purity. Such assumptions are sometimes introduced to serve theoretical convenience rather than to reflect experimental realities or to help us extract meaningful knowledge about quantum systems. 

Theorem~\ref{Mixed States theorem} provides an alternative and more operationally meaningful approach: we can efficiently determine whether the fidelity with respect to a mixed target state is above or below a threshold, thereby enabling certification of such mixed states without low-rank assumptions. Theorem~\ref{Main theorem} extends the protocol to reuse the same measurement data for fidelity estimation against $M$ different target states, with sample complexity only increasing as $\mathcal{O}(\ln M)$. This allows Theorem~\ref{Mixed States theorem} to provide the bounds efficiently. Detailed proofs are provided in Section~\ref{Mixed states}.

\section{Fidelity Estimation and Performance Guarantees}
\label{Main content}
\subsection{Fidelity Estimation Procedure}
\label{Procedure}
Our fidelity estimation protocol is detailed in Protocol~\ref{prot:fidelity} below:

\begin{protocol}[Estimating Fidelity $\langle \psi \vert \rho \vert \psi \rangle $ with $k$-local Pauli measurements]\label{prot:fidelity}
\textbf{Inputs:} 
$T$ samples of an unknown state $\rho$, query access $\Psi$ to the amplitudes of target state $\vert \psi \rangle$, integer $k \in \{1, \ldots, n \} $.
\sbline

\textbf{Output:} A point estimate $\hat{F}$ of fidelity $ F= \langle \psi \vert \rho \vert \psi \rangle $ 
\sbline

\textbf{Procedure:}
\begin{enumerate}
  \item \textbf{Measurement phase (on the unknown state $\rho$)}
  \begin{enumerate}
    \item
    Randomly choose a subset $A \subseteq [n]$ of $k$ qubits to be measured in random Pauli bases $ \{ X, Y, Z \}$, and $A^c$ is the subset of the remaining $n-k$ qubits

    \item
    Measure each qubit in $A^c$ in the $Z$ basis, yielding a bitstring $z \in \{0, 1\}^{n-k} $ 
    
    \item
    Measure each qubit in $A$ in a random Pauli basis (chosen independently per qubit), yielding outcome $s \in \{0, 1\}^k$
  \end{enumerate}
  \item \textbf{Query phase (using query access $\Psi$ of the target state $\vert \psi \rangle$)}
  \begin{enumerate}
    \item
    For each $2^k$-bitstring $x_k \in \{0, 1\}^k $, insert it into the positions in $A$, while fixing the other positions with bits from $z$, forming $x_z = x_k \,\Vert\, z$, $x_z \in \{0, 1\}^n $

    \item
    Query $\Psi(x_z)$ for all $x_k \in \{0, 1\}^k $

    \item
    Construct the (unnormalized) reduced state for $k$ qubit: $\vert \Psi_{A, z}' \rangle=\displaystyle\sum_{x_k \in \{0, 1 \}^k } \Psi (x_z) \vert x_k \rangle  $ and normalize it $\vert \Psi_{A, z} \rangle =\frac{\vert \Psi_{A, z}' \rangle}{\Vert \vert \Psi_{A, z}' \rangle \Vert} $

  \end{enumerate}
  \item \textbf{One local overlap estimation}
  \begin{enumerate}
    \item[] Use the single-qubit classical shadow formula on each qubit in $A$ and calculate the local overlap:
    \begin{equation*}
        \omega= \langle \Psi_{A, z} \vert \big( \bigotimes_{i\in A}(3 \vert s_i \rangle \langle s_i \vert -I) \big)  \vert \Psi_{A, z} \rangle
    \end{equation*}
    This yields one biased sample $\omega_t$ of the fidelity.
  \end{enumerate}
  \item \textbf{Final Estimation}
  \begin{enumerate}
  \item Repeat steps 1-3 $T$ times to obtain $\omega_1, \dots, \omega_T$, split them into $K$ disjoint batches of equal size $T_b= \lfloor T/K \rfloor$, for each batch, compute the mean: $\hat{\omega}_{b_j} = \frac{1}{T_b} \sum_{t=(j-1)T_b+1}^{jT_b} \omega_t $.
  \item Output $\hat{F}= \operatorname{median} \{\hat{\omega}_{b_1}, \ldots, \hat{\omega}_{b_K}  \} $.
  \end{enumerate}
\end{enumerate}

\sbline
\end{protocol}

\paragraph{Procedure.} We adopt the same assumptions as Huang, Preskill, and Soleimanifar~\cite{10756060}: for a known pure state 
\begin{equation}
\label{target state}
    \vert \psi \rangle = \sum_{x \in \{0, 1\}^n } \sqrt{\pi(x)} e^{i \phi(x)} \vert x \rangle,   \ \ \pi (x) := \vert \langle x \vert \psi \rangle \vert^2
\end{equation}
we have query access to the amplitudes of $\vert \psi \rangle$ in computational basis: $\Psi(x) =  \langle x \vert \psi \rangle $, $x \in \{0, 1\}^n $. And we have access to $T$ independent copies of an unknown state $\rho$. We follow the same data collection routine to obtain local overlap $\omega$ (steps 1-3 are the same with Protocol 2 in \cite{10756060}), but use different postprocessing (median-of-means), also alternative interpretation and analysis: the protocol directly outputs a point estimate $\hat{F}$ rather than a binary certification test; we analyze $\hat{F}$ as a (controlled-bias) fidelity estimator. Moreover, we extend the scope of this protocol substantially beyond Haar random states.

For each copy of $\rho$, we measure a randomly selected subset of $k$ qubits in randomly chosen Pauli bases, while measuring the remaining $n-k$ qubits in the computational ($Z$) basis. The measurement outcomes of the $n-k$ qubits on the computational basis, together with the query model, enable the construction of the $k$-qubit reduced conditional state of the target state $\vert \psi\rangle$. And the measurement of the $k$ qubits on a random Pauli basis enables the construction of a $k$-qubit classical shadow. 

We clarify some subtle points about the "Query phase", step 2 in Protocol~\ref{prot:fidelity}. The purpose of the query phase is to reconstruct the $k$-qubit reduced conditional state of the target state $ \vert \psi \rangle$, conditioned on the outcome $z \in \{0, 1\}^{n-k} $ of the remaining $n-k$ qubits in $A^c$. This conditional state is used to estimate local overlap, which is eventually used for the estimation of fidelity. The $2^k$ bitstrings $x_k \in \{0, 1\}^k$ correspond to basis states on the selected $k$-qubit subsystem $A$. For each $x_k$, we construct the full $n$-qubit bitstring $x_z \in \{0, 1\}^n $ by inserting $x_k$ into the position of $A$ and inserting $z$ into the position of $A^c$. We then use the query model to obtain each amplitude $\Psi(x_z)= \langle x_z \vert \psi \rangle$, with which we construct the reduced conditional state $\vert \Psi_{A, z} \rangle$ ($\vert \Psi'_{A, z} \rangle$). Note that a small constant $k$ (typically 1 or 2) is sufficient for many state classes we analyze (detailed proof in Section~\ref{Bounding the Mixing Time of the Markov Chain}); we only use $2^k$ queries per round. The string $x_k$ appears only in the model query phase, and the strings $s$ are used to obtain the classical shadow for the local overlap estimation.

If we need to estimate fidelity against $M$ pure target states $\{ \vert \psi_i \rangle \}_{i=1}^M $, with the query access $\Psi_i$ for each $\vert \psi_i \rangle $. The measurement phase remains the same, i.e., the experimental data collected on $\rho$ can be reused to estimate fidelities against multiple different target states. We run the (purely classical) query and postprocessing pipeline separately for each $\psi_i$, producing different sets of $ \{ \omega_t \} $, hence different fidelity estimates for each $\vert \psi_i \rangle $. The extension to simultaneously estimate multiple fidelities helps us extend the protocol to mixed state certification, i.e., Theorem~\ref{Mixed States theorem}, and assist in further extension, like tomography.

\paragraph{Analysis.} When measuring the $n-k$ qubit in $Z$ basis, each measurement outcome $z \in \{0, 1 \}^{n-k} $ selects a local neighborhood of strings $x$ differing by at most $k$ bits, thus probes a local slice of the target distribution $\pi$. To analyze how many measurements are needed for these local probes to collectively explore $\pi$ well enough for accurate fidelity estimation, we model the sequence of measurements as a reversible random walk on $\{0, 1\}^n $ that flips at most $k$ bits each step and has stationary distribution $\pi$: each measurement corresponds to one $k$-bit flip step. In this view, ``how many measurements are needed'' becomes ``how many $k$-bits–flip steps are required for the walk to converge to the stationary distribution $\pi$'', which is governed by the mixing time $\tau$.

Specifically, we define a Markov chain on the $n$-bit hypercube. Consider a weighted graph $G=(V, E_k)$ where $V= \{0,1 \}^n$ and two vertices $x, y \in V$ are connected by an edge if their Hamming distance satisfies $d(x,y) \leq k$. Each vertex $ x $ has a stationary weight $\pi(x) = \vert \langle x \vert \psi \rangle\vert^2$.
We define a Markov chain on $G$ with transition probabilities~\cite{10756060}:
\begin{equation}
\label{transition simple}
    P(x, y) \coloneqq \begin{cases}
        \displaystyle\frac{1}{N} \cdot \displaystyle\frac{\pi(y)}{\pi(x)+\pi(y)} & (x, y) \in E_k \\[0.35cm]
        1- \displaystyle\sum_{y \in V, y \neq x } P(x, y) & x=y \\[0.35cm]
        0 & \text{otherwise}
    \end{cases}
\end{equation} 
where $N=\sum_{r=1}^k \binom{n}{r} = \mathcal{O}(n^{k}) $ is the number of neighbors of a vertex.   

The transition matrix $P$ defines a reversible Markov chain with stationary distribution $\pi(x)$ induced by $\vert \psi \rangle$. Therefore, sampling from the stationary distribution of such an induced Markov chain is equivalent to sampling from $\pi(x) = \vert \langle x \vert \psi \rangle\vert^2$. The mixing time $\tau: = \frac{1}{1-\lambda_1}$ (i.e., inverse spectral gap of $P$) of a Markov chain quantifies how many steps are needed for the conditional sampling to adequately explore $\pi$. 
Essentially, Protocol~\ref{prot:fidelity} ``breaks down'' the global fidelity estimation into local overlaps while still capturing global correlations through the conditioning. The accuracy and efficiency of such ``break down'' can be analyzed with the mixing time $\tau$ of the Markov chain whose stationary distribution is $\pi $. We note that running Protocol~\ref{prot:fidelity} does not require the Markov chain; the Markov chain is only used as a mathematical tool in the analysis.

$\hat{F}$ is a biased estimator by design. Theoretically, classical shadow provides an unbiased estimator for fidelity. However, with the conditioned reduced target state $\vert \Psi'_{A, z} \rangle$, we inject bias in each sample of the local overlap, making them closer to the fidelity than the overlap between the target state and the original classical shadow, thus reducing the variance of the estimator. The empirical average of the local overlap is a biased estimator for fidelity, and it does not truly converge to the fidelity, however many samples we collect. However, for states that fall into the framework of Protocol~\ref{prot:fidelity} (and are not too far from the target states), the bias is small. Specifically, the upper bound of the bias is proportional to $\frac{1}{\tau}$. Increasing $k$ decreases the bias, as we will prove in Section~\ref{Multi-commodity Flow Analysis}; intuitively, increasing $k$ makes the protocol closer to classical shadow, which is an unbiased estimator for fidelity. We can also partially mitigate the bias with the knowledge of $\tau$, detailed in Section~\ref{Observables and Fidelity}. The mean square error (MSE) of an estimation consists of variance and bias. We prove with classical shadow formalism~\cite{Huang2020PredictingMP} and Hölder’s inequality that $\omega$ has bounded variance $\mathcal{O}(2^{2k})$. Standard concentration bounds imply $\Var(\hat{F})= \mathcal{O}(2^{2k}/T) $. We bound the bias to be $\frac{c_b}{\tau}$ fraction away from the total error $\epsilon$; with the triangle inequality, we take the bias into consideration in sample complexity analysis. Combining bias and variance gives
\begin{equation*}
    T= \mathcal{O} \Big ( \frac{2^{2k} \tau^2 }{(c_b \epsilon)^2} \ln \big (\frac{1}{\delta} \big ) \Big )
\end{equation*}
suffices to achieve additive accuracy $\epsilon$ with failure probability at most $\delta$. Thus, $\tau$ governs both accuracy and efficiency of the estimation. If we need to simultaneously estimate fidelity values against $M$ different pure target states, each with mixing time $\tau_i$, apply a union bound over all $M$ failure probabilities yields the sample complexity of $T = \mathcal{O} \Big (2^{2k} \frac{\max_i \tau_i^2}{(c_b \epsilon)^2} \ln \Big ( \frac{2M}{\delta} \Big ) \Big ) $. This concludes the overall analysis of Theorem~\ref{Main theorem}. Note that the median-of-means (MoM) postprocessing improves the concentration and robustness, but MoM does not mitigate the bias. We can also partially mitigate the bias. Detailed proofs of the sample complexity and the analysis of the bias and its mitigation methods can be found in Section~\ref{Observables and Fidelity}.

\subsection{\texorpdfstring{$k$}--Generalized Local Escape Property}
\label{Conditions for Fast Mixing}
This section establishes conditions on the probability distributions $\pi$ of the target state $\vert \psi \rangle$ that guarantee polynomially bounded mixing time for the induced Markov chain with stationary distribution $\pi(x)$. We formalize these conditions as the $k$-Generalized Local Escape Property ($k$-GLEP)~\cite{10756060}. Unlike assumptions tied to specific ensembles (e.g., Haar random), $k$-GLEP focuses on general structural properties of $\pi$: balanced weights and connectivity of the graph induced by $\pi$.

\begin{definition}[Good Vertex]
\label{Good Vertex}
    A vertex $x \in V $ is $(c_u, c_l)$-good if its stationary weight satisfies 
    \begin{equation*}
        \frac{c_l}{\vert \supp{(\pi)} \vert} \leq \pi(x) \leq \frac{c_u}{\vert \supp{(\pi)} \vert}
    \end{equation*}
\end{definition}
$\supp{(\pi)} = \{x: \pi(x) \neq 0 \} $ denotes the support of $\pi$.
\begin{definition}[$k$-Generalized Local Escape Property]
\label{k-GLEP}
    A state $\vert \psi \rangle=\sum_{x \in \{0, 1\}^n } \sqrt{\pi(x)} e^{i \phi(x)} \vert x \rangle $ \emph{satisfies $k$-GLEP} if its distribution $\pi$ (equivalently, the Markov chain induced by $\vert \psi\rangle$) obeys:   
    \begin{itemize}
    \item \textbf{Weight bound}: For all $x \in V$, $\pi(x) \leq \frac{c_u'}{\vert \supp{(\pi)} \vert}  $;
    \item \textbf{Smoothness}: Each $x \in V $ has at least $\alpha N$ neighbors that are $(c_u, c_l)$-good \footnote{The Smoothness condition requires $\vert \supp{(\pi)} \vert \geq \textnormal{poly}(n) $, though we do not explicitly list this requirement out. The Expansion condition also enforce $\vert \supp{(\pi)} \vert \geq \textnormal{poly}(n) $.};
    \item \textbf{Expansion}: For any two good vertices $x, y$ with Hamming distance $d(x,y) \leq 3k $, there exist at least $\alpha N$ pairwise internally disjoint paths of length at most $ 5 $  all of whose internal vertices are good.  \footnote{This Expansion condition can be relaxed. It is sufficient that such $\alpha N$ disjoint paths exist even if their internal vertices are not necessarily all good. In that case, rerouting through good neighbors preserves polynomial mixing bounds via multi-commodity flow analysis. Or it is also sufficient that there exists at least one path of length at most 5, all of whose internal vertices are good. Path congestion yields a polynomial bound in this case.} 
\end{itemize}
\end{definition}
Satisfaction of $k$-GLEP ensures polynomially bounded mixing time. 

\begin{lemma}[Connectivity]
\label{graph diam}
    Consider the reversible Markov chain defined in Equation~\ref{transition simple} on the $n$-bit hypercube graph $G=(V, E_k)$, whose stationary distribution is $\pi$. If this Markov chain satisfies $k$-GLEP,  then for any two vertices $x, y \in V$, there exist at least $\ceil{\frac{d(x, y)}{3k}} $ pairwise internally disjoint paths of length $ \mathcal{O} (\frac{n}{k} ) $ connecting $x$ and $y$ whose internal vertices are all good. 
\end{lemma}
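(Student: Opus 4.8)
The plan is to build the required paths explicitly, using a combinatorial ``rotating-block'' skeleton to control both their number and their pairwise internal disjointness, and then to realize each skeleton step as a short path of good vertices via the Expansion condition. First I would reduce to good endpoints: if $x$ or $y$ is not good, the Smoothness condition supplies at least $\alpha N$ good neighbors within Hamming distance $k$, and since $\alpha N \gg \lceil d(x,y)/3k \rceil$ there are enough of them to assign a distinct good neighbor as the first (respectively last) internal vertex of each of the paths we construct, keeping the paths internally disjoint near the endpoints.

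Next, let $D = \{i : x_i \neq y_i\}$ with $|D| = d(x,y) =: d$, and partition $D$ into $m = \lceil d/3k \rceil$ blocks $C_1, \dots, C_m$ each of size at most $3k$. For each offset $j \in \{0, \dots, m-1\}$ I would define a path $P_j$ that flips the blocks in the cyclic order $C_{j+1}, C_{j+2}, \dots$ (indices mod $m$); its checkpoints are the strings agreeing with $y$ on a cyclic union of blocks and with $x$ elsewhere. The key combinatorial observation is that for $0 < t < m$ the flipped set $S^{(j)}_t = C_{j+1} \cup \dots \cup C_{j+t}$ is a proper nonempty cyclic interval of blocks, which is uniquely determined by its starting block and its length; hence $S^{(j)}_t = S^{(j')}_{t'}$ forces $j=j'$ and $t=t'$, so the $m$ checkpoint sequences are pairwise disjoint except at the shared endpoints $x$ and $y$. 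This is exactly where the count $\lceil d/3k \rceil$ and the block size $3k$ enter the statement.

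It then remains to connect each consecutive checkpoint pair, which differ on a single block and hence lie at Hamming distance at most $3k$. Invoking the Expansion condition on this good pair yields at least $\alpha N$ internally disjoint paths of length at most $5$ whose internal vertices are all good; I would select one connector per block-flip greedily, and since at any stage the number of vertices already occupied by the other (at most $m-1$) paths is $O(m \cdot n/k)$ while $\alpha N = \Omega(n^{k})$ connectors are available, a non-colliding choice always exists. Concatenating gives each $P_j$ length at most $5m = O(n/k)$, and because distinct paths process distinct blocks at each stage, their interior segments are disjoint as well.

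The main obstacle I anticipate is keeping every checkpoint good while respecting the distance-$3k$ window demanded by Expansion: the ideal checkpoints defined above need not be good, and naively correcting each to a nearby good vertex via Smoothness can push consecutive good checkpoints apart by up to $5k$, outside the reach of Expansion. Resolving this is where the relaxed form of Expansion (the footnote to Definition~\ref{k-GLEP}) does the work — one either reroutes the correction through good neighbors, absorbing it into the connector at the cost of an $O(1)$ additive increase in segment length, or uses the guaranteed single all-good short path. Either way the internal vertices stay good and the per-path length remains $O(n/k)$; the disjointness bookkeeping of the previous paragraph must then be re-run with these slightly longer connectors, which only affects the constants.
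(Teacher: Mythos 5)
Your cyclic-rotation skeleton is a genuinely different decomposition from the paper's (which partitions the differing bits into blocks of size at most $k$, groups those blocks into triples, and distinguishes the $i$-th path by flipping the $i$-th triple's first block first), and the observation that a proper nonempty cyclic interval of blocks is determined by its start and length does give $\ceil{\frac{d(x,y)}{3k}}$ checkpoint sequences that are pairwise disjoint away from $x$ and $y$. However, two steps fail. First, the greedy connector-selection count is wrong in exactly the regime the paper cares about. You have $m = \ceil{\frac{d(x,y)}{3k}} = \Theta(\frac{n}{k})$ paths, each with $O(\frac{n}{k})$ vertices, so the forbidden set a new connector must avoid has size $\Theta(\frac{n^2}{k^2})$. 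Since the $\alpha N$ connectors supplied by Expansion are pairwise \emph{internally disjoint}, each forbidden vertex can block at most one of them, so you need $\alpha N = \Omega(\frac{n^2}{k^2})$ for an unblocked connector to survive. For $k=1$ this reads $\alpha n = \Omega(n^2)$, which is false, and for $k=2$ it holds only if the constant $\alpha$ is large enough; so ``a non-colliding choice always exists'' is unjustified precisely for $k \in \{1,2\}$, the values the paper repeatedly emphasizes as sufficient in practice.

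Second, the obstacle you flag at the end is not a bookkeeping nuisance but a structural consequence of choosing blocks of size $3k$: your consecutive checkpoints already sit at Hamming distance up to $3k$, which saturates the Expansion window. Once a checkpoint is bad and must be replaced by a good neighbor (up to $k$ away, via Smoothness), the corrected pair can sit at distance up to $5k$, where neither the Expansion condition nor its relaxed footnote variant (both speak only about pairs at distance at most $3k$) gives you anything; citing the relaxation is not a proof. This is exactly why the paper's construction uses blocks of size at most $k$: skeleton steps are then single edges of $E_k$, and when both endpoints $e^+, e^-$ of an edge are bad, their good substitutes satisfy $d(g(e^+), g(e^-)) \leq k + k + k = 3k$, landing exactly inside the window where Expansion supplies a length-$5$ all-good subpath at a cost of at most $7$ extra steps per substitution. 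The $3k$ in Definition~\ref{k-GLEP} is engineered to absorb one edge plus two endpoint detours, not to serve as the stride of the skeleton. To repair your argument you would need to shrink the stride to $k$ (keeping, if you like, the cyclic rotation over groups of three size-$k$ blocks to retain your disjointness count) and then obtain disjointness structurally, from the fact that different paths disagree on which blocks are flipped, rather than by greedy avoidance --- at which point you have essentially reconstructed the paper's proof.
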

We defer the proof of Lemma~\ref{graph diam} to Section~\ref{Overview and connectivity proof}; we sketch the intuition here. To construct the connecting paths and prove Lemma~\ref{graph diam}, we divide the bit positions where $x$ and $y$ differ into blocks of size at most $k$ and flip one block per step to build a path of length $\mathcal{O}(\frac{n}{k})$ connecting $x$ and $y$. $k$-GLEP ensures each step remains within well-weighted regions (smoothness), that alternative disjoint routes exist (expansion), and that no isolated heavy vertices impede the motion of the walk (weight bound). 

In section~\ref{Multi-commodity Flow Analysis}, we use multi-commodity flow to polynomially bound path congestion and prove that $k$-GLEP ensures the mixing time $\tau= \mathcal{O}(\frac{n^2}{k})$, i.e., a polynomial number of $k$-bit flip steps (equivalently, measurement rounds in Protocol~\ref{prot:fidelity}) suffice to explore the distribution $\pi$. Consequently, if the distribution of $\vert \psi \rangle$ satisfies $k$-GLEP, the fidelity against such states can be efficiently estimated with protocol~\ref{prot:fidelity} (Theorem~\ref{Theorem 2}). Then, in Section~\ref{Haar random proof} to~\ref{W states}, we prove Haar random states, state $t$-designs, states prepared by certain random low-depth quantum circuits, ground states of gapped local Hamiltonians, W states, and Dicke states all have polynomially bounded mixing time, thus fall into the scope of protocol~\ref{prot:fidelity}. By formulating the analysis in terms of this general $k$-GLEP condition rather than relying on ensemble-specific properties~\cite{10756060}, we broaden the scope of the protocol, covering more physically relevant states and states that can be prepared with noisy intermediate-scale quantum (NISQ) devices. These theoretical extensions are corroborated by our numerical results in Section~\ref{Result}.

\subsection{Empirical Check}
\label{Main contect: Empirical Check}
We have established that the $k$-GLEP guarantees polynomially bounded mixing time, hence efficient fidelity estimation; additionally, we have analytically proven several state ensembles satisfy $k$-GLEP. In practice, the target state may not belong to a known ensemble, or we cannot analytically verify if the target state falls into the scope of Protocol~\ref{prot:fidelity}. Therefore, we design an empirical check based on $k$-GLEP: check if $\pi$ satisfies $k$-GLEP with high probability using classical query access to $\pi$, which can be directly obtained from the query access $\Psi$ we assume for Protocol~\ref{prot:fidelity}. This empirical check provides a practical diagnostic tool for an arbitrary given state to test whether Protocol~\ref{prot:fidelity} can provide accurate fidelity estimates. Analysis of the sample complexity of this empirical check is included in Appendix~\ref{Empirical check}.

The empirical test first makes samples that represent $\pi$ using rejection sampling \cite{devroye2006nonuniform, von195various}, then uses these samples to estimate the support size using a collision estimator based on "reverse birthday paradox"~\cite{4557157}. Then we test all three conditions of $k$-GLEP for each sample. Together, they provide a certificate of $k$-GLEP on at least $1-\epsilon$ probability mass. Protocol~\ref{Prot: Empirical Check} lists the detailed empirical check procedure. 

\begin{protocol}[Check if $\pi:= \vert \langle x \vert \psi \rangle \vert^2 $ satisfy $k$-GLEP on at least $1-\epsilon$ probability mass]
\label{Prot: Empirical Check}
\textbf{Inputs:} 
Query access to $\pi$ (or query access $\Psi$ to the amplitudes of target state $\vert \psi \rangle$), $C$ for the ``Make samples'' step, $2^n\geq C \geq \sup_x \frac{\pi(x)}{2^{-n} } $. Accuracy and confidence parameters $\epsilon, \delta \in (0, 1)$, constants $c_u', c_l, c_u, \alpha$ and $k$ for $k$-GLEP. Sample size parameters $S \geq \frac{2}{\epsilon^2} \ln \frac{6}{\delta} $, $M \geq \frac{2}{\epsilon^2} \ln \frac{6 S}{\delta}$, $R \geq \frac{2}{\epsilon^2} \ln \frac{6S}{\delta}$.
\sbline

\textbf{Output:} \textsc{Pass}/\textsc{Fail} $k$-GLEP
\sbline
 
\textbf{Procedure:}
\begin{enumerate}
  \item \textbf{Make samples}
  \begin{itemize}
      \item[] Repeat: draw $x_U  \sim \textnormal{Unif} ( \{0, 1 \}^n)$, query $\pi(x_U)$, accept $x_U$ with probability $  \frac{\pi(x_U)}{C 2^{-n} } $. Stop until $S$ seed samples $x_1, \ldots, x_S  \sim \pi$ are accepted.
  \end{itemize}
  
  \item \textbf{Estimate $\vert \supp{\pi} \vert$ }
  \begin{itemize}
    \item If $\vert \supp{\pi} \vert$ is known, use the known $\vert \supp{\pi} \vert$;
    \item Estimate the effective support using collision estimator $\hat{C}_2$: Count the fraction of pairs $(i, j)$ with $x_i=x_j$, $\hat{C}_2=\frac{1}{S(S-1)  } \sum_{i \neq j} \mathbf{1} [ x_i=x_j ] $, the effective support size is $ \frac{1}{\hat{C}_2} $. If no repeats are observed, this collision estimator fails, set $\vert \supp{\pi} \vert \geq \binom{S}{2}/ \ln (\frac{1}{\delta}) $ or $\frac{2^n}{C}$.
  \end{itemize}
  \item \textbf{Conditions Testing}: For each sample $x_i$ check
  \begin{enumerate}
    \item \textbf{Weight bound}: Accept if $\pi(x_i) \leq \frac{c_u'}{\vert \supp{(\pi)} \vert}    $
    \item \textbf{Smoothness}: Sample $M$ i.i.d neighbors $y_1, \ldots, y_M$ uniformly from $\mathcal{N}_k(x_i)$, check $\pi(y_{i'})$, mark $y_{i'}$ as a good vertex if $\frac{c_l}{\vert \supp{(\pi)} \vert} \leq \pi(y) \leq \frac{c_u}{\vert \supp{(\pi)} \vert}$. Accept if at least $\alpha$ fraction of $M$ neighbors are good. 
    \item \textbf{Expansion}: Draw $R $ i.i.d. good pairs $(x, y)$ from the set $\{y: d(x_i, y) \leq 3k \} $ (reject and resample if $y$ is not good). For each pair, check if there are at least $\alpha N$ pairwise internally-disjoint paths of length at most 5 with all good vertices. Accept if at least a $(1-\epsilon) $ fraction of $R$ pairs admit such a path.
  \end{enumerate}
  \item \textbf{Decision}
  \begin{enumerate}
      \item[] Test (a), (b), (c) for each sample $x_i$, let $\hat{p}_{\rm weight} $, $\hat{p}_{\rm smooth} $, $\hat{p}_{\rm expand} $ be the acceptance fractions across all $S$ seed samples. Output \textsc{Pass} if $\hat{p}_{\rm weight} $, $\hat{p}_{\rm smooth} $, $\hat{p}_{\rm expand} $ are all at least $1-\frac{\epsilon}{2} $, otherwise, output \textsc{Fail}.  
  \end{enumerate}
\end{enumerate}
\sbline
\end{protocol}
We clarify the details of the steps in Protocol~\ref{Prot: Empirical Check} in Appendix~\ref{Empirical check}.

\section{Numerical Results}
\label{Result}
We conduct three series of numerical simulations to corroborate the theoretical results in Section~\ref{Performance guarantees} and to demonstrate some applications of our fidelity estimation protocol. These numerical experiments are designed to evaluate the protocol~\ref{prot:fidelity} from complementary perspectives:
\begin{itemize}
    \item \textbf{Benchmarking}: how well shadow overlap tracks the true fidelity compared to cross-entropy benchmarking (XEB) across states implementable on NISQ devices/physically relevant states under different noise models;
    \item \textbf{Optimization}: whether shadow fidelity can serve as a hardware-efficient cost function that mitigates barren plateaus induced by the global cost functions in variational quantum algorithms;
    \item \textbf{Physical relevance}: whether the protocol can reliably characterize and distinguish ground states of gapped Hamiltonians and classify different quantum phases of matter.    
\end{itemize}

\subsection{Benchmarking Quantum Devices}
Fidelity is a standard metric to quantify how closely an experimentally prepared $\rho$ to a known target state $\vert \psi \rangle$, and is widely used for benchmarking quantum devices. However, as quantum devices continue to scale up, the exponential resources needed for fidelity estimation pose a significant challenge. A common surrogate for large-scale systems is cross-entropy benchmarking (XEB), introduced for random-circuit sampling where ideal output probabilities approximately follow the Porter–Thomas distribution (i.e., an anticoncentrated distribution). In this regime and under a depolarizing channel model, XEB correlates with circuit fidelity and has a clean analytic relation~\cite{48651}. However outside this setting, e.g. structured states or non-depolarizing (e.g. dephasing) noises, XEB can deviate significantly from fidelity or even fail as a proxy. Recent analyses~\cite{ware2023sharpphasetransitionlinear, PRXQuantum.5.010334, Barak2020SpoofingLC, Morvan_2024} formalize when XEB tracks fidelity and document regimes where it fails or undergoes sharp breakdowns.

For a surrogate to be practical on NISQ devices and also informative, it should be: \begin{inparaenum}[1)]
    \item scalable in quantum and classical resources; 
    \item can be implemented with local, non-adaptive measurements, does not rely on sophisticated control; 
    \item tightly correlated with true fidelity; 
    \item applicable beyond restricted state ensembles or noise models; and 
    \item theoretically justified.
\end{inparaenum}
 Shadow fidelity estimated by Protocol~\ref{prot:fidelity} satisfies these criteria (see Section \ref{Performance guarantees}, Theorems \ref{Main theorem} and \ref{Theorem 2}).

We test our protocol against true fidelity and XEB on IQP states, Diagonal phase states Stabilizer (random Clifford) states, and the ground state of transverse-field Ising model (TFIM) under white, coherent, and dephasing noise: 
\begin{figure}[!ht]
\label{benchmark sim}
    \centering
    \includegraphics[width=\textwidth]{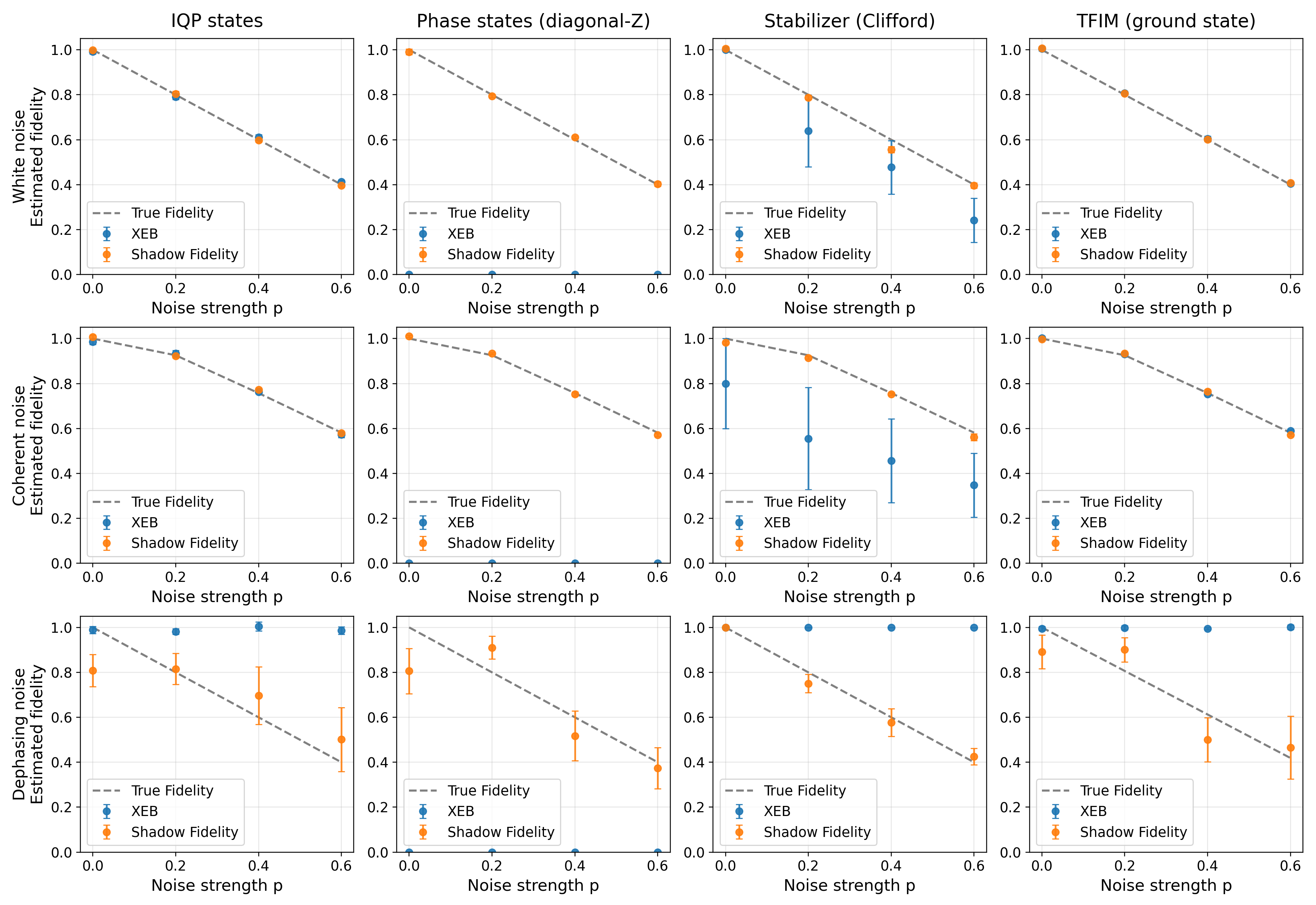}
    \caption{Performance comparison of XEB (blue), shadow fidelity (orange), and true fidelity (gray dashed) under white (top), coherent (middle), and dephasing (bottom) noise. Columns show IQP states, phase states (diagonal-$Z$), stabilizer (Clifford) states, and ground state of TFIM (TFIM). Shadow overlap closely follows true fidelity across all cases, while XEB fails for states that are not anticoncentrated and fails for dephasing noise.}
    \label{fig:benchmark_states}
\end{figure}

\begin{itemize}
    \item \textbf{IQP states}: $\vert \psi \rangle= H^{\otimes n} D_z H^{\otimes n} \vert 0^n \rangle$, where $D_z$ is a random diagonal-in-$Z$ unitary with phases of the form $\exp\big({\iu (\sum_i a_i x_i + \sum_{i<j }b_{ij}x_i x_j)}\big)$. Random IQP ensembles are known to be classically hard and anticoncentrated under natural choices. These are random low-depth circuits and have been proposed for quantum-advantage experiments~\cite{PhysRevLett.117.080501, Bremner_2017}. XEB is meaningful for IQP states. As shown in Figure~\ref{fig:benchmark_states}, both XEB and shadow overlap track fidelity well.
    \item \textbf{Diagonal phase states}: Dropping the final $H^{\otimes n}$ in the IQP circuit yields $\vert \psi \rangle= \frac{1}{\sqrt{2^n}} \sum_{x \in \{0, 1 \} } e^{\iu f(x)} \vert x \rangle $, where $f(x) = \sum_i a_i x_i + \sum_{i<j }b_{ij}x_i x_j $, i.e, uniform amplitude magnitude. Therefore, the ideal output distribution is uniform; XEB carries no information about fidelity for this type of states, which can be seen in Figure~\ref{fig:benchmark_states}. This type of states is implementable on current quantum devices, but XEB fails to characterize; in contrast, the shadow overlap tracks fidelity well.
    \item \textbf{Stabilizer (random Clifford) states}: States from random Clifford circuits and form 3-designs. They are highly structured and not anticoncentrated, thus XEB can be misleading, which is demonstrated in the third column of Figure~\ref{fig:benchmark_states}. However, the shadow overlap continues to perform well. We note that we extend Huang, Preskill, and Soleimanifar's work to state $t$-design in Section~\ref{t-design proof}; and this example corroborates our theoretical proof.
    \item \textbf{TFIM ground states}: $H_{TFIM}= -J \sum_i Z_i Z_{i+1} - h \sum_i X_i $, $J=1, h=1$ is a baseline model for correlated quantum many-body systems, and serves as a canonical benchmark in condensed matter and quantum simulation studies. As shown in Figure~\ref{fig:benchmark_states}, shadow fidelity accurately tracks the true fidelity under all noise models: this is one example that shadow fidelity reliably characterize physical relevant states.
\end{itemize}
Across all states, all noisy models, shadow fidelity consistently tracks the true fidelity. In particular, XEB fails to characterize the dephasing noise, since dephasing affects the phase coherence without changing measurement probabilities in the computational basis, making XEB insensitive to such state degradation, as shown in the last row of Figure~\ref{fig:benchmark_states}. In contrast, shadow fidelity can still characterize dephasing errors. In summary, these results also show that shadow overlap could serve as a scalable, experimentally practical, universal, and reliable proxy for fidelity for (at least part of) states prepared with NISQ devices and physically relevant states, even when XEB fails.

\subsection{Cost-Function-Dependent Barren Plateaus Mitigation}
Variational quantum algorithms (VQAs) have been considered as a practical application of NISQ devices; however, they are known to suffer from the barren plateau, where gradients vanish exponentially with system size, severely hurting trainability. One cause of the barren plateau is the global cost function. As studied by Cerezo \textit{et al.}~\cite{Cerezo_2021}, global cost functions, such as fidelity, exhibit exponentially vanishing gradients even for shallow circuits. By contrast, local cost functions only lead to polynomially vanishing gradients, thereby preserving trainability. Cerezo \textit{et.al} demonstrated this finding by replacing the global fidelity with a local fidelity in the context of quantum autoencoders (QAEs) and showed that the local fidelity mitigates barren plateaus and enables training at scales where the global fidelity fails. Such replacement is feasible for this example because the target state is $\vert 0 \rangle$. For a general state $\vert \psi \rangle = U \vert 0 \rangle^{n} $, measuring local fidelity involves $\frac{1}{n}\sum_{i=1}^n U ( \vert 0 \rangle \langle 0 \vert_i \otimes \mathbb{I}_{\Bar{j}} ) U^{\dagger} $, which is not necessary practical if the unitary $U$ is deep or complex. 

We adopt the same QAE setting introduced in Ref.~\cite{Cerezo_2021}. The system is partitioned into a retained subsystem $A$ ($n_A=1$) and a trash subsystem $B$ ($n_B$ qubits; we set $n_B=40$). The training ensemble consists of two product states $\vert \psi_1 \rangle = \vert 0 \rangle_A \otimes \vert 0 \rangle_B^{\otimes n_B} $, $\vert \psi_1 \rangle = \vert 1 \rangle_A \otimes \vert 1, 1, 0, \ldots, 0 \rangle_B $, with respective probabilities $p_1=\frac{2}{3} $ and $p_2=\frac{1}{3} $. The goal of the QAE is to encode relevant information into subsystem $A$ while projecting the trash subsystem $B$ to $\vert 0 \rangle^{\otimes n_B}$. So, when all qubits in $B$ are in $\vert 0 \rangle$, the QAE is successfully trained. The ansatz is a shallow hardware-efficient circuit with depth $L=2$, consisting of alternating layers of single-qubit $R_y(\theta) $ rotations and nearest-neighbor $CZ$ entangling gates in a brickwork pattern. We simulate the circuit using Matrix Product State (MPS) representation. We consider three cost functions: 
\begin{itemize}
    \item global cost: $C'_{G}=1-\Pr_B (0^{n_B}) $, i.e., one minus the probability that all trash qubits are measured in $\vert 0 \rangle$;
    \item local cost: $C'_{L}=1-\frac{1}{n_B} \sum_{j=1}^{n_B} \Pr_j (0) $, i.e., one minus the average probability that the trash qubits are in $\vert 0 \rangle$;
    \item shadow cost: one minus the estimated shadow fidelity against $\vert 0 \rangle$ state. 
\end{itemize}
We use the Simultaneous Perturbation Stochastic Approximation (SPSA) optimizer, a widely used gradient-based method that is sensitive to vanishing gradients (i.e. barren plateau). We also adaptively increase the measurement shots to control the stochastic noise. We limited to 350 iterations, with parameters updated as standard SPSA $a/t^{0.602} $ and $c/t^{0.101} $.

The numerical results on QAE ($n_A=1$, $n_B=40$) comparing the three cost functions are shown in Figure~\ref{fig:QAE sim}. The global cost (red) remains at 1 throughout training, signaling the barren plateau. In contrast, both local (blue) and shadow (green) costs decrease steadily, indicating the successful optimization. The shadow cost closely follows the local cost, demonstrating that the shadow fidelity inherits the favorable mitigating barren plateau properties of local costs. Unlike local fidelity, which may require complex basis rotations depending on the target, shadow fidelity is hardware-friendly, requiring only local Pauli measurements for arbitrary target states that fall into its scope. That is, shadow fidelity offers an efficient and experimentally practical substitute for mitigating cost-function-induced barren plateau.

\begin{figure}[t]
    \centering
    \includegraphics[width=\textwidth]{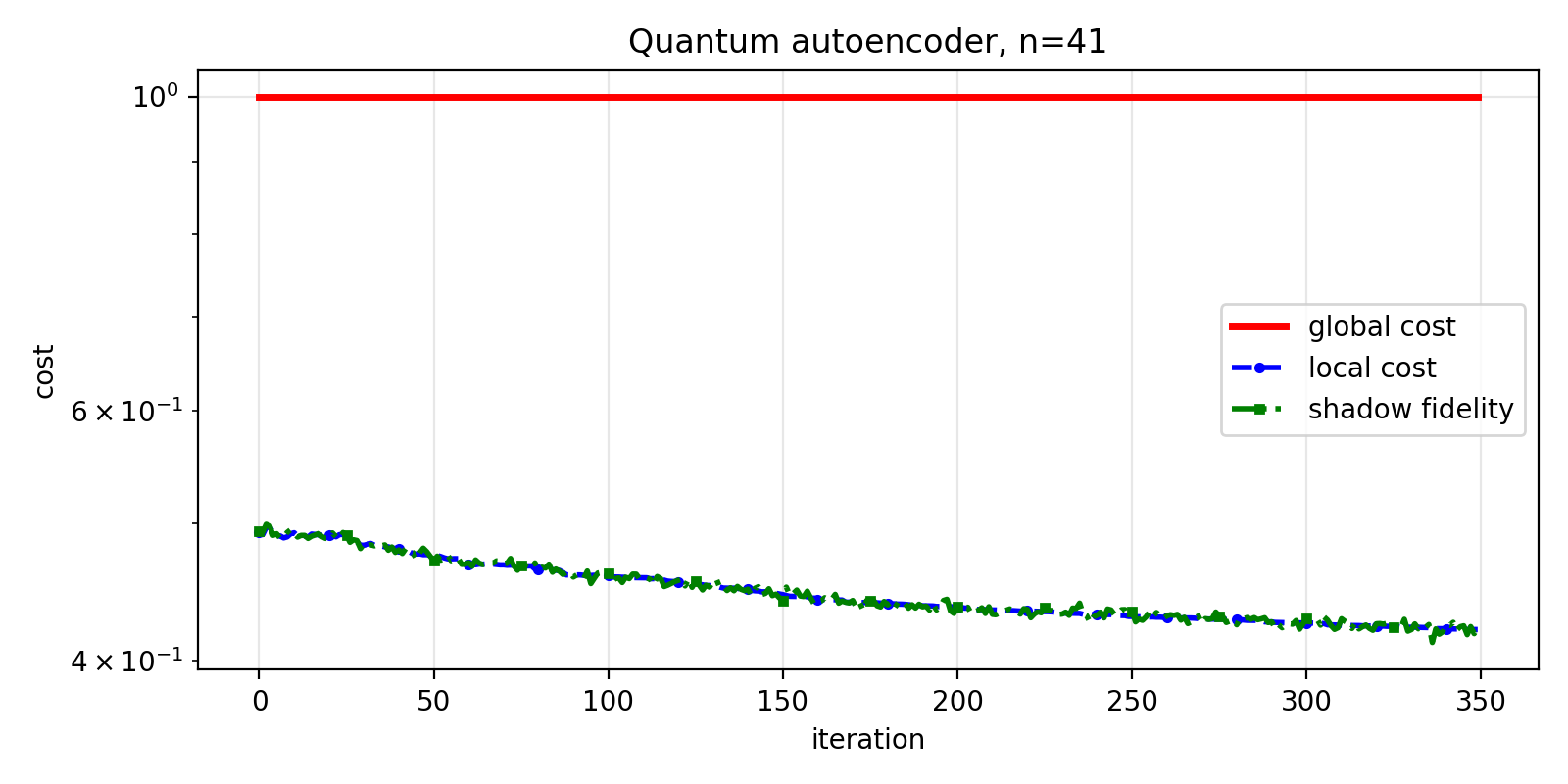}
    \caption{Training of a $41$-qubit quantum autoencoder ($n_A=1$, $n_B=40$) optimized with SPSA using: global cost (red), local cost (blue), and shadow fidelity cost (green). The global cost (red) remains flat near unity, indicating barren plateau, while the local (blue) and shadow fidelity (green) costs decrease steadily, indicating barren plateau mitigation. Shadow fidelity cost (green) closely tracks the performance of local cost (blue).}
    \label{fig:QAE sim}
\end{figure}

\subsection{Learning Properties of Ground States of Gapped Hamiltonians}
We now demonstrate that Protocol~\ref{prot:fidelity} can correctly estimate the fidelity for ground states of gapped local Hamiltonians across different phases. Apart from benchmarking (e.g., TFIM example in Figure~\ref{fig:benchmark_states}), accurate fidelity estimations against states in different phases enable phase classification even when the distinguishing observables are inherently global \cite{Huang_2022}.
We consider several representative models \cite{Huang_2022} acting on qubit systems with \emph{geometrically local} interactions. 
For geometrically local Hamiltonians, the interactions occur only between nearby sites on an underlying graph that fixes the qubits' spatial location.
We denote local Pauli operators acting on a site $i$ by $X_i, Y_i, Z_i$ and omit the tensor product symbol $\otimes$ as well as the tail of Identity operators when expressing a local term.

\begin{figure}[!ht]
    \centering
    \includegraphics[width=\textwidth]{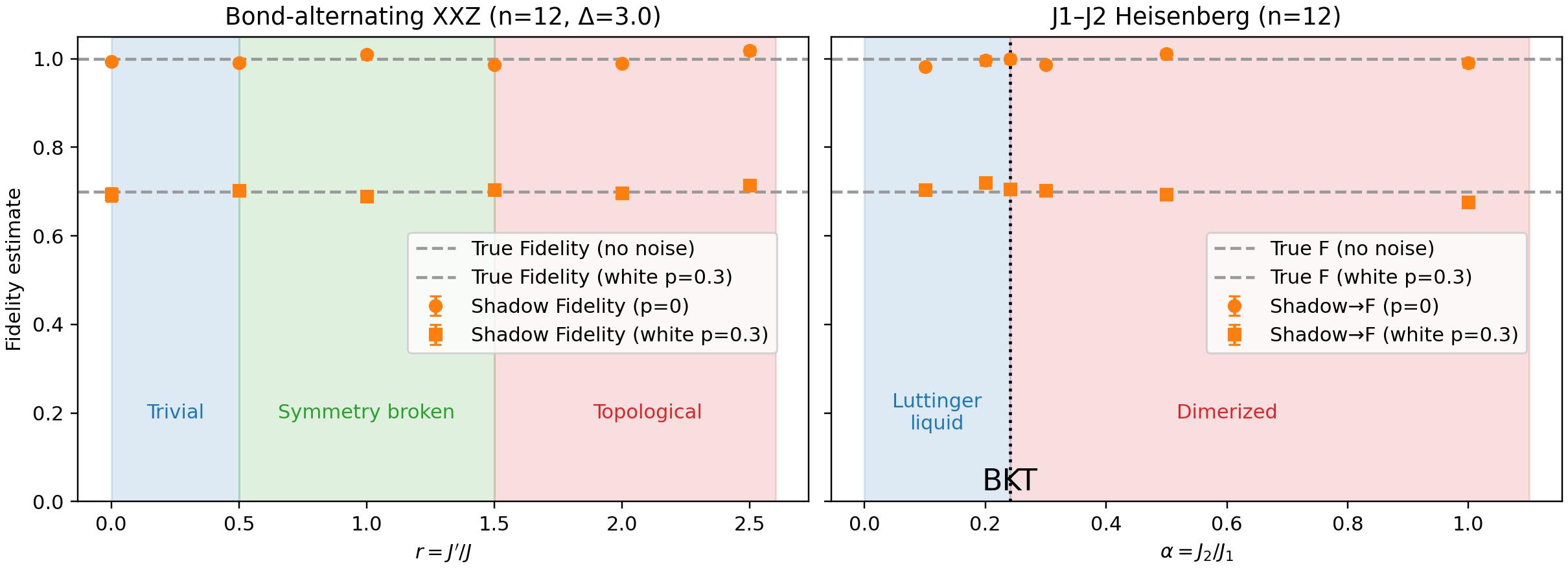}
    \caption{Estimated fidelities obtained from Protocol~\ref{prot:fidelity} for system size $n=12 $. The lab state is the ground state of $H_{bXXZ}$ (left)/$H_{bXXZ}$ (right). Left: Bond-alternating XXZ model with anisotropy $\Delta=3.0$, varying the ratio $r=J'/J$. The system transitions from a trivial phase (blue) to a symmetry-broken phase (green), and finally to a symmetry-protected topological (SPT) phase (red). Right: $J_1$–$J_2$ antiferromagnetic Heisenberg chain, varying $\alpha =J_2/J_1 $. A Berezinskii–Kosterlitz–Thouless (BKT) transition separates the gapless Luttinger liquid phase (blue) from the gapped dimerized phase (red) at the critical point $\alpha_c \approx 0.2411 $ (dashed line). For both Hamiltonians, the protocol accurately tracks the true fidelity (gray dashed lines) across phases, both under the noiseless case (orange circles) and the white noise with strength $p=0.3$ (orange squares). }
    \label{fig:bond_J1J2}
\end{figure}

We first consider the bond-alternating XXZ Hamiltonian,
\begin{equation*}
    H_{bXXZ}= \sum_{i: odd} J(X_i X_{i+1}+ Y_i Y_{i+1} + \Delta Z_i Z_{i+1} ) +  \sum_{i: even} J' (X_i X_{i+1}+ Y_i Y_{i+1} + \Delta Z_i Z_{i+1} )
\end{equation*}
where $J$ and $J'$ are alternating exchange couplings, and $\Delta$ controls the Ising anisotropy. This captures the physics of spin-Peierls materials, where lattice distortions generate alternating strong and weak interactions. As demonstrated in \cite{Huang_2022}, for $\Delta=3$, varying the ratio $r=J'/J$ drives the system through a sequence of distinct phases: a trivial phase at small $r$, then a symmetry-broken phase, and finally a symmetry-protected topological (SPT) phase at large $r$. This rich structure makes the bond-alternating XXZ Hamiltonian a compelling testbed for our fidelity estimation protocol. 

We also consider the frustrated $J_1$–$J_2$ antiferromagnetic Heisenberg chain
\begin{equation*}
    H_{J_1J_2}= \frac{J_1}{4} \sum_{i=1}^{n-1} (X_i X_{i+1}+ Y_i Y_{i+1} +  Z_i Z_{i+1} ) + \frac{J_2}{4} \sum_{i=1}^{n-2} (X_i X_{i+2}+ Y_i Y_{i+2} +  Z_i Z_{i+2} )
\end{equation*}
where $J_1, J_2 >0$. This Hamiltonian describes antiferromagnetic interactions between nearest ($J_1$) and next-nearest-neighbor ($J_2$) spins, introducing frustration: the competition between $J_1$ and $J_2$ prevents all bonds from being simultaneously minimized. Such Hamiltonians capture the physics of several quasi-1D magnetic compounds. The system exhibits a Berezinskii–Kosterlitz–Thouless (BKT) transition at the critical ratio $\alpha_c =J_2/J_1 \approx 0.2411$: for $\alpha < \alpha_c$, it is a gapless Luttinger liquid phase; for $\alpha > \alpha_c$, it is a gapped spontaneously dimerized phase. This transition does not involve a local order parameter, but the correlation length diverges; thus, conventional diagnostics are costly, and it provides a challenging test for Protocol~\ref{prot:fidelity}. At the Majumdar–Ghosh point $\alpha=0.5$~\cite{majumdar1969next}, we have a benchmark using the known structure of the ground states:
\begin{align*}
    \ket{\psi_+} &= \ket{\Psi^-}_{1,2} \otimes \ket{\Psi^-}_{3,4} \otimes \cdots \otimes \ket{\Psi^-}_{n-1,n}, & 
    \ket{\psi_-} &= \ket{\Psi^-}_{2,3} \otimes \ket{\Psi^-}_{4,5} \otimes \cdots \otimes \ket{\Psi^-}_{n,1},
\end{align*}
where $\ket{\Psi^-} = (\ket{01} - \ket{10})/\sqrt{2}$ is the singlet state.

We computed the ground states of both Hamiltonians for different system sizes while varying the parameters: $r= J'/J$ for $H_{bXXZ}$ and $\alpha=J_2/J_1$ for $H_{J_1J_2}$. This allows us to test different phases and transition points in each system. As shown in Figure~\ref{fig:bond_J1J2}, our protocol tracks the true fidelity across all phases, both in the noiseless case and under white noise with strength $p=0.3$. We note that in these examples the protocol remains robust across phase transitions, correctly identifying ground-state structure even at the BKT points. 

We next consider the 1D cluster-state parent Hamiltonian
\begin{equation}
\label{eq:1d-cl-Ham}
    H_{cl} = - \sum_{i=2}^{n-1} Z_{i-1} X_i Z_{i+1}.
\end{equation}
This non-stoquastic Hamiltonian has a unique ground state, i.e., the 1D cluster state, which serves as a resource state in measurement-based quantum computation (MBQC). Beyond its computational relevance, the 1D cluster state is in the SPT phase, protected by $Z_2 \times Z_2$ symmetry. 

We estimate fidelity with our protocol for different system sizes under both the noiseless conditions and in the presence of white noise with strength $p=0.3$. As shown in Figure~\ref{fig:cluster_states}, in both scenarios, the protocol estimates the fidelity accurately. We also generated five independent random trivial product states for each system size, constructed as tensor products of single-qubit states sampled uniformly from the Bloch sphere. These product states represent trivial phases without SPT order. Our protocol correctly estimates extremely low fidelity against these trivial phase product states, which sharply distinguishes between the two. This ability to discriminate topologically nontrivial resource states from trivial phases highlights the protocol’s capability of capturing global quantum order that is beyond the reach of few-body observables of the underlying ground state. All these results also corroborate our theoretical results that the ground states of gapped (non-stoquastic) Hamiltonians are within the scope of our protocol.

\begin{figure}[t]
    \centering
    \includegraphics[width=\textwidth]{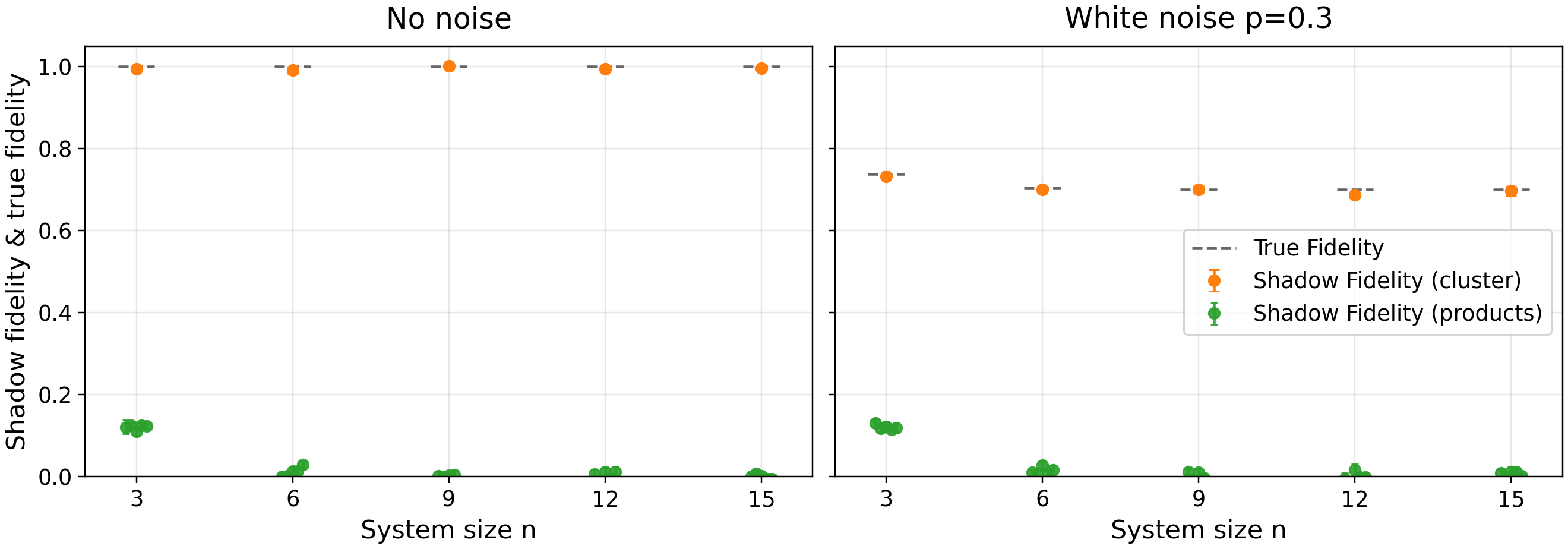}
    \caption{Estimated fidelities obtained from Protocol~\ref{prot:fidelity} for system sizes $n=3, 6, 9, 12, 15$. The lab state is the ground state of the cluster Hamiltonian (cf. Equation~\ref{eq:1d-cl-Ham}). Left: noiseless; right: white noise with strength $p=0.3$. Orange circles are the estimates against the true cluster ground state, which agree with the true fidelities (gray dashed lines). Green circles show fidelities estimated against five independent random product states, consistently yielding negligible values. The protocol correctly estimates the fidelity values and clearly distinguishes the highly entangled SPT cluster state from trivial product states.}
    \label{fig:cluster_states}
\end{figure}

In summary, our numerical results show that the fidelity estimation protocol accurately tracks the true fidelity across distinct quantum phases and remains robust at critical points and under noise. This demonstrates that the protocol captures essential global correlations beyond the reach of few-body observables and can serve as a scalable diagnostic tool for identifying and classifying quantum phases of matter, including those lacking conventional local order parameters.

\newpage
\section{Discussion}
\label{Discussion}
The local measurement schemes used in Protocol~\ref{prot:fidelity} are different from conventional $k$-local measurement schemes~\cite{Chen2014DiscontinuityOM, PhysRevLett.118.020401}; in each round, a random subset of $k$ qubits is measured in random Pauli bases, while the remaining $n-k$ qubits are also measured in the $Z$ basis, rather than left unmeasured. Such measurement schemes introduce no additional experimental overhead and provide crucial global information. Measuring the remaining $n-k$ qubits projects the system onto a definite configuration, collapsing the $k$ measured qubits into a conditional state. This conditional structure captures correlations between the $k$ randomly measured qubits and the remaining $n-k$ qubits, thus significantly reducing ambiguity when inferring global properties. In contrast, leaving qubits unmeasured is equivalent to tracing them out, which limits access to only the $k$-qubit reduced density matrix (RDM). As studied in~\cite{Chen2014DiscontinuityOM, PhysRevLett.118.020401}, $k$-RDMs can miss important global correlations. In particular, distinct $n$-qubit states may share the same collection of local marginals; such ambiguity is known as the quantum marginal problem. The additional $Z$ basis measurement results provide global information through conditioning, mitigating this ambiguity and enabling faithful fidelity estimation.

This fidelity estimation protocol does not impose restrictions on the target states $\vert \psi \rangle$ nor the unknown state $\rho$ and is broadly applicable. As we proved in Section~\ref{Bounding the Mixing Time of the Markov Chain}, many classes of states satisfy $k$-GLEP and thus are within the reach of our protocol. Given an arbitrary state outside these proven classes, we can also efficiently check if it satisfies $k$-GLEP.

Gupta, He, and O'Donnell introduced a certification protocol that can certify any quantum state using $\mathcal{O}(n^2)$ local measurements~\cite{gupta2025singlequbitmeasurementssufficecertify}. 
The key to allowing certification of any quantum state, instead of most quantum states, is the ability to adaptively learn and switch to a suitable basis tailored to the structure of the target state and the certification procedure. A typical example is the GHZ state: although the GHZ state does not satisfy $k$-GLEP in the computational basis (as it forms two isolated weight poles with no mixing), a basis change to the $X$ basis spreads amplitude support across the Hilbert space, thereby enabling certification. 
Gupta, He, and O'Donnell's protocol adaptively queries the probability in arbitrary product bases, and their adaptive calculation identifies the $X$ basis for the GHZ state.

Despite the generality to all states, our protocol has the following practical advantage:
\begin{itemize}
    \item The oracle access to probability in arbitrary product bases is strictly stronger than our query assumption. In general (worst-case scenario), simulating Gupta, He and O'Donnell's oracle from our query access takes exponentially cost: consider arbitrary tensor product basis $\Pi=\bigotimes_i\Pi_i  = \vert b\rangle \langle b \vert = \bigotimes_i\vert b_i\rangle \langle b_i \vert$, the oracle access require $\langle \psi \vert \Pi \vert \psi \rangle= \vert \langle b \vert \psi \rangle \vert^2$, where $\langle b \vert \psi \rangle=\sum_x \langle b \vert x \rangle \langle x \vert \psi \rangle  $ is a dense $2^n$-term summation in general. Such oracle access is used per measurement; for each state copy, there are $\mathcal{O}(n)$ oracle calls.
    \item Adaptive basis selection poses a challenge for practical implementation and limits its experimental use. For our protocol, the experimental part can be done without any particular query access to a certain state $\vert \psi \rangle$. Namely, an experimentalist can collect data from $\rho$ without knowing or querying $\vert \psi \rangle$, after completion of the data acquisition phase; they can conduct the query and the estimation against any $\vert \psi \rangle$. The experimental measurement phase does not depend on $\vert \psi \rangle$.
    \item Data cannot be shared/reused across multiple certification/estimation tasks with adaptive basis selection schemes. If we need to certify $\rho$ against $M$ different target states, the adaptive scheme requires distinct measurements per target. In contrast, our protocol supports data reuse: a single dataset allows fidelity estimation against $M$ target states, with sample complexity scaling only as $\mathcal{O} (\ln M)$. This advantage has very valuable applications: in this work, we use this advantage to extend our protocol to mixed states; in our following work, we use this in efficient tomography.
\end{itemize}

One natural variant of our protocol is measuring all $n$ qubits in independently chosen Pauli bases, rather than fixing the $n-k$ qubits to the $Z$ basis, as suggested in Ref.~\cite{li2024newgeneralquantumstate}. By allowing a more flexible basis, we possibly gain more information per measurement and improve the applicability of the protocol. However, the challenge lies in how to efficiently extract useful quantities (such as fidelity) from such data. The proposal in Ref.~\cite{li2024newgeneralquantumstate} constructs conditioned states based on full random Pauli measurements, but computing such conditional states is generally inefficient, and still relies on strong oracle access assumptions similar to Gupta, He, and O'Donnell's work~\cite{gupta2025singlequbitmeasurementssufficecertify}.

While our protocol does not cover all states, it is efficient, practical, and robust for a broad class of physically relevant states. Increasing $k$ (slightly) extends the range of states we can estimate, which is clear from our proof in Section~\ref{Bounding the Mixing Time of the Markov Chain}. Our protocol is not specific to restricted classes of states; it handles all states. When the protocol cannot give an efficient and accurate estimation for a specific state, we can know from an efficient check of $k$-GLEP as explained in Section~\ref{Empirical check}.

\vspace{0.5em}
\subsection*{Acknowledgments:}
The authors acknowledge Tuyen Nguyen for proofreading. MS acknowledges Hsin-Yuan Huang for insightful discussions during and after QIP 2025; acknowledges Ryan Mann for helpful guidance on the Markov chain; acknowledges Zhicheng Zhang for helpful discussions and suggesting related references. MS is supported by the UTS President's scholarship and the Sydney Quantum Academy Ph.D supplementary scholarship, and acknowledges the Australian Quantum Software Network
Microgrant sponsored by Hon Hai Research. GW is supported by a scholarship from the Sydney Quantum Academy and also supported by the ARC Centre of Excellence for Quantum Computation and Communication Technology (CQC2T), project number CE170100012. MB acknowledges the ARC Centre of Excellence for Quantum Computation and Communication Technology (CQC2T), project number CE170100012.

\subparagraph{Contributions.} 
MS developed the theoretical results, conducted the numerical experiments, and wrote the main part of the manuscript. 
GW assisted in proving the main results. All authors contributed ideas and assisted in preparing the manuscript. MB and CF supervised the project. 

\clearpage
\begin{appendices}

\section{Fidelity Estimation and Sample Complexity}
\label{Observables and Fidelity}
\subsection{Observables, Bias and Sample Complexity}
The fidelity between an unknown state $\rho$ and a known pure state $\vert \psi \rangle$ is defined as $F=\langle \psi \vert \rho \vert \psi \rangle$ and $\hat{F}$ denotes the fidelity estimate from protocol~\ref{prot:fidelity}. We adopt the analysis of Huang, Preskill, and Soleimanifar~\cite{10756060} (Appendix C, from equation 20 to the proof of $\Tr ( L \rho)= \mathbb{E }[\omega]$ above Theorem 4) that shows the observable used in Protocol~\ref{prot:fidelity} does act as an effective projector onto the target state $\vert \psi \rangle$. Below, we restate only the necessary definitions and conclusions from their proof for completeness. Our new analysis and derivation start after Equation~\ref{Shadow and Fidelity}.

Let $L$ denote the observable corresponding to the measurements in Protocol~\ref{prot:fidelity}; $S = \sum_{x \in V } \pi(x) \vert x \rangle \langle x \vert  $ be the diagonal scaling matrix defined by the target state probabilities $\pi(x) = \vert \langle x \vert \psi \rangle \vert^2$; and $\Phi=\sum_{x\in \{0,1 \}^n } e^{i \phi(x)} \vert x \rangle \langle x \vert $ be the diagonal phase matrix. Huang, Preskill, and Soleimanifar~\cite{10756060} showed that $L$ is related to the transition matrix $P$ (equation~\ref{transition simple}) of the corresponding Markov chain by
\begin{equation}
\label{L vs P}
    L=\Phi \cdot S^{\frac{1}{2}}P S^{-\frac{1}{2}} \cdot \Phi^{\dagger}
\end{equation}
Define $\vert \pi \rangle = \sum_{x \in \{0,1 \}^n } \sqrt{\pi(x)} \vert x \rangle$. $\sum_y P(x, y)= (1- \sum_{y \neq x } P(x, y)) + \sum_{y \neq x } P(x, y)=1$, so we have $S^{\frac{1}{2}}P S^{-\frac{1}{2}} \vert \pi \rangle  = \vert \psi  \rangle$. Applying the phase matrix gives $\Phi \vert \pi  \rangle= \vert \psi  \rangle$, therefore,
\begin{equation}
\label{L psi}
    \begin{split}
        L \vert \psi  \rangle &= \Phi \cdot S^{\frac{1}{2}}P S^{-\frac{1}{2}} \cdot \Phi^{\dagger} \vert \psi \rangle 
        = \Phi \cdot (S^{\frac{1}{2}}P S^{-\frac{1}{2}} )\cdot \Phi^{\dagger} \Phi \vert \pi \rangle) = 
        \Phi \vert \pi \rangle \\ &= \vert \psi \rangle
    \end{split}
\end{equation}
We order the eigenvalues of $L$ as $\lambda_0 \geq \lambda_1 \ldots \geq 0$. Equation~\ref{L psi} indicates $L \vert \psi \rangle = 1 \cdot \vert \psi \rangle$. Moreover, Huang, Preskill, and Soleimanifar proved that
\begin{equation}
\label{shadow 1}
    \Tr ( L \rho)= \mathbb{E }[\omega]
\end{equation}
As $0 \leq \mathbb{E }[\omega]= \Tr(L \rho) \leq 1$ for any state $\rho$, we have $0 \preceq L \preceq I $. So, $\lambda_0 = 1$ and $\vert \lambda_0 \rangle = \vert \psi \rangle$, then we can derive
\begin{equation}
\label{Shadow and Fidelity}
    \begin{split}
        \mathbb{E} [\omega] &= \Tr( L \rho )= \Tr( \sum_{i \geq 0} \lambda_i \vert \lambda_i\rangle \langle \lambda_i \vert  \rho ) = \sum_{i \geq 0} \lambda_i \Tr(\vert \lambda_i\rangle \langle \lambda_i \vert  \rho ) = \sum_{i \geq 0} \lambda_i  \langle \lambda_i \vert  \rho \vert \lambda_i\rangle \\ &= \langle \psi \vert \rho \vert \psi \rangle + \sum_{i \geq 1} \lambda_i \langle \lambda_i \vert \rho \vert \lambda_i \rangle\\
    \end{split}
\end{equation}
If we use $\frac{1}{T} \sum_{t=1}^T \omega_t$ as the estimator of the fidelity $F$, $\frac{1}{T} \sum_{t=1}^T \omega_t$ will concentrate to $\mathbb{E} [\omega]$, which is $F + \sum_{i \geq 1} \lambda_i \langle \lambda_i \vert \rho \vert \lambda_i \rangle$, not the fidelity $F$. That is, $\hat{F}=\frac{1}{T} \sum_{t=1}^T \omega_t$ as a fidelity estimator, it has a bias of $\sum_{i \geq 1} \lambda_i \langle \lambda_i \vert \rho \vert \lambda_i \rangle$. 
$\mathbb{E} [\omega]$ over estimate $F$ thus provides an upper bound of $F$. $\vert \lambda_i\rangle \langle \lambda_i \vert $ form an orthonormal basis, so $\sum_{i \geq 0} \langle \lambda_i \vert \rho \vert \lambda_i\rangle = \Tr(\rho) =1 $, $\sum_{i \geq 1} \langle \lambda_i \vert \rho \vert \lambda_i\rangle = 1- \langle \psi \vert \rho \vert \psi \rangle = 1-F$. And $\lambda_1 \geq \lambda_i, \forall i>1$, thus we can upper bound the bias term as 
\begin{equation}
\label{bias}
    Bias= \mathbb{E} [\omega] - F =\sum_{i \geq 1} \lambda_i \langle \lambda_i \vert \rho \vert \lambda_i \rangle \leq \sum_{i \geq 1} \lambda_1 \langle \lambda_i \vert \rho \vert \lambda_i \rangle  = \lambda_1 (1-F) \leq \lambda_1
\end{equation}
In the following, we first analyze the sample complexity of such a biased fidelity estimator, then propose methods to partially mitigate the bias.

As a convention of the analysis of mixing time of a Markov chain, $\frac{1}{\tau}$ is the spectrum gap between the largest and second largest eigenvalue, which is $\frac{1}{\tau} = 1 -\lambda_1$ in our analysis. With Equation~\ref{bias}, we have the upper bound for bias is
\begin{equation}
\label{bias 2}
    1-\frac{1}{\tau}
\end{equation}
We require that the bias is smaller than the total additive error $\epsilon$, that is, $\lambda_1 \leq \epsilon$, or $1-\frac{1}{\tau} \leq \epsilon$. For target states that fall into the scope of protocol~\ref{prot:fidelity}, as long as it is not too far from the unknown state $\rho$, the bias should not be too large, which is corroborated by the numerical simulations in Section~\ref{Result}. However, as we require $\epsilon$ to be larger than a non-zero bias, $\epsilon$ cannot be arbitrarily small. 

We use median-of-means as post-processing~\cite{Huang2020PredictingMP}, which is more robust to outliers and requires fewer samples than the empirical average used by Huang, Preskill, and Soleimanifar~\cite{10756060}. Accordingly, we analyze the sample complexity with the Chebyshev inequality below, rather than Hoeffding’s inequality used in Ref.~\cite{10756060}. We note that the key difference in the sample complexity analysis is that we take bias into consideration, which is achieved with the triangle inequality, and such a trick still applies even if we use the empirical average and analyze the complexity with Hoeffding’s inequality.
\begin{theorem}[Chebyshev inequality]
For random variable $X$ with finite expectation value $\mu$ and finite non-zero variance $\sigma^2$, for any real number $a>0$
\begin{equation*}
    \Pr ( \vert X - \mu \vert \geq a \sigma) \leq \frac{1}{a^2}
\end{equation*}
\end{theorem}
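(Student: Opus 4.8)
The plan is to derive Chebyshev's inequality as a corollary of Markov's inequality applied to the nonnegative random variable $(X-\mu)^2$. First I would establish (or recall) Markov's inequality: for any nonnegative random variable $Y$ and any threshold $t>0$, one has $\Pr(Y \geq t) \leq \mathbb{E}[Y]/t$. This follows directly from the definition of the expectation by splitting the integral (or sum) over the regions $\{Y \geq t\}$ and $\{Y < t\}$, discarding the contribution of the latter region since $Y \geq 0$, and lower-bounding $Y$ by $t$ on the former region, which gives $\mathbb{E}[Y] \geq t \, \Pr(Y \geq t)$.

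Next I would set $Y = (X-\mu)^2$, which is nonnegative by construction and has finite expectation $\mathbb{E}[Y] = \mathbb{E}[(X-\mu)^2] = \sigma^2$ by the finiteness hypotheses on $\mu$ and $\sigma^2$. Choosing the threshold $t = a^2 \sigma^2$ (legitimate since $a>0$ and $\sigma^2 > 0$) and invoking Markov's inequality yields
\begin{equation*}
    \Pr\big( (X-\mu)^2 \geq a^2 \sigma^2 \big) \leq \frac{\sigma^2}{a^2 \sigma^2} = \frac{1}{a^2}.
\end{equation*}

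Finally I would observe that, because $a\sigma > 0$, the event $\{ (X-\mu)^2 \geq a^2 \sigma^2 \}$ coincides exactly with $\{ \vert X - \mu \vert \geq a \sigma \}$; substituting this equivalence into the bound above gives $\Pr( \vert X - \mu \vert \geq a\sigma ) \leq 1/a^2$, which is the stated claim. There is no genuine obstacle in this argument, as it is elementary once the reduction to Markov's inequality is in place. The only points worth flagging are that the finiteness of $\mu$ and $\sigma^2$ is exactly what guarantees $\mathbb{E}[(X-\mu)^2]$ is well-defined and equal to $\sigma^2$, and that the squaring step preserves the event verbatim precisely because both quantities compared in the inner inequality are nonnegative, so no spurious solutions are introduced when passing from the squared form to the absolute-value form.
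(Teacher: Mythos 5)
Your proposal is correct: the reduction to Markov's inequality applied to $(X-\mu)^2$ with threshold $t = a^2\sigma^2$ is the canonical proof of Chebyshev's inequality, and every step (finiteness of $\mathbb{E}[(X-\mu)^2]$, the equivalence of the events $\{(X-\mu)^2 \geq a^2\sigma^2\}$ and $\{\vert X-\mu\vert \geq a\sigma\}$ since both sides are nonnegative) is handled properly. Note that the paper itself states Chebyshev's inequality as a standard result without proof---it is used only as a tool in the sample-complexity analysis of the median-of-means estimator---so there is no paper proof to compare against; your derivation is exactly the textbook argument one would supply, and it is consistent with the paper's toolkit, which separately states Markov's inequality in the appendix on $t$-designs.
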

For each of the batches, apply Chebyshev inequality to shadow overlap $\omega$, set $\epsilon=a \sigma$, so $\frac{1}{a}=\frac{\sigma}{\epsilon}$, we have
\begin{equation}
\label{MoM each batch}
    \Pr \Big ( \Big \vert \frac{1}{T_b} \sum_{t=(j-1)T_b+1}^{jT_b} \omega_t  - \mathbb{E}_b[\omega] \Big \vert \geq \epsilon \Big ) \leq \frac{\sigma^2}{T_b \epsilon^2 }
\end{equation} 
for each batch, so $T_b = \mathcal{O}(\frac{\sigma^2}{\epsilon^2})$. Across the $K$ batches: at least half of the batch means must deviate by more than $\epsilon$ for the median to give a bad esitmation, Chernoff bound gives $\Pr ( \vert \frac{1}{T_b} \sum_{t=1}^{T_b} \omega_t - \mathbb{E}[\omega] \vert \geq \epsilon ) \leq 2 e ^{-K/2}  $. Set the failure probability $\delta= 2 e ^{-K/2}$, we get $K=2 \ln \frac{2}{\delta}$ batches. These analyze the sample complexity for estimating $\mathbb{E}[\omega]$, not the fidelity yet. 
For the estimation of fidelity, for each batch: $\frac{1}{T_b} \sum_{t=(j-1)T_b+1}^{jT_b} \omega_t - F = (\frac{1}{T_b} \sum_{t=(j-1)T_b+1}^{jT_b} \omega_t - \mathbb{E}_b[\omega] ) + (\mathbb{E}_b[\omega] -F)$, the triangle inequality gives $\Big \vert \frac{1}{T_b} \sum_{t=(j-1)T_b+1}^{jT_b} \omega_t - F \Big \vert \leq \Big \vert \frac{1}{T_b} \sum_{t=(j-1)T_b+1}^{jT_b} \omega_t - \mathbb{E}_b [\omega] \Big \vert  + \Big \vert \mathbb{E}_b [\omega] -F \Big \vert$, and $Bias=\Big \vert \mathbb{E}_b [\omega] -F \Big \vert$ so 
\begin{equation}
\label{MoM batch bias 1}
    \Pr \Big ( \Big \vert \frac{1}{T_b} \sum_{t=(j-1)T_b+1}^{jT_b} \omega_t  - F \Big \vert > \epsilon \Big ) \leq \Pr \Big ( \Big \vert \frac{1}{T_b} \sum_{t=(j-1)T_b+1}^{jT_b} \omega_t - \mathbb{E}_b [\omega] \Big \vert > \epsilon- \vert bias \vert  \Big ) \leq \frac{\sigma^2}{T_b (\epsilon- \vert bias \vert)^2 } 
\end{equation}
We now bound the variance $\sigma^2$, which is bounded by the range of $\omega = \Tr (L_{z_k} \hat{\rho}_k) $. Recall that $L_{z_k} := \vert \Psi_{A, z} \rangle \langle\Psi_{A, z} \vert$, $\vert \Psi_{A, z} \rangle =\frac{\sum_{x \in \{0,1 \}^k } \Psi(x_z) \vert x \rangle }{ \sqrt{\sum_{x \in \{0,1 \}^k } \vert \Psi(x_z) \vert^2}} $. Each $\Psi(x_z) \vert x \rangle$ contribute to a rank-1 projector, with the trace norm of 1. $\Vert L_{z_k} \Vert_1$ is bounded by the number of $\psi(x_z) \vert x \rangle$, which is $2^k$.  $\Vert \hat{\rho}_k \Vert_{\infty} \leq 2^k $, based on the analysis of classical shadow~\cite{Huang2020PredictingMP, 10756060}.
\begin{theorem}[Hölder’s inequality]
    For any two finite-dimensional matrices $A$, $B$ and any exponents $p, q \in[1, \infty] $ with $\frac{1}{p} +\frac{1}{q}=1 $
    \begin{equation*}
        \vert \Tr(AB) \vert \leq \Vert A \Vert_p \Vert B \Vert_q 
    \end{equation*}
\end{theorem}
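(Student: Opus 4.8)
The plan is to reduce this matrix (Schatten-norm) form of Hölder's inequality to the elementary scalar Hölder inequality by passing to singular values. Denote by $\sigma_1(A) \geq \sigma_2(A) \geq \cdots \geq 0$ and $\sigma_1(B) \geq \sigma_2(B) \geq \cdots \geq 0$ the singular values of $A$ and $B$ listed in nonincreasing order, so that $\Vert A \Vert_p = \big( \sum_i \sigma_i(A)^p \big)^{1/p}$ and $\Vert B \Vert_q = \big( \sum_i \sigma_i(B)^q \big)^{1/q}$ for finite exponents, with the usual endpoint conventions $\Vert A \Vert_1 = \sum_i \sigma_i(A)$ and $\Vert B \Vert_\infty = \sigma_1(B)$. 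The crux is an intermediate bound, \emph{von Neumann's trace inequality}, $\vert \Tr(AB) \vert \leq \sum_i \sigma_i(A)\sigma_i(B)$, after which the scalar inequality finishes the job.

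To establish the trace inequality I would insert singular value decompositions $A = U_A \Sigma_A V_A^{\dagger}$ and $B = U_B \Sigma_B V_B^{\dagger}$ and use cyclicity of the trace to write $\Tr(AB) = \Tr(\Sigma_A W \Sigma_B Z)$, where $W = V_A^{\dagger} U_B$ and $Z = V_B^{\dagger} U_A$ are unitary. Expanding coordinatewise gives $\Tr(AB) = \sum_{i,j} \sigma_i(A) \sigma_j(B)\, W_{ij} Z_{ji}$, hence $\vert \Tr(AB) \vert \leq \sum_{i,j} \sigma_i(A)\sigma_j(B)\, D_{ij}$ with $D_{ij} = \vert W_{ij} Z_{ji} \vert$. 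Applying Cauchy–Schwarz to the rows and columns of the unitaries $W$ and $Z$ shows $\sum_j D_{ij} \leq 1$ and $\sum_i D_{ij} \leq 1$, i.e.\ $D$ is doubly substochastic; dominating $D$ by a doubly stochastic matrix and invoking the Birkhoff–von Neumann theorem writes it as a convex combination of permutation matrices. A rearrangement argument then shows that among all permutations the sorted (identity) pairing maximizes $\sum_{i,j} \sigma_i(A)\sigma_j(B)\, D_{ij}$, giving the bound $\sum_i \sigma_i(A)\sigma_i(B)$.

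With the trace inequality in hand, I would apply the scalar Hölder inequality to the finite nonnegative sequences $(\sigma_i(A))_i$ and $(\sigma_i(B))_i$ to obtain $\sum_i \sigma_i(A)\sigma_i(B) \leq \big( \sum_i \sigma_i(A)^p \big)^{1/p} \big( \sum_i \sigma_i(B)^q \big)^{1/q} = \Vert A \Vert_p \Vert B \Vert_q$, which is exactly the claim for finite $p, q$. The endpoint case $p = 1$, $q = \infty$ is immediate from $\sum_i \sigma_i(A)\sigma_i(B) \leq \sigma_1(B) \sum_i \sigma_i(A) = \Vert B \Vert_\infty \Vert A \Vert_1$, and the case $p = \infty$, $q = 1$ follows by symmetry.

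I expect the main obstacle to be von Neumann's trace inequality itself — specifically the doubly-substochastic / Birkhoff–von Neumann plus rearrangement step — since everything after it is a direct invocation of the scalar inequality. An alternative that sidesteps von Neumann is to argue by duality, using that the Schatten $q$-norm is dual to the Schatten $p$-norm under the trace pairing; but establishing that duality is of comparable difficulty, so I would favor the singular-value route as the more self-contained path, and in this paper only the endpoint instance $p=1$, $q=\infty$ (with $\Vert L_{z_k} \Vert_1 \leq 2^k$ and $\Vert \hat{\rho}_k \Vert_\infty \leq 2^k$) is actually needed to conclude $\sigma^2 = \mathcal{O}(2^{2k})$.
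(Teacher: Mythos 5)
Your proof is correct, but it is worth noting that the paper does not actually prove this theorem at all: Hölder's inequality for Schatten norms is invoked as a standard known fact, and only the endpoint instance $p=1$, $q=\infty$ is ever used (to bound $\vert \omega \vert = \vert \Tr(L_{z_k}\hat{\rho}_k) \vert \leq \Vert L_{z_k} \Vert_1 \Vert \hat{\rho}_k \Vert_{\infty} \leq 2^{2k}$). What you supply is the classical textbook derivation: reduce to singular values via von Neumann's trace inequality, then finish with scalar Hölder. Your sketch of the von Neumann step is sound --- cyclicity gives $\Tr(AB)=\sum_{i,j}\sigma_i(A)\sigma_j(B)W_{ij}Z_{ji}$ with $W, Z$ unitary, Cauchy--Schwarz on rows/columns shows $D_{ij}=\vert W_{ij}Z_{ji}\vert$ is doubly substochastic, and domination by a doubly stochastic matrix plus Birkhoff--von Neumann and the rearrangement inequality yields $\vert\Tr(AB)\vert\leq\sum_i \sigma_i(A)\sigma_i(B)$. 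Two small points you would need to fill in for a fully rigorous write-up: (i) the lemma that every square doubly substochastic matrix is entrywise dominated by a doubly stochastic one deserves its own short argument, and (ii) you should state the convention for rectangular or rank-deficient matrices (pad the singular value lists with zeros) so the sums align. Also worth observing: for the paper's actual use case, the heavy machinery is unnecessary --- with $q=\infty$ one can avoid Birkhoff entirely, since $\vert\Tr(AB)\vert\leq\sum_i \sigma_i(A B)\leq \Vert B\Vert_\infty \sum_i \sigma_i(A)$ follows from the singular value inequality $\sigma_i(AB)\leq \Vert B \Vert_\infty\, \sigma_i(A)$, which is elementary. So your route is more general than what the paper needs, and correct; the paper's ``route'' is simply citation of the standard result.
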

Set $p=1$ (trace/Schatten-1 norm), $q= \infty$, we have 
\begin{equation}
\label{Hölder’s inequality for Schatten norms}
    \vert \Tr(AB) \vert \leq \Vert A \Vert_1 \Vert B \Vert_{\infty} 
\end{equation}
Apply Equation~\ref{Hölder’s inequality for Schatten norms} to $\vert \omega \vert$, we have 
\begin{equation}
\label{overlap variance}
    \vert \omega \vert = \vert \Tr (L_{z_k} \hat{\rho}_k) \vert \leq \Vert L_{z_k} \Vert_1 \Vert \hat{\rho}_k \Vert_{\infty} \leq 2^{k} \cdot 2^k =2^{2k}
\end{equation}
$\sigma^2$ is bounded by $2^{2k}$, insert this bound back to Equation~\ref{MoM batch bias 1}, and $K=2 \ln \frac{2}{\delta}$, we derive the sample complexity of 
\begin{equation}
\label{MoM bound 1}
    T \geq 2 \ln \frac{2}{\delta} \frac{2^{2k}}{(\epsilon-Bias)^2}
\end{equation}
$Bias$ is a variable, but we can bound this variable via $\tau$ according to Equation~\ref{bias} and ~\ref{bias 2}
\begin{equation*}
    (\epsilon- Bias)^2 \geq (\epsilon- \lambda_1)^2=(\epsilon- (1-\frac{1}{\tau}))^2
\end{equation*}
put the lower bound of $(\epsilon- Bias)^2 $ back into Equation~\ref{MoM bound 1}, we have $T \geq 2 \ln \frac{2}{\delta} \frac{2^{2k}}{(\epsilon- (1-\frac{1}{\tau}))^2}=2^{2k+1} \cdot \frac{\tau^2}{(1-(1-\epsilon)\tau)^2} \cdot \ln \frac{2}{\delta} $. To make this lower bound cleaner, we assume $\epsilon-(1-\frac{1}{\tau}) \geq \frac{c_b \epsilon}{\tau}$, where $\frac{c_b}{\tau} \in (0,1)$. This assumption requires the upper bound of the bias to be $\frac{c_b}{\tau} $ fraction away from the $\epsilon$, namely, the bias is not too close to $\epsilon$. \footnote{Intuitively, requiring the bias not to be almost as large as the total error $\epsilon$ is natural. We design such a factor with $\frac{1}{\tau}$ in the fraction rather than just $c_b$, which may not seem to be straightforward. However, this is a more reasonable design: the accuracy/bias is closely related to $\tau$.} Thus, we have $\frac{1}{(\epsilon-(1-\frac{1}{\tau}))^2} \leq \frac{\tau^2}{(c_b \epsilon)^2}$, so the lower bound becomes 
\begin{equation}
\label{complexity 2}
    T \geq 2^{2k+1} \cdot \frac{\tau^2}{(c_b \epsilon)^2} \cdot \ln \frac{2}{\delta} 
\end{equation}
If we need to simultaneously estimate $M$ fidelity, apply a union bound over all $M$ failure probabilities, then we need $K=2 \ln \frac{2M}{\delta}$ batches. Therefore, to guarantee an additive error $\epsilon$ on the $M$ fidelity estimations with failure probability at most $\delta$, we need
\begin{equation}
\label{complexity M}
    T \geq 2^{2k+1} \cdot \frac{\max_i \tau_i^2}{(c_b \epsilon)^2} \cdot \ln \frac{2M}{\delta} 
\end{equation}
This concludes the proof of Theorem~\ref{Main theorem} in section~\ref{Performance guarantees}.

\subsection{Bias Analysis and Mitigation}
\label{Bias Analysis and Mitigation}
The analysis above assumes the bias is within an acceptable range and sufficiently smaller than $\epsilon$. The bias is related to the spectral gap/mixing time and infidelity. As we have partial knowledge about the bias, we can: 1) provide a two-sided interval of the fidelity; 2) partially mitigate the bias.

As we discussed, $\mathbb{E}[\omega]$ provides an upper bound of the fidelity. We upper bound the bias in Equation~\ref{bias}, so
\begin{equation}
\label{deb_1}
    F \geq \hat{F}_{deb_1}= \mathbb{E} [\omega] - \lambda_1.
\end{equation} Put Equation~\ref{bias} back into Equation~\ref{Shadow and Fidelity} we have $\mathbb{E} [\omega] \leq F + \lambda_1 (1-F) $, with some reformulations, we have
\begin{equation}
\label{Shadow and Fidelity 2}
    F \geq \hat{F}_{deb_2} = \mathbb{E} [\omega] \frac{1}{1-\lambda_1}-\frac{\lambda_1}{1-\lambda_1}, 
\end{equation}
We use $F_{\min}= \max\{\hat{F}_{deb_1}, \hat{F}_{deb_2} \} $ as our lower bound. Substitute the $F_{\min}$ into Equation~\ref{bias}, the bias is upper bounded by $\lambda_1(1-F_{\min})$. If the bias is within an acceptable range, then any value in $[F_{\min}, \mathbb{E}[\omega]] $ can be used as an estimate of $F$. We suggest using the midpoint $\hat{F}_{mid}=\frac{1}{2} \mathbb{E}[\omega] + \frac{1}{2} F_{\min}$ of this interval, which is the Chebyshev center of this feasible set for $\hat{F}$. For $\hat{F}_{mid}$, the upper bound of bias is $Bias_{mid}= \min\{ \frac{\lambda_1}{2}, \frac{(1-\mathbb{E}[\omega])\lambda_1}{2(1-\lambda_1)} \} $.
As $Bias \leq \lambda_1(1-F)$, though $\mathbb{E} [\omega]$ overestimate $F$, it provides a data-assisted estimation
\begin{equation}
    \label{deb_3}
    \hat{F}_{deb_3}=\mathbb{E} [\omega] - \lambda_1 (1-\mathbb{E} [\omega])
\end{equation}
We summarize the five possible estimators and their residual biases
\begin{equation}
\label{deb summary}
\begin{aligned}
    \mathbb{E}[\omega] &  & \Delta_b&=Bias  \\
    \hat{F}_{deb_1}& =\mathbb{E}[\omega]-\lambda_1 & \Delta_1 &= \vert Bias-\lambda_1 \vert \\
    \hat{F}_{deb_2}& =\frac{\mathbb{E}[\omega]-\lambda_1}{1-\lambda_1} & \Delta_2 &= \Big \vert \frac{Bias-\lambda_1(1-F)}{1-\lambda_1} \Big \vert \\
    \hat{F}_{mid}&=\frac{1}{2} \mathbb{E}[\omega] + \frac{1}{2} F_{\min} &  \Delta_{mid}&= \min\{ \frac{\lambda_1}{2}, \frac{(1-\mathbb{E}[\omega])\lambda_1}{2(1-\lambda_1)} \}\\
    \hat{F}_{deb_3}& =\mathbb{E}[\omega]-\lambda_1(1-\mathbb{E}[\omega]) & \Delta_3 &= \vert (1+\lambda_1) Bias-\lambda_1 (1-F) \vert \\
\end{aligned}
\end{equation}
The last four estimators provide possible partial mitigation of the bias.
When $Bias$ is large, $\lambda_1$ and $1-F$ cannot be simultaneously small. We have rough knowledge about $\lambda_1$ (from $\tau$) and $F$ (from $[F_{\min}, \mathbb{E} [\omega]] $), which guide us on which estimator to use. We analyze three possible situations below.\\
\textbf{Case 1: $\lambda_1$ large, $F$ small}.\\
If the $Bias$ is large, this is the most likely situation: $Bias$ is upper bound by $\lambda_1 (1-F)$ and as we discussed in Section~\ref{Performance guarantees} and will further analyze in Section~\ref{Bounding the Mixing Time of the Markov Chain}, the efficacy and accuracy of this fidelity estimation protocol largely depend on $\frac{1}{\tau}=1-\lambda_1$. If $\lambda_1$ is large, we should check if $\tau$ is polynomially bound first; if not, our protocol is not applicable. Given that $\tau$ is still bounded, $\hat{F}_{deb_1}=\mathbb{E}[\omega]-\lambda_1$ provide the best estimator. As $\lambda_1$ is large, most of the Bias part is on the leading term $\lambda_1 \langle \lambda_i \vert  \rho \vert \lambda_i\rangle $; $F$ is small, so $1-F$ is not very far from 1. Therefore $Bias \approx \lambda_1 (1-F) \approx \lambda_1$. $\Delta_1$ is likely to be smaller than $\Delta_b$, $\Delta_2$ could be small because $Bias \approx \lambda_1 (1-F)$, but we need to consider the inflation due to $\frac{1}{1-\lambda_1}$. $\hat{F}_{deb_3}$ also provides a good estimate, though it may overestimate the fidelity; its residual bias is about $\lambda_1^2(1-F)$. $\hat{F}_{deb_3}$ has a larger variance constant, so $\hat{F}_{deb_1}$ is preferred. \\
\textbf{Case 2: $\lambda_1$ large, $F$ large}.\\
This may be the least likely situation, as we analyzed above. Given this circumstance, $\hat{F}_{deb_3}$ should be the best estimator, and $\mathbb{E}[\omega]$ is also a good estimator since the bias cannot be very large. As $(1-F)$ is far from 1, $\hat{F}_{deb_1}$ can severely overcorrect the bias. \\
\textbf{Case 3: $\lambda_1$ small, $F$ small}.\\
This situation happens when the protocol is applicable for the target state $\vert \psi \rangle$, but $\vert \psi \rangle$ is far from the unknown state $\rho$. Similar to Case 2, $\mathbb{E}[\omega]$ is still a good estimator since the bias cannot be very large. Another good candidate is $\hat{F}_{deb_1}$. Even though the fidelity estimator is biased, and the bias exceeds our acceptable range, this estimation still provides useful information. 

In practice, we choose the debiased estimator according to the analysis above, then estimate/upper-bound the residual bias according to the summary Equations~\ref{deb summary}. Check if the residual bias is within an acceptable range: if the new residual bias is acceptable, we adopt the corresponding estimator and the sample complexity. The analysis only changes slightly, but follows the same logic; if not, we need to further debias. Increasing $k$ (measuring more qubits in the random Pauli basis) changes $L$ ($P$), thus changes $\tau$, hence reduces the bias. The same logic for sample complexity analysis still applies. The sample complexity scales exponentially in $k$, so large $k$ is not practical due to scaling concerns. However, a slightly increasing i $k$ should help significantly, which will be clearer in Section~\ref{Bounding the Mixing Time of the Markov Chain}.  

Without increasing $k$, on the software level, we can power up $\omega$ and use extrapolation.
Consider Equation~\ref{Shadow and Fidelity} and
$L^k=\sum_{i \geq 0} \lambda_i^k \vert \lambda_i\rangle \langle \lambda_i \vert $
\begin{equation}
\label{Expectation power k}
\begin{split}
    \mathbb{E} [\omega^k]& =\Tr( L^k \rho )= \Tr (\sum_{i \geq 0} \lambda_i^k  \vert \lambda_i\rangle  \langle \lambda_i \vert  \rho) = \sum_{i \geq 0} \lambda_i^k  \langle \lambda_i \vert  \rho \vert \lambda_i\rangle = \lambda_0^k  \langle \lambda_0 \vert  \rho \vert \lambda_0\rangle +\sum_{i \geq 1} \lambda_i^k  \langle \lambda_i \vert  \rho \vert \lambda_i\rangle\\ & =  \langle \psi \vert  \rho \vert \psi \rangle +\sum_{i \geq 1} \lambda_i^k  \langle \lambda_i \vert  \rho \vert \lambda_i\rangle \leq F + \lambda_1^k (1-F)
\end{split}
\end{equation}
That is, if we take each of $\omega$ to the power of $k$, the expectation $\mathbb{E} [\omega^k]$ becomes $F$ plus a geometrically decayed bias $\sum_{i \geq 1} \lambda_i^k  \langle \lambda_i \vert  \rho \vert \lambda_i\rangle$~\cite{grier2024principaleigenstateclassicalshadows}. As $0 \leq \lambda_1 <1$ ($0 \leq \lambda_i <1$), ideally, if $k \to \infty$, the bias is gone. However, in practice, taking $k$ as large as possible is not the best strategy: an infinitely large $k$ is not practical; the relative error and numerical precision become non-trivial problems. In principle, we can estimate several $\mathbb{E} [\omega^k]$ with different $k$ and fit their decay to a model $F+c \lambda_{eff}^k $, which is a form of nonlinear regression whose accuracy hinges on a good signal-to-noise ratio at large $k$. For a practical choice of $k$, recall that we require the bias to be upper bounded by $\epsilon$, that is $\lambda_1^k(1-F) \leq \epsilon $, $(1-F)$ is at most 1, so we roughly take $\lambda_1^k \leq \epsilon$, thus $k=\ceil{\log_{\lambda_1} \epsilon } $. This leads to a power-moment estimator. Based on Equation~\ref{Expectation power k}, we use 
\begin{equation*}
    \hat{F}_{pow}= \mathbb{E} [\frac{\omega^k-\lambda_1^k}{1-\lambda_1^k} ]
\end{equation*}
as our estimator. Another way is to use Richardson extrapolation. The leading term of the bias is $\lambda_1^k \langle \lambda_1 \vert  \rho \vert \lambda_1\rangle$ ($k=1$ for $\mathbb{E} [\omega]$), this leading bias can be canceled via $\mathbb{E} [\omega^{k+1}] - \lambda_1 \mathbb{E} [\omega^k]$. The two-level Richardson extrapolation gives:
\begin{equation*}
    \hat{F}_{rich} =\mathbb{E} [\frac{\omega^{k+1} - \lambda_1 \omega^k}{1-\lambda_1}]
\end{equation*}
Note that if we use $\hat{F}_{rich} $, as the bias is reduced,
we can choose a smaller $k$ than $\ceil{\log_{\lambda_1} \epsilon }$ \footnote{We can roughly choose $\ceil{\frac{1}{2} \log_{\lambda_1} \epsilon}$, for a precise choice of $k$, we need to know $\lambda_2$.}, which save some shot counts and post processing cost. 

We need more analysis about this debias method in terms of how the variance or the range of $\omega^k$ changed. This depends on how we understand $\omega^k$ and $L^k$; we propose this debias method as a possible option to fully mitigate the bias, but theoretically, we need further rigorous analysis.
\section{Bounding the Mixing Time of the Markov Chain}
\label{Bounding the Mixing Time of the Markov Chain}
\subsection{Overview and Connectivity Proof}
\label{Overview and connectivity proof}
This section provides the detailed proofs for Theorem~\ref{Theorem 2}: if the probability distribution $\pi$ of the target state satisfies the $k$-GLEP, then the mixing time $\tau$ of the induced Markov chain is polynomially bounded. We follow the graph defined in Section~\ref{Procedure} and the transition probability defined in Equation~\ref{transition simple}. With satisfaction of the $k$-GLEP condition, we first bound the mixing time using path congestion in Section~\ref{path congestion}, then prove a tighter bound using multi-commodity flow in Section~\ref{Multi-commodity Flow Analysis}. We prove Haar random states, state $t$-designs, states prepared by certain random low-depth quantum circuits, ground states of gapped local Hamiltonians, as well as W states and Dicke states, all have polynomially bounded mixing time, from Section~\ref{Haar random proof} to Section~\ref{W states}. 

We now present the detailed proof of Lemma~\ref{graph diam} in Section~\ref{Conditions for Fast Mixing}, which establishes that under $k$-GLEP, any two vertices $x, y \in V$ are connected by at least $\ceil{\frac{d(x,y)}{3k}}$ pairwise internally disjoint good paths of length $\mathcal{O}(\frac{n}{k})$: 
\begin{proof}[Proof of Lemma~\ref{graph diam}]
    Let $F=\{ q \in [n]: x_q \neq y_q \} $ be the set of bit positions where $x$ and $y$ differ. Partition $F$ into $\ceil{\frac{d(x,y)}{k}} $ blocks and group these blocks into $g=\ceil{\frac{d(x,y)}{3k}} $ triples: $B_{1_1}, B_{1_2}, B_{1_3}, B_{2_1} B_{2_2}, B_{2_3}\ldots, B_{g_1}, B_{g_2}, B_{g_3} $. To construct the $i$-th disjoint path, we begin by flipping the bits in the first block $B_{i_1}$ of the $i$-th triple, changing the corresponding bits $v_i$ of $x$ to obtain a new vertex with $v_i'$. We then proceed by flipping the remaining blocks but never revisit $B_{i_1}$, so $v_i' \neq v_i$ throughout the path. For the $j$-th path with $j \neq i$, we begin with $B_{j_1}$ and do not flip $B_{i_1}$ initially, so $v_i' = v_i$ until we move to $i$-th triple, by which time at least another $3k$ bits are flipped. As a result, paths corresponding to different triples never share the same vertex; they are $3k$ bits away.
    This construction yields $\ceil{\frac{d(x,y)}{3k}} $ pairwise internally disjoint paths. Each path involves $\mathcal{O} (\frac{d(x, y)}{k}) \leq \mathcal{O} ( \frac{n}{k} )$ bit-flip steps, so the path length is bounded accordingly.
    
    To ensure that all internal vertices are good, the path detours through a good neighbor whenever it encounters a bad vertex $e^+$. 
    For any two vertices $(e^+, e^-) $, their corresponding good neighbors $g(e^+)$ and $g(e^-)$ satisfy $d(g(e^+), g(e^-)) = d(e^+, e^-)+ d(e^+, g(e^+))+ d(e^-, g(e^-)) \leq 3k $. Hence, there are at least $\alpha N$ internally disjoint good-only paths of length at most $ 5$ connecting $g(e^+)$ and $g(e^-)$. We substitute any bad vertex with a corresponding good-only subpath of length at most 5. Each substitution increases the length by at most $2+5=7$ (2 steps to enter/exit the neighbor, 5 for the good path),
    thus the total path length is bound by $\mathcal{O}(\frac{n}{k})$. Finally, since any two paths are separated by at least $3k$-bit differences, and detours only involve $k$-local neighborhoods, paths remain disjoint. 
\end{proof}
Lemma~\ref{graph diam} and its proof assist the following proofs in this section.

\subsection{Path Congestion Analysis}
\label{path congestion}
We now analyze the mixing time using the path congestion method.
Let $\Gamma =\{ \gamma_{xy}\} $ denote a collection of simple paths, each connecting a unique ordered pair of distinct vertices $(x, y)$, the path congestion parameter is defined as~\cite{10.5555/646385.689853, levin2009markov, 10756060}
\begin{equation*}
    \rho(\Gamma) = \max_{e\in E_k } \frac{1}{Q(e) } \sum_{\gamma_{xy} \ni e } \pi(x) \pi(y) \vert \gamma_{xy} \vert
\end{equation*}
where $Q(e)=\pi(e^+) P(e^+, e^-)$ is the weight (the probability of the occurrence of transition$(e^+, e^-)$). The spectral gap is bounded by the path congestion as $1-\lambda_1 \geq \frac{1}{\rho(\Gamma)}$, and the mixing time $\tau \leq \rho(\Gamma)(\ln \pi(x)_{\rm min}^{-1}+\ln \epsilon^{-1}) $, where $\pi(x)_{\rm min}$ is the minimum of $\pi(x)$, and $0 < \epsilon <1$~\cite{Sinclair_1992}. To upper bound the mixing time $\tau$, we need to upper bound the path congestion $\rho(\Gamma)$.

According to the transition rule defined in equations~\ref{transition simple}, we have: $Q(e)=\frac{1}{N} \frac{\pi(e^+)\pi(e^-)}{\pi(e^+)+ \pi(e^-)} $, therefore:
\begin{equation}
\label{weight}
\begin{split}
\frac{1}{Q(e)} & = N \frac{\pi(e^+)+ \pi(e^-)}{\pi(e^+)\pi(e^-)} = N \Big (\frac{1}{\pi(e^-)} + \frac{1}{\pi(e^+)} \Big ) = \frac{N}{\pi(e^+)} \Big (1+\frac{\pi(e^+)}{\pi(e^-)} \Big )\\
\end{split}
\end{equation}
Lemma~\ref{graph diam} ensures that the shortest path $\gamma_{xy}$ connecting every pair $(x, y)$ has length
\begin{equation}
\label{length}
    \vert \gamma_{xy} \vert \leq \mathcal{O} (\frac{n}{k})
\end{equation}
The summation over $\gamma_{xy} \ni e$ sums up all the ordered pairs $(x, y)$ such that $\gamma_{xy}$ includes edge $e$; the number of such pairs is at most $\mathcal{O}(\vert \supp{(\pi)} \vert)$ \footnote{May be tighten when $k >1$}. For worst-case analysis, we assume each term $\pi(x)\pi(y)$ is maximized, we have:
\begin{equation}
\label{sum edges}
    \sum_{\gamma_{xy} \ni e } \pi(x) \pi(y) \vert \gamma_{xy} \vert \leq \vert \supp{(\pi)} \vert \cdot \pi(x) \pi(y) \cdot \mathcal{O} (\frac{n}{k})
\end{equation}
Combining equations~\ref{weight} and~\ref{sum edges}, we have
\begin{equation*}
    \rho (\Gamma)  \leq \frac{N}{\pi(e^+)} \Big (1+\frac{\pi(e^+)}{\pi(e^-)} \Big ) \cdot \vert \supp{(\pi)} \vert \pi(x) \pi(y) \cdot \mathcal{O} (\frac{n}{k})  
\end{equation*}
The internal vertices are all good; we only need to consider three cases based on the location of good or bad vertices.\\
\textbf{Case 1: Both $x$ and $y$ are good}.\\
Both $e^+$ and $e^-$ are good vertices: $\frac{c_l}{\vert \supp{(\pi)} \vert} \leq \pi(e^+) \leq \frac{c_u}{\vert \supp{(\pi)} \vert} $, $\frac{c_l}{\vert \supp{(\pi)} \vert} \leq \pi(e^-) \leq \frac{c_u}{\vert \supp{(\pi)} \vert} $, we have $\frac{c_l  }{c_u } \leq\frac{\pi(e^+) }{\pi(e^-)} \leq \frac{c_u }{c_l }  $, therefore $\frac{1}{Q(e)} \leq \frac{N}{c_l 2^{-n}} (1+\frac{c_u}{c_l})$. The path congestion is bounded as:
\begin{equation}
\label{good-good}
\begin{split}
    \rho (\Gamma)  & \leq \frac{N \vert \supp{(\pi)} \vert}{c_l } (1+ \frac{c_u}{c_l} ) \cdot \vert \supp{(\pi)} \vert (\frac{c_u}{\vert \supp{(\pi)} \vert})^2 \cdot \mathcal{O} (\frac{n}{k}) \\
    & = \mathcal{O} (\frac{n^{k+1} }{k}) \frac{c_u^2}{c_l}  (1+ \frac{c_u}{c_l}) 
\end{split}  
\end{equation}
\textbf{Case 2: One of $x, y$ is a bad vertex}.\\
Without loss of generality, we assume $x$ is the bad vertex. The same analysis applies to the case of $y$ being a bad vertex. The main concerns is if $\pi(x) <  \frac{c_l}{\vert \supp{(\pi)} \vert}$, exponentially small $\pi(x)$ could blow up $\frac{1}{Q(e)}$, however the $\pi(x=e^+)$ in the summation actually cancel the one in $\frac{1}{Q(e)}$, as derived below:
\begin{equation}
\label{one bad}
\begin{split}
  \rho(\Gamma) & \leq \frac{N}{\pi(e^+)} \Big (1+\frac{\pi(e^+)}{\pi(e^-)} \Big ) \cdot \vert \supp{(\pi)} \vert \pi(x) \pi(y) \cdot \mathcal{O} (\frac{n}{k})\\ & = \frac{N}{\pi(e^+=x)} \Big (1+\frac{\pi(e^+=x)}{\pi(e^-)} \Big ) \cdot \vert \supp{(\pi)} \vert \pi(x=e^+) \pi(y) \cdot \mathcal{O} (\frac{n}{k}) \\ 
  &= N \Big (1+\frac{\pi(e^+=x)}{\pi(e^-)} \Big ) \cdot \vert \supp{(\pi)} \vert \pi(y) \mathcal{O} (\frac{n}{k}) \\
\end{split}  
\end{equation}
If $\pi(x) < \frac{c_l}{\vert \supp{(\pi)} \vert} $, $\Big (1+\frac{\pi(e^+=x)}{\pi(e^-)} \Big ) \approx 1$, Equation~\ref{one bad} is bounded as:
\begin{equation*}
    \rho (\Gamma)  \leq \mathcal{O} (\frac{n^{k+1} }{k}) c_u
\end{equation*}
If $ \frac{c_u}{\vert \supp{(\pi)} \vert} < \pi(x) \leq \frac{c_u'}{\vert \supp{(\pi)} \vert} $, $\Big (1+\frac{\pi(e^+=x)}{\pi(e^-)} \Big ) \leq (1 +\frac{c_u'}{c_l}) $, Equation~\ref{one bad} is bounded as \footnote{$c_u'$ could be $\Theta(n)$ (Haar random), then each $c_u'$ contributes an additional order of $n$ in the final bound}:
\begin{equation*}
    \rho (\Gamma)  \leq \mathcal{O} (\frac{n^{k+1} }{k}) c_u (1+ \frac{c_u'}{c_l})
\end{equation*}
\textbf{Case 3: Both $x$ and $y$ are bad}. \\
By construction, if a path starts at a bad vertex $x$, it immediately goes to a good neighbor, and Lemma~\ref{graph diam} guarantees that all internal vertices are good; thus, the vertex before $y$ is good. No edge $e = (e^+, e^-)$ along any such path has both endpoints as bad vertices. As a result, even though both $x$ and $y$ can be bad, any edge in the path lies between either a bad and a good vertex or two good vertices. Thus, this case reduces to the same analysis as \textbf{Case 2}.

This proves that the mixing time $\tau$ of any states that satisfy $k$-GLEP is polynomially bounded, partially proving Theorem~\ref{Theorem 2} in Section~\ref{Performance guarantees}.

\subsection{Multi-Commodity Flow Analysis}
\label{Multi-commodity Flow Analysis}
To obtain a tighter bound, we analyze with multi-commodity flow, where the demand $\pi(x)\pi(y)$ between each pair $(x, y)$ is distributed across multiple paths, rather than a single canonical path.
Let $\mathcal{P}_{xy} $ denote the set of simple directed paths connecting $x$ and $y$. A flow function $f$ assigns a nonnegative value to each path $p \in \mathcal{P}{xy}$ such that $\sum_{p \in \mathcal{P}_{xy}} f(p) = 1$. The demand $\pi(x)\pi(y) $ is split so that a fraction $f(p)$ travels along each path $p \in \mathcal{P }_{xy}  $. The resistance $R(f)$ of the flow $f$ is defined as
\begin{equation*}
    R(f) = \max_{e \in E_k}  \frac{1}{Q(e) } \sum_{x, y} \sum_{p \in \mathcal{P}_{xy}: p \ni e} \pi(x) \pi(y) f(p) \vert p \vert
\end{equation*}
where $Q(e)$ is the weight, $\frac{1}{Q(e) }$ is analyzed in Equation~\ref{weight} above. $\vert p \vert$ is the number of edges in path $p$ and is bounded by $\mathcal{O}(\frac{n}{k})$ due to Lemma~\ref{graph diam}. The spectral gap is bounded by flow as $1-\lambda_1 \geq \frac{1}{R(f)}$, and the mixing time $\tau \leq R(f)(\ln \pi(x)_{\rm min}^{-1}+\ln \epsilon^{-1}) $, where $\pi(x)_{\rm min}$ is the minimum of $\pi(x)$, and $0 < \epsilon <1$~\cite{Sinclair_1992}. The upper bound of the flow $R(f)$ indicates the upper bound of the mixing time $\tau$.

We now construct a layered flow structure. For a pair $(x, y)$ with Hamming distance $d(x, y)$, let $l =\ceil{\frac{d(x,y)}{k} } \leq \frac{n}{k}$ and the $m$-th layer is defined by:
\begin{equation*}
    L_m = \{ e^+ \in V: d(e^+, y)=d(x, y) - mk \}, \ \ 0 \leq m \leq l
\end{equation*}
$L_0=\{ x \} $ and $L_l=\{ y \} $. 
We construct the flow from $x$ to $y$ by uniformly distributing it through the good vertices at each layer. Each step of the flow moves one layer. $k$-GLEP guarantees that each $e^+$ has at least $\alpha N $ good neighbors. Among these, at most $\binom{d(e^+, y)}{1}+ \ldots + \binom{d(e^+, y)}{k-1} \leq \mathcal{O}( n^{k-1})$ are in the same layer, for the rest, they are either in the upper layer or the lower layer. So there are at least $\alpha' n^{k}$ legitimate edges ($\alpha' >0$). At each step, the flow is evenly split across these $\alpha' n^{k}$ edges. so the weight on each path $p$ used in the flow is bounded by $\frac{1}{\alpha' n^k}$.
There are at most $\mathcal{O}(\vert \supp{(\pi)} \vert)$ pairs $(x,y)$, for worst-case analysis, we assume each term $\pi(x)\pi(y)$ is maximized, we have:
\begin{equation}
\label{flow edges}
    \sum_{x, y} \sum_{p \in \mathcal{P}_{xy}: p \ni e} \pi(x) \pi(y) f(p) \vert p \vert \leq \vert \supp{(\pi)} \vert \cdot \pi(x) \pi(y) \frac{1}{\alpha' n^ {k}} \cdot \mathcal{O}( \frac{n}{k})
\end{equation}
Combining equations~\ref{weight} and~\ref{flow edges}, we bound the resistance as: 
\begin{equation*}
    R(f) \leq N (\frac{1}{\pi(e^-)} + \frac{1}{\pi(e^+)}) \cdot \vert \supp{(\pi)} \vert \pi(x) \pi(y) \frac{1}{\alpha' n^ {k}} \cdot \mathcal{O}( \frac{n}{k})
\end{equation*}
We analyze this in three cases:\\
\textbf{Case 1: Both $x$ and $y$ are good}.
$\pi(x), \pi(y) \in \Big [\frac{c_l}{\vert \supp{(\pi)} \vert}, \frac{c_u}{\vert \supp{(\pi)} \vert} \Big ] $, so $\pi(e^+), \pi(e^-) \in \Big [\frac{c_l}{\vert \supp{(\pi)} \vert}, \frac{c_u}{\vert \supp{(\pi)} \vert} \Big ] $:
\begin{equation}
\begin{split}
    R(f) & \leq \frac{N \vert \supp{(\pi)} \vert}{c_l } (1+\frac{c_u}{c_l} ) \vert \supp{(\pi)} \vert ( \frac{c_u}{\vert \supp{(\pi)} \vert} )^2 \frac{1}{\alpha' n^ {k}} \cdot \mathcal{O}( \frac{n}{k}) \\
    & = \mathcal{O}( \frac{n}{k}) \frac{c_u^2 }{c_l \alpha'} (1+\frac{c_u}{c_l} )
\end{split}
\end{equation}
\textbf{Case 2: One of $x, y$ is a bad vertex}.\\
With a similar analysis to that in Section~\ref{path congestion}, if $x < c_l 2^{-n} $:
\begin{equation}
\begin{split}
    R(f) & \leq N \cdot \vert \supp{(\pi)} \vert \pi(y) \frac{1}{\alpha' n^ {k}} \cdot \mathcal{O}( \frac{n}{k}) \\
    & = \mathcal{O}( \frac{n}{k}) \frac{c_u  }{ \alpha'} 
\end{split}
\end{equation}
If $c_u 2^{-n} <x \leq c'_u 2^{-n} $:
\begin{equation}
\begin{split}
    R(f) & \leq  \mathcal{O} (\frac{n }{k}) \frac{c_u}{\alpha'}  (1+ \frac{c'_u}{c_l}) 
\end{split}
\end{equation}
\textbf{Case 3: Both $x$ and $y$ are bad}. \\
With a similar analysis to that in Section~\ref{path congestion}, this case reduces to \textbf{Case 2} above.\\

This proves that the mixing time $\tau$ of any states that satisfy $k$-GLEP is bounded by $\mathcal{\mathcal{O}}(\frac{n^2}{k})$, which proves Theorem~\ref{Theorem 2} in Section~\ref{Performance guarantees}.

Multi-commodity flow analysis permits a relaxation of the expansion condition in $k$-GLEP: internal vertices of paths can be bad. Whenever a flow path encounters a bad vertex, the flow is split/rerouted among its good neighbors. This avoids exponentially small values of $\pi(e^+)$ or $\pi(e^-)$ in $\frac{1}{Q(e)}$.
However, the edges with both good endpoints carry additional flow that would have otherwise passed through bad vertices, and we need to consider this factor. Each bad vertex has at least $\alpha N$ good neighbor, each good neighbor receives at most $\frac{1}{\alpha N}$ additional flow; each good vertex has at most $N$ neighbor, so the additional flow due to bad vertices is bounded by $\frac{1}{\alpha N}N = \mathcal{O}(1)$. Thus, the edges $e$ whose endpoints are good vertices carry at most $\mathcal{O}(1)\cdot \mathcal{O}(1) = \mathcal{O}(1)$ additional flow due to bad vertices. This constant overhead does not affect the resistance bounds derived above, and the mixing time remains the same scaling.

\subsection{Haar Random State}
\label{Haar random proof}
Haar measure formalizes the notion of choosing unitaries uniformly at random~\cite{Haar_intro}. Haar random states $\vert \psi \rangle= U \vert 0\rangle^{\otimes n} $, $U \sim \text{Haar}$, are the canonical model for ``typical'' pure states. Porter-Thomas distribution describes the outcome probabilities when such a state is measured in the computational basis. 
For the induced distribution $\pi(x) = \vert \langle x \vert \psi  \rangle \vert^2 $ where $\vert \psi  \rangle $ is Haar random state, we model $\pi(x)$ as follows: for each $x \in \{0, 1\} $, independently sample $z(x)$ from the exponential distribution $\Pr(z)=e^{-z} \cdot \mathbbm{1} [ z \geq 0 ] $, and set $\pi(x) = \frac{z(x)}{2^n} $. We prove that such $\pi(x)$ satisfies $k$-GLEP with high probability. For Haar random state, $\vert \supp{(\pi)} \vert= 2^n$, we directly replace $\vert \supp{(\pi)} \vert$ in Definition~\ref{k-GLEP} and~\ref{Good Vertex} with $2^n$.
\begin{prop}
\label{prop Haar}
For $z_1, \ldots, z_m$ drawn independently from the exponential distribution $\Pr(z)=e^{-z} \cdot \mathbbm{1} [ z \geq 0 ]$, we have~\cite{10756060}
\begin{equation*}
    \Pr(z_k \leq t)= 1-e^{-t} 
\end{equation*}
\end{prop}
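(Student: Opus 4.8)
The plan is to recognize that the claimed identity is nothing more than the cumulative distribution function of a rate-one exponential random variable evaluated at $t$, so the proof collapses to a single elementary integration. The hypothesis that $z_1, \ldots, z_m$ are drawn \emph{independently} is irrelevant to this marginal statement; all that is used is that each $z_k$ carries the density $e^{-z}\mathbbm{1}[z \geq 0]$. I would flag that the independence assumption is recorded here only because it will be needed downstream, when union bounds are taken over the $2^n$ basis states to show that $k$-GLEP holds with high probability for the Porter--Thomas model.

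First I would fix $t \geq 0$ and express the probability as the integral of the density over $\{z : 0 \leq z \leq t\}$, the indicator $\mathbbm{1}[z \geq 0]$ forcing the lower limit to be $0$ rather than $-\infty$:
\begin{equation*}
    \Pr(z_k \leq t) = \int_{0}^{t} e^{-z} \, dz.
\end{equation*}
Evaluating the antiderivative $-e^{-z}$ at the endpoints yields $(-e^{-t}) - (-e^{0}) = 1 - e^{-t}$, which is precisely the asserted formula. For completeness I would observe that when $t < 0$ the support $[0,\infty)$ gives $\Pr(z_k \leq t) = 0$, so the expression $1 - e^{-t}$ is to be read only on the relevant range $t \geq 0$.

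There is no substantive obstacle: this is a textbook property of the exponential distribution, and the single point worth verifying is that the support indicator truncates the integral at $0$. The value of isolating this proposition is organizational rather than technical---it pins down the one distributional fact about the Porter--Thomas model that the subsequent weight-bound and smoothness estimates for Haar-random states repeatedly invoke, and it fixes the normalization $\pi(x) = z(x)/2^n$ so that later tail estimates $\Pr(z_k \leq t)$ translate directly into statements about how many basis states are $(c_u, c_l)$-good.
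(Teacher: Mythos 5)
Your proof is correct: the statement is just the cumulative distribution function of the rate-one exponential distribution, and your one-line integration $\int_0^t e^{-z}\,dz = 1-e^{-t}$ is the standard verification. The paper itself gives no proof at all --- it states the proposition as a known fact with a citation to Huang, Preskill, and Soleimanifar --- so your elementary computation (together with the accurate side remarks that independence is irrelevant to this marginal statement and that the formula applies for $t \geq 0$) supplies exactly what is missing, in the only natural way.
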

To prove the upper bound for all the weights, with proposition ~\ref{prop Haar}, we have the failure probability: $\Pr(z_k > t)= e^{-t} $. Let $D=2^n$, set $t=c \ln D$, we have $\Pr[z(x) > c \ln D  ] = D^{-c}$, that is:
\begin{equation*}
    \Pr[\pi(x) > \frac{c \ln D }{D} ] = D^{-c}
\end{equation*}
Apply union bound over all $x \in \{0, 1\}^n $ yields the failure probability of $D^{1-c}$, which is exponentially small for $c \geq 2$. Namely, with $t= c \ln 2^n = c \cdot n \ln 2  $ and $c \geq 2$,  $\Pr(\pi(x) \leq t 2^{-n})= 1-e^{-t} $, the failure probability is exponentially small. Therefore, for $c_u' =\Theta (n) $, \textbf{Weight bound} $\pi(x) \leq c_u' 2^{-n} $ holds with high probability. 

We now establish proof for the \textbf{Smoothness} condition. The Smoothness condition: $c_l 2^{-n} \leq \pi(x) \leq c_u 2^{-n} $ requires $c_l \leq z(x) \leq c_u  $. Let $p= \Pr[c_l \leq z(x) \leq c_u] $ be the probability that a vertex is good. (For the Smoothness condition to be satisfied with high probability, we need a reasonably high $p$. In this section, we follow the choices in Ref.~\cite{10756060} and set $c_l=1/11, c_u=5$. In general, we can derive the requirements of $c_l, c_u$, and we include such derivation in Section~\ref{t-design proof}.) Each vertex has $N$ neighbors, each neighbor is a good neighbor with probability $p$, and one a neighbor is good or not is independent of another neighbor. Namely, the event that a given neighbor is good is an independent Bernoulli variable with success probability $p$.
\begin{theorem}[Hoeffding's inequality]
\label{Hoeffding’s inequality}
    Let $x_1, \ldots, x_N$ be $N$ independent random variables taking values in $[a, b]$, let $X = \frac{1}{N} \sum_{i=1}^N x_i $. For any $t>0$,
    \begin{equation*}
        \Pr(\vert X - \mathbb{E}[X] \vert  \geq t) \leq 2 e^{-2Nt^2/(b-a)^2}
    \end{equation*}
\end{theorem}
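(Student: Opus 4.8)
The plan is to prove the bound by the exponential-moment (Chernoff) method, which converts a tail probability into a bound on a moment generating function and then optimizes a free parameter. First I would establish the one-sided inequality $\Pr(X - \mathbb{E}[X] \geq t) \leq e^{-2Nt^2/(b-a)^2}$, and then recover the stated two-sided bound by applying the identical argument to the variables $-x_i$ and adding the two one-sided tail probabilities, which produces the factor of $2$.

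For the one-sided tail, fix any $\lambda > 0$ and apply Markov's inequality to the nonnegative random variable $e^{\lambda(X - \mathbb{E}[X])}$:
\begin{equation*}
    \Pr(X - \mathbb{E}[X] \geq t) = \Pr\!\big(e^{\lambda(X-\mathbb{E}[X])} \geq e^{\lambda t}\big) \leq e^{-\lambda t}\, \mathbb{E}\!\big[e^{\lambda(X-\mathbb{E}[X])}\big].
\end{equation*}
Writing $X - \mathbb{E}[X] = \frac{1}{N}\sum_{i=1}^N (x_i - \mathbb{E}[x_i])$ and using independence factorizes the moment generating function:
\begin{equation*}
    \mathbb{E}\!\big[e^{\lambda(X-\mathbb{E}[X])}\big] = \prod_{i=1}^N \mathbb{E}\!\big[e^{(\lambda/N)(x_i-\mathbb{E}[x_i])}\big].
\end{equation*}

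The technical heart is Hoeffding's lemma: for a centered random variable $Y$ supported on an interval of length $b-a$ with $\mathbb{E}[Y]=0$, one has $\mathbb{E}[e^{sY}] \leq e^{s^2(b-a)^2/8}$ for all $s$. I would prove this by setting $\psi(s) = \ln \mathbb{E}[e^{sY}]$ and showing $\psi''(s) \leq (b-a)^2/4$: the second derivative equals the variance of $Y$ under the exponentially tilted measure with density proportional to $e^{sY}$, and the variance of any random variable confined to an interval of length $b-a$ is at most $(b-a)^2/4$ by Popoviciu's inequality. Since $\psi(0)=0$ and $\psi'(0)=\mathbb{E}[Y]=0$, a second-order Taylor expansion with remainder gives $\psi(s) \leq s^2(b-a)^2/8$. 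Applying this with $s = \lambda/N$ and $Y = x_i - \mathbb{E}[x_i]$ (whose range is still at most $b-a$) and taking the product over the $N$ factors yields $\mathbb{E}[e^{\lambda(X-\mathbb{E}[X])}] \leq \exp\!\big(\lambda^2(b-a)^2/(8N)\big)$.

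Combining these pieces gives $\Pr(X - \mathbb{E}[X] \geq t) \leq \exp\!\big(-\lambda t + \lambda^2(b-a)^2/(8N)\big)$, and the final step is to minimize the exponent over $\lambda > 0$. The optimizer is $\lambda^\star = 4Nt/(b-a)^2$, which substitutes to an exponent of $-2Nt^2/(b-a)^2$ and delivers the one-sided bound; doubling then gives the claim. I expect Hoeffding's lemma to be the only substantive obstacle, since everything else is routine Chernoff bookkeeping and a single-variable optimization. If the tilted-measure variance computation feels heavy, a clean alternative is to bound $e^{sy} \leq \frac{b-y}{b-a}e^{sa} + \frac{y-a}{b-a}e^{sb}$ by convexity on $[a,b]$, take expectations using $\mathbb{E}[Y]=0$, and then optimize the resulting single-variable expression to reach the same $s^2(b-a)^2/8$ bound.
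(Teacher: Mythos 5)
Your proof is correct: the Chernoff exponential-moment reduction, the factorization over independent summands, Hoeffding's lemma via the log-moment-generating function (with the tilted-measure variance bounded by $(b-a)^2/4$), the optimization $\lambda^\star = 4Nt/(b-a)^2$, and the doubling for the two-sided tail are all sound, and the arithmetic checks out. Note, however, that the paper does not prove this statement at all --- it invokes Hoeffding's inequality as a standard off-the-shelf tool (using it only to derive its Bernoulli tail-bound corollary), so there is no in-paper argument to compare against; your write-up is the canonical textbook proof, and either of your two routes to Hoeffding's lemma (the tilted-measure second-derivative bound or the convexity interpolation $e^{sy} \leq \frac{b-y}{b-a}e^{sa} + \frac{y-a}{b-a}e^{sb}$) would serve equally well.
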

With Theorem~\ref{Hoeffding’s inequality}, we have
\begin{corollary}
\label{tail bound Bernoulli}
    Let $x_1, \ldots, x_N$ be $N$ i.i.d. Bernoulli random variables with $\Pr[x_i=1]=p $ for $i \in [N]$. 
    \begin{equation*}
        \Pr \Big( \sum_{i=1}^N x_i \leq \alpha N \Big) < 2^{-\frac{ 2(p-\alpha)^2N}{\ln 2}}
    \end{equation*}
\end{corollary}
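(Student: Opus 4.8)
The plan is to recognize Corollary~\ref{tail bound Bernoulli} as a one-sided concentration bound for the sample mean of bounded i.i.d.\ variables, and to obtain it by applying Hoeffding's inequality (Theorem~\ref{Hoeffding’s inequality}) followed by a change of exponential base. The only genuine content beyond a mechanical substitution is the constant factor in front of the exponential, which I flag as the main obstacle below.

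First I would record that each $x_i$, being Bernoulli, takes values in $[0,1]$, so in the notation of Theorem~\ref{Hoeffding’s inequality} we have $a=0$, $b=1$, and thus $b-a=1$. Writing $X=\frac1N\sum_{i=1}^N x_i$ for the sample mean, linearity gives $\mathbb{E}[X]=p$, and the event to be bounded rewrites as
\[
    \Big\{\, \sum_{i=1}^N x_i \le \alpha N \,\Big\} = \{\, X \le \alpha \,\}.
\]
Assuming $\alpha<p$ (the regime in which the inequality is applied), this is a lower-tail deviation of magnitude $t:=p-\alpha>0$, and in particular $\{X\le\alpha\}\subseteq\{\,|X-\mathbb{E}[X]|\ge t\,\}$. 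The next step is to apply the Hoeffding tail bound with $t=p-\alpha$ and $b-a=1$, producing the exponential factor $e^{-2N(p-\alpha)^2}$. Finally I would convert the base using $e=2^{1/\ln 2}$, so that $e^{-2N(p-\alpha)^2}=2^{-2N(p-\alpha)^2/\ln 2}$, which is exactly the claimed right-hand side.

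The main subtlety is precisely this leading constant. Theorem~\ref{Hoeffding’s inequality} is stated in \emph{two-sided} form, so a direct application to $\{X\le\alpha\}\subseteq\{|X-\mathbb{E}[X]|\ge t\}$ yields $2e^{-2N(p-\alpha)^2}=2^{\,1-2N(p-\alpha)^2/\ln 2}$, which is weaker than the stated bound by a factor of $2$. To recover the sharper constant one invokes the \emph{one-sided} Hoeffding bound $\Pr\big(X-\mathbb{E}[X]\le -t\big)\le e^{-2Nt^2/(b-a)^2}$, which follows from the same moment-generating-function (Chernoff) argument applied to a single tail rather than symmetrized over both; omitting the union over the two tails removes the factor of $2$ and matches the stated bound, with strictness inherited from the non-degeneracy of the Bernoulli variables when $0<p<1$. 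Upgrading the quoted two-sided inequality to its one-sided counterpart is the one step I expect to require care.
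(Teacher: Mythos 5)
Your proposal is correct and takes essentially the same route as the paper's own proof: apply Hoeffding's inequality to the lower tail with $t=p-\alpha$ and $b-a=1$ (the paper works with the sum rather than the mean, which is equivalent), then convert the base via $e^{x}=2^{x/\ln 2}$. You are in fact more careful than the paper on the one genuine subtlety: the paper silently writes the one-sided bound $e^{-2N(p-\alpha)^2}$ even though its quoted Hoeffding theorem is the two-sided form carrying a prefactor of $2$, whereas you explicitly note that the sharper constant requires the one-sided Chernoff/MGF argument.
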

\begin{proof}
    For i.i.d. Bernoulli random variables $x_1, \ldots, x_N$ with $\Pr[x_i=1]=p $ for $i \in [N]$, $\mu = \mathbb{E}[X=\sum_{i=1}^N x_i ] = Np $, $b=1, a=0$. Let $t= Np-\alpha N$, $\mu-t =Np-Np + \alpha N=\alpha N$, with Hoeffding's inequality, we have $\Pr \Big( X \leq \alpha  N \Big) < e^{- \frac{2(p-\alpha)^2N^2 }{N}}= e^{ -2(p-\alpha)^2N } $. We then convert the base $e$ into 2: $2^b=e^{b \ln 2} $, set $e^x=2^b$, then $e^x=e^{b \ln 2}$, $b=\frac{x}{\ln2}$, so $e^x=2^{\frac{x}{\ln2}} $. $e^{ -2(p-\alpha)^2N} =2^{-\frac{ 2(p-\alpha)^2N}{\ln 2}} $ 
\end{proof}
We apply corollary~\ref{tail bound Bernoulli}, so for each vertex,
\begin{equation*}
    \Pr( \# \text{good neighbors} \leq \alpha  N ) < 2^{-\frac{ 2(p-\alpha)^2N}{\ln 2}}
\end{equation*}
Apply union bound over all $2^n$ vertex yield the failure probability of \textbf{Smoothness} less than $ 2^{n-\frac{ 2(p-\alpha)^2N}{\ln 2}} $. $N=n^k$, when $k=1$, $ 2^{(1-\frac{ 2(p-\alpha)^2}{\ln 2})n} $ is not necessarily very small; however, once $k>1$, $N$ is much larger than $n$, the failure probability of \textbf{Smoothness} is exponentially small in $n$. Thus, the \textbf{smoothness} condition holds with high probability.

We now address the \textbf{Expansion} condition. When $ \pi(x)$ is sampled from the Porter-Thomas distribution as defined above, the induced graph $G$ over the hypercube $\{0,1\}^n$ has full support; that is, all vertices have nonzero weight and are connected via the Markov chain.
\begin{lemma}
\label{Haar random 5 path}
    For two vertices $x, y$ on the $n$-bit hypercube whose Hamming distance $d(x, y) \leq 3k$, there exist $N$ internally disjoint paths connecting them with length at most $5$.
\end{lemma}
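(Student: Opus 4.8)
The plan is to construct the $N$ paths explicitly and to treat their internal disjointness, rather than their length, as the real content. Write $A = \{q : x_q \neq y_q\}$ for the set of differing coordinates, so $|A| = d(x,y) \le 3k$, and let $B = [n]\setminus A$ be the agreeing coordinates. Each of the $N$ required paths must begin with a distinct edge $x \to x\oplus e_\Delta$, and there are exactly $N$ admissible first steps, namely the nonempty patterns $\Delta$ with $1 \le |\Delta| \le k$; so I would assign one path to each such $\Delta$. For a fixed $\Delta$ the residual distance after the first step is $d(x\oplus e_\Delta, y) \le |\Delta| + d(x,y) \le k + 3k = 4k$, which can be eliminated by flipping the remaining differing bits in at most four blocks of size $\le k$. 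Hence every path has length at most $5$ automatically, and the length bound is never the bottleneck.

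Disjointness I would organize by a two-level encoding. Write $\Delta = \alpha \sqcup \beta$ with $\alpha = \Delta\cap A$ and $\beta = \Delta\cap B$, and route the $\Delta$-path so that its very first step flips all of $\Delta$ (setting the $B$-part to $x_B\oplus e_\beta$), its interior steps flip only $A$-coordinates, and its final step simultaneously completes the last block of $A$-coordinates and restores $\beta$; one checks this still fits in $\le 5$ steps, with the extremal case $d(x,y)=3k$ using all five. Every internal vertex then has $B$-part exactly $x_B \oplus e_\beta$, so paths with different $\beta$ are automatically internally disjoint, reducing the task to independent subproblems indexed by $\beta$, each confined to the $d(x,y)$-dimensional subcube on the coordinates $A$. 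The subproblem carrying the purely-$A$ patterns (those with $\beta$ empty) is the bottleneck, since it holds the most paths. Inside a subproblem an internal vertex is determined by its set $S$ of flipped $A$-coordinates, with $\alpha \subseteq S \subseteq A$, and a collision of two paths is precisely an equality $S = S'$ of these flip-sets.

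To defeat collisions I would stratify flip-sets by size. The first internal vertex of the $\alpha$-path sits at level $|\alpha|$ and is already distinct across $\alpha$; the interior and penultimate vertices climb toward $A$. Choosing the block sizes so that the intermediate flip-set sizes for different weights $|\alpha|$ never coincide isolates the problem to collisions within a single level among equal-weight patterns, where one needs an injective, containment-respecting assignment $\alpha \mapsto S(\alpha)$ into a fixed higher level. Such injections are furnished by the symmetric-chain decomposition (equivalently the normalized matching property) of the Boolean lattice $B_{d(x,y)}$, which yields an order-preserving injection from level $i$ into level $j$ whenever $i \le j \le d(x,y)-i$. The main obstacle, and where the budget of five steps is genuinely spent, is to choose the block sizes so that every intermediate level, for all weights $r \in \{1,\dots,k\}$ and at both ends of the path (the near-empty first vertices and the near-full penultimate vertices used to approach $y$ along distinct neighbors), lands in this injectable range $j \le d(x,y)-i$ simultaneously; in the extremal case $d(x,y)=3k$ this forces the later blocks to be large, and one must still rule out collisions across distinct levels and distinct weights. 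The case $k=1$ recovers the classical hypercube picture: $d(x,y)$ internally disjoint shortest paths of length $d(x,y)$ together with $n-d(x,y)$ detours of length $d(x,y)+2 \le 5$, which is the sanity check that the construction and the constant $5$ are tight.
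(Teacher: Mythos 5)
Your construction indexes the $N$ paths by their first step $\Delta$ and handles two things cleanly: the length bound (one first step plus at most four repair steps) and the disjointness of paths with different borrowed patterns $\beta$, since every internal vertex of the $\beta$-path carries the marker $x_B\oplus e_\beta$. But the core of the lemma --- pairwise internal disjointness of the paths sharing the same $\beta$, in particular the $\sum_{r=1}^{k}\binom{d}{r}$ purely-$A$ paths (writing $d=d(x,y)$) --- is exactly what you leave open, and you say so yourself: you still need to choose block sizes so that every intermediate level lands in the injectable range ``simultaneously,'' and you still need to ``rule out collisions across distinct levels and distinct weights.'' This is not a technicality. The symmetric-chain injection you invoke (level $i$ into level $j$ exists precisely when $i\le j\le d-i$) provably fails at the levels your routing forces: in the extremal case $d=3k$ the penultimate internal vertex must sit at level at least $2k$ (the final block has size at most $k$), and, for instance, with $k=2$, $d=6$ there is no injective containment-respecting map from level $3$ into level $5$, since $\binom{6}{3}=20>6=\binom{6}{5}$. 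The scheme can likely be rescued --- demand injectivity only of the composed map restricted to the image of level $r$ rather than of each full-level map, and interleave levels across weights (weight-$1$ paths through levels $1,3,5$, weight-$2$ paths through levels $2,4$, and so on) so that distinct weights never share a level --- but none of that is in your proposal, and carrying it out uniformly in $k$, $r$, and $\beta\neq\emptyset$ is the actual content of the lemma. So the proposal has a genuine gap at its decisive step.

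For contrast, the paper avoids this bookkeeping entirely with a cyclic-scan construction. For $k=1$, the path $\gamma_i$ flips bit $i$ first and then repairs all remaining mismatches in the fixed cyclic order $(i+1,\dots,n,1,\dots,i-1)$; the interiors of $\gamma_i$ and $\gamma_j$ are automatically distinct because the set of positions flipped along $\gamma_i$ at any intermediate time is determined by the cyclic order starting at $i$. For $k>1$ the same scan is run blockwise over the $N$ possible first flips, with any residual mismatch corrected in the final step. If you want to keep your stratified approach, you must either supply the restricted injections and the cross-weight level-interleaving explicitly, or replace them with an ordering argument of this cyclic type, which is what makes disjointness come for free rather than remaining a combinatorial design problem.
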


\begin{proof}
    The construction of the path follows a similar gist to the proof of Lemma~\ref{graph diam}. We first consider the case $k=1$. For every $i \in [n]$, we write bit position cyclically as $(i, i+1, i+2, \ldots, n, 1, \ldots, i-1) $. We construct a path $\gamma_i$ in the following way: flip the $i$th bit of $x$ to get $x'$, then scan cyclically through all remaining bits ($i+1 \rightarrow n \rightarrow 1 \rightarrow i-1$) and flip those where $x'$ and $y$ differ. As $d(x, y) \leq 3$, $d(x', y) \leq 4$, there are at most 4 flips during the scanning. So the total length of such a path is bounded by $5$. During the scan, every bit position is visited at most once; hence, the vertex sequence has no repetitions, and each path is simple. For any internal vertex $v_i$ of $\gamma_i$, its $i$-th bit satisfies $v_i =x'_i \neq x_i$. For a different path $\gamma_j$, $v_i=x_i$ until the scan moves to bit $i$ by which time other bits are flipped. So an interior vertex cannot belong to both $\gamma_i$ and $\gamma_j $ unless $i=j$, that is, the interiors are pairwise disjoint. We illustrate one such path construction in Figure~\ref{Draw a 3/4 bit hypercube, x=000, y=111, draw all the path with different colors}.

    When $k > 1$, the bit flips needed to reach $y$ can be grouped into at most 3 blocks, each block has size at most $k$. There are $N$ different ways for the first flip. We follow the same gist as the case $k=1$: for each of the $N$ ways to start the first flip, we construct a corresponding path and correct any residual mismatch in the final step. The same disjointness argument applies.
\end{proof}

\begin{figure}[!ht]
    \centering
    \begin{tikzpicture}
    		[cube/.style={thick,black}]
        \def\r{3}
    	\draw[cube] (0,0,0) -- (0,\r,0) -- (\r,\r,0) -- (\r,0,0) -- cycle;
    	\draw[cube] (0,0,\r) -- (0,\r,\r) -- (\r,\r,\r) -- (\r,0,\r) -- cycle;
    
    	\draw[cube] (0,0,0) -- (0,0,\r);
    	\draw[cube] (0,\r,0) -- (0,\r,\r);
    	\draw[cube] (\r,0,0) -- (\r,0,\r);
    	\draw[cube] (\r,\r,0) -- (\r,\r,\r);
    
        \draw[thick, fill=red!45] (0,0,\r) circle (3pt) node[anchor=south east] {$000$};
        \draw[thick, fill=white] (0,\r,\r) circle (3pt) node[anchor=south east] {$001$};
        \draw[thick, fill=white] (\r,0,\r) circle (3pt) node[anchor=south east] {$100$};
        \draw[thick, fill=blue!45] (\r,\r,0) circle (3pt) node[anchor=south east] {$111$};
        \draw[thick, fill=white] (0,\r,0) circle (3pt) node[anchor=south east] {$011$};
        \draw[thick, fill=white] (\r,\r,\r) circle (3pt) node[anchor=south east] {$101$};
        \draw[thick, fill=white] (\r,0,0) circle (3pt) node[anchor=south east] {$110$};
        \draw[thick, fill=white] (0,0,0) circle (3pt) node[anchor=south east] {$010$};
    
        \draw[very thick,-latex, opacity=0.7,bend right, gray!70!orange] (0,0,\r) to (\r,0,\r);
        \draw[very thick,-latex, opacity=0.7,bend right, gray!70!orange] (\r,0,\r) to (\r,0,0);
        \draw[very thick,-latex, opacity=0.7,bend right, gray!70!orange] (\r,0,0) to (\r,\r,0);
    
        \draw[very thick,-latex, opacity=0.7,bend right, gray!70!purple] (0,0,\r) to (0,0,0);
        \draw[very thick,-latex, opacity=0.7,bend right, gray!70!purple] (0,0,0) to (0,\r,0);
        \draw[very thick,-latex, opacity=0.7,bend right, gray!70!purple] (0,\r,0) to (\r,\r,0);
    \end{tikzpicture}
    \caption{Two paths (purple and orange) from $x=000$ to $y=111$ on the $3$-dimensional hypercube.}
    \label{Draw a 3/4 bit hypercube, x=000, y=111, draw all the path with different colors}
\end{figure}
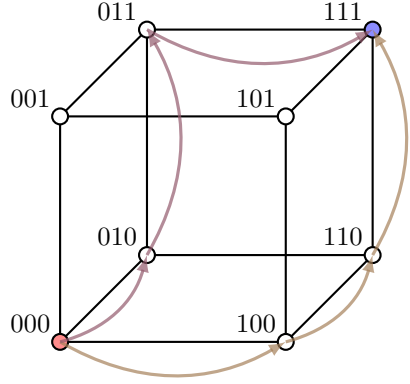
Lemma~\ref{Haar random 5 path} proved the satisfaction of the relaxed version of \textbf{Expansion}.
We now bound the probability that the path has all internal vertices good. $p= \Pr[c_l \leq z(x) \leq c_u] $, we have 
\begin{equation*}
    \Pr(\text{length 5 path internally all good}  )=p^4 =q
\end{equation*}
Among $N$ paths, we apply Corollary~\ref{tail bound Bernoulli}, we have
\begin{equation*}
    \Pr(\#\text{length 5 path internally all good} \leq \alpha N ) \leq 2^{-\frac{ 2(q-\alpha)^2N}{\ln 2}}
\end{equation*}
There are at most $2^{2n}$ pairs of $x,y$. Applying union-bound over all pairs yields the failure probability at most $2^{2n} 2^{-\frac{ 2(q-\alpha)^2N}{\ln 2}} $, which is exponentially small for $k>1$.

A stationary distribution $\pi(x)$ induced by Haar random states satisfies $k$-GLEP with high probability, when $k>1$. Hence, Haar-random states are covered by our fidelity estimation protocol.

\subsection{Approximate \texorpdfstring{$t$}--Designs}
\label{t-design proof}
Haar random states, while mathematically powerful, are highly entangled and physically difficult to prepare. States $t$-design, especially $\epsilon$-approximate state $t$-design, are more practically relevant. 
A unitary $t$-design is an ensemble $\mathcal{U}$ of unitaries such that randomly selecting a unitary $U \sim \mathcal{U}$, we have: for every $k_t \leq t$, any $\rho$,  $\mathbb{E}_{U \in \mathcal{U} }[U^{\otimes k_t} \rho (U^{\dagger})^{\otimes k_t}]=\mathbb{E}_{U \sim {\rm Haar} }[U^{\otimes k_t} \rho(U^{\dagger})^{\otimes k_t}  ] $. Consider state $t$-design $\vert \psi \rangle := U \vert 0 \rangle^{\otimes n}$, $\pi(x)  := \vert \langle x\vert \psi \rangle \vert^2$, for every $k_t \geq 0$, the average projector onto a random pure state~\cite{Haar_intro},
\begin{equation}
\label{moment k}
    \mathcal{M}_{k_t}:= \mathbb{E}_{\vert \psi \rangle \sim {\rm Haar} } [\vert \psi \rangle \langle \psi \vert^{\otimes k_t} ]=\frac{\prod^{(k_t)}_{\rm sym}  }{\binom{D+k_t-1}{k_t}}
\end{equation}
So, we have
\begin{equation}
\label{t-design 1}
    \mathbb{E}_{\mathcal{U} }[\pi(x)^{k_t}] = \langle x \vert^{\otimes k_t} \mathcal{M}_{k_t} \vert x \rangle^{\otimes k_t} = \frac{\langle x \vert^{\otimes k_t} \prod^{(k_t)}_{\rm sym} \vert x \rangle^{\otimes k_t}}{\binom{D+k_t-1}{k_t}} 
\end{equation}
where $\prod^{(k_t)}_{\rm sym}$ is the projector onto the symmetric subspace of the $k_t$-fold tensor product space (subspace of states that remain unchanged under any permutation of the $k_t$ subsystems). $\prod^{(k_t)}_{\rm sym}$ acts as the identity on the one-dimensional subspace spanned by $\vert x \rangle^{\otimes k_t}$, $\langle x \vert^{\otimes k_t} \prod^{(k_t)}_{\rm sym} \vert x \rangle^{\otimes k_t}=1$. $\binom{D+k_t-1}{k_t}=\frac{D(D+1) \cdots (D+k_t-1)}{k_t!}$. When $k_t \ll D$, $D(D+1) \cdots (D+k_t-1)=D^{k_t}(1+\mathcal{O}(k_{t}^2/D)) \approx D^{k_t} $.
Therefore, 
\begin{equation}
\label{t-design}
    \mathbb{E}_{\mathcal{U} }[\pi(x)^{k_t}] =\frac{k_t !}{D^{k_t}}
\end{equation}
\begin{theorem}[Markov’s Inequality]
    For any nonnegative random variable $X$ and any number $a$
    \begin{equation}
    \label{Markov ineq 1}
        \Pr [X \geq a ] \leq \frac{\mathbb{E}[X]}{a}
    \end{equation}
\end{theorem}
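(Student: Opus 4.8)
The plan is to establish Markov's inequality through the elementary pointwise bound on an indicator random variable, which is the most direct route and avoids any measure-theoretic machinery. First I would restrict attention to $a > 0$: the case $a \le 0$ is either vacuous or degenerate, since $\Pr[X \ge a] \le 1$ always holds while the right-hand side $\mathbb{E}[X]/a$ is at least $1$ or ill-defined, so the bound is only informative for positive thresholds. The key observation is the deterministic inequality
\begin{equation*}
    a \cdot \mathbbm{1}[X \ge a] \le X,
\end{equation*}
which I would verify holds for every realization: on the event $\{X \ge a\}$ the left side equals $a$ and the right side is at least $a$, while on $\{X < a\}$ the left side is $0$ and the right side is nonnegative by the standing hypothesis $X \ge 0$.

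Next I would take expectations of both sides and invoke monotonicity together with linearity of expectation. Since $\mathbb{E}\big[\mathbbm{1}[X \ge a]\big] = \Pr[X \ge a]$, this yields
\begin{equation*}
    a \, \Pr[X \ge a] \le \mathbb{E}[X],
\end{equation*}
and dividing through by $a > 0$ delivers the claimed tail bound $\Pr[X \ge a] \le \mathbb{E}[X]/a$.

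There is essentially no obstacle here, as this is a foundational tail inequality with a one-line proof; the only point requiring genuine care is the nonnegativity hypothesis on $X$, which is precisely what guarantees the pointwise inequality on the event $\{X < a\}$. An equivalent layer-cake argument—writing $\mathbb{E}[X] = \int_0^\infty \Pr[X \ge t]\,dt \ge \int_0^a \Pr[X \ge a]\,dt = a\,\Pr[X \ge a]$—would work just as well, but the indicator-function proof is cleaner. In the present context this inequality is to be applied with the nonnegative variable $X = \pi(x)^{k_t}$, combined with the moment computation $\mathbb{E}_{\mathcal{U}}[\pi(x)^{k_t}] = k_t!/D^{k_t}$ from Equation~\ref{t-design}, so as to control the upper tail of $\pi(x)$ and thereby verify the weight bound of $k$-GLEP for approximate $t$-designs.
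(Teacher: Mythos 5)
Your proof is correct: the pointwise bound $a\,\mathbbm{1}[X \geq a] \leq X$ followed by taking expectations is exactly the standard argument, and it is the same indicator-function technique the paper itself deploys when proving its Corollary~\ref{Markov’s Inequality k} (the $k_t$-th moment version), the base theorem being stated there without proof as a textbook fact. Your added care about the sign of $a$ and the closing remark on how the inequality feeds into the $t$-design weight-bound argument via Equation~\ref{t-design} are both consistent with the paper's usage.
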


We first prove the Weight bound. Similar to Section~\ref{Haar random proof} we replace $\vert \supp{(\pi)} \vert$ in Definition~\ref{k-GLEP} and~\ref{Good Vertex} with $2^n$, we $\vert \supp{(\pi)} \vert= 2^n$ with high probability for the $t$-design we prove below. 

\begin{corollary}
\label{Markov’s Inequality k}
    For any nonnegative random variable $X$ and any number $a$
    \begin{equation}
    \label{Markov ineq k}
        \Pr [X \geq a ] \leq \frac{\mathbb{E}[X^{k_t}]}{a^{k_t}}
    \end{equation}
\end{corollary}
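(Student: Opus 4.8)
The plan is to derive this directly from the base Markov inequality (Equation~\ref{Markov ineq 1}) already recorded above, applied not to $X$ itself but to the auxiliary nonnegative random variable $Y := X^{k_t}$. Throughout I take $a > 0$ and $k_t$ a positive integer, matching the regime in which the corollary is invoked in the $t$-design arguments.

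The one structural observation needed is that the map $t \mapsto t^{k_t}$ is monotonically non-decreasing on $[0,\infty)$. Since $X \geq 0$ and $a > 0$, this monotonicity lets me rewrite the tail event as an event for the transformed variable,
\begin{equation*}
    \{X \geq a\} = \{X^{k_t} \geq a^{k_t}\},
\end{equation*}
so that $\Pr[X \geq a] = \Pr[X^{k_t} \geq a^{k_t}]$. This equivalence of events is the crux of the argument; everything else is a mechanical substitution.

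Next I would note that $Y = X^{k_t}$ is itself a nonnegative random variable, so the hypotheses of Markov's inequality are satisfied. Applying Equation~\ref{Markov ineq 1} to $Y$ with threshold $a^{k_t} > 0$ gives
\begin{equation*}
    \Pr[Y \geq a^{k_t}] \leq \frac{\mathbb{E}[Y]}{a^{k_t}} = \frac{\mathbb{E}[X^{k_t}]}{a^{k_t}}.
\end{equation*}
Chaining this with the event identity from the previous step yields $\Pr[X \geq a] \leq \mathbb{E}[X^{k_t}]/a^{k_t}$, which is exactly Equation~\ref{Markov ineq k}.

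There is no genuine obstacle here: the result is a one-line consequence of Markov's inequality once the power map is recognized as monotone. The only point requiring mild care is the admissible range of $a$ --- for $a \leq 0$ the bound is either vacuous or the denominator degenerates, so the statement is understood for $a > 0$, consistent with how the base inequality (Equation~\ref{Markov ineq 1}) is stated and how the corollary is subsequently used with $a$ a positive threshold of the form $c\,2^{-n}$ in the weight-bound estimates.
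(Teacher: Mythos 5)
Your proof is correct, and it coincides with the paper's own argument: the paper proves this corollary first via pointwise domination of the indicator $\mathbf{1}_{\{X \geq a\}}$ by $X^{k_t}/a^{k_t}$, and then explicitly notes your exact route --- substituting $Y := X^{k_t}$ and the threshold $a^{k_t}$ into the base Markov inequality --- as an equivalent alternative. No gap; your handling of the monotonicity of the power map and the restriction to $a>0$ is the right (and only) point of care.
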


\begin{proof}
    Define the indicator of the event we care about $\mathbf{1}_{ \{X \geq a \}} =\begin{cases}
        1 & \text{if} \ \ X \geq a \\
        0 & \text{otherwise} 
    \end{cases}$. For every realisation of $X \geq 0$, if $X \geq a$, we have $\mathbf{1}_{ \{X \geq a \}}=1 $ and $X^{k_t} \geq a^{k_t}$, the ratio $\frac{X^{k_t}}{a^{k_t}}\geq 1$, so we have $\mathbf{1}_{ \{X \geq a \}} \leq \frac{X^{k_t}}{a^{k_t}} $. Take the expectations, we have $\Pr[X \geq a]=\mathbb{E} [\frac{X^{k_t}}{a^{k_t}}]=\frac{\mathbb{E}[X^k] }{a^k} $. 

    Another way to prove $Y:= X^{k_t}$, substitute $X$ in Equation~\ref{Markov ineq 1} with $Y$, $a$ with $a^{k_t}$, we have Equation~\ref{Markov ineq k}.
\end{proof}
With Equation~\ref{Markov ineq k} and~\ref{t-design}, set $X= \pi(x)$, $a= c_u' 2^{-n} $, $k_t=t$, we obtain:
\begin{equation}
\label{Markov’s Inequality upper bound}
    \Pr[ \pi(x) \geq c_u' 2^{-n} ] \leq \frac{t!}{D^t (c_u' 2^{-n})^t } =\frac{t!}{(c_u')^t}
\end{equation}
We set the failure probability of the Weight bound as $\delta=\frac{t!}{(c_u')^t}$, so we have
\begin{equation}
\label{c_u' 1}
    c_u'= (\frac{1}{\delta})^{1/t} (t!)^{1/t}
\end{equation}
Stirling approximation gives: $t!=\sqrt{2\pi t} (\frac{t}{e})^t \Big (1+ \mathcal{O}(\frac{1}{t}) \Big ) $, $(\sqrt{2\pi t})^{1/t}=(2\pi t)^{1/2t }=e^{\frac{\ln(2 \pi t) }{2t}}=1 + \mathcal{O}(\frac{\ln t}{t}) $, take the $1/t$-th power of $t!$ yields 
\begin{equation}
\label{t}
    (t!)^{1/t}=\Big (\frac{t}{e } \Big ) \Big (2 \pi t \Big)^{1/(2t)} \Big (1 + \mathcal{O}(\frac{1}{t}) \Big) =\frac{t}{e} \Big( 1 + \mathcal{O }(\frac{\ln t}{t}) \Big )
\end{equation}
We need an exponentially small failure probability, that is, $\delta=2^{-c'n}$, $c'>0$ is a constant. Take the $1/t$-th power of $\delta^{-1}$ yields
\begin{equation}
\label{delta}
    \delta^{-1/t}=2^{c'\frac{n}{t} }
\end{equation}
Substitute the corresponding terms in Equation~\ref{c_u' 1} with equations~\ref{t} and~\ref{delta}, we have 
\begin{equation}
\label{c_u' 2}
    c_u'= 2^{c'\frac{n}{t} } \Big (\frac{t}{e} \Big) \Big( 1 + \mathcal{O }(\frac{\ln t}{t}) \Big )
\end{equation}
The \textbf{Weight bound} condition is meant to exclude exponentially large weight, so, $ 2^{c'\frac{n}{t} }$ term in Equation~\ref{c_u' 2} is undesirable. To avoid exponentially large weight, $t$ must scale linearly with $n$, namely $t=\Theta(n)$, then $ 2^{c'\frac{n}{t} }$ becomes a constant $C_t$. Therefore, we have $c_u'= \Theta(t)$, $t=\Theta(n)$. 

The proof of Smoothness condition using Corollary~\ref{tail bound Bernoulli} in Section~\ref{Haar random proof} also applies to state $t$-design: when $m>1$, the failure
probability of \textbf{Smoothness} is exponentially small. The Smoothness condition requires that $ c_l 2^{-n} \leq \pi(x) \leq  c_u 2^{-n}$ satisfies with reasonably high probability: the failure probability needs to be polynomially small, i.e. $\mathcal{O}(n^{-\beta})$, $\beta > 1$, so that over the $\mathcal{O}(n^{k})$ neighbours, there are at least $\Omega(1)$ fraction of good neighbours for every vertex. We derive the requirements for $c_u, c_u$ to achieve this polynomially small failure probability.
For the upper bound, we apply Markov’s Inequality similarly to Equation~\ref{Markov’s Inequality upper bound}, then use Stirling approximation, we have
\begin{equation}
\label{Markov’s Inequality smooth upper bound}
    \Pr[ \pi(x) \geq c_u 2^{-n} ] \leq \frac{t!}{(c_u)^t} \approx \Big (\frac{t}{e } \Big )^t \Big (\frac{1}{c_u} \Big )^t \Big (1 + o(t) \Big ) 
\end{equation} 
Solving for, $\Big (\frac{t}{e } \Big )^t \Big (\frac{1}{c_u} \Big )^t \Big (1 + o(t) \Big ) = \mathcal{O}(n^{-\beta}) $: take the $1/t$-th power of both sides, we have $c_u \approx \frac{t}{e}n^{\beta/t} $. 
$n^{\beta/t}=e^{\frac{\beta \ln n}{t}} \approx 1+ \frac{\beta \ln n}{t} $, so $c_u=\Big (\frac{t}{e } \Big )\Big (1+ \frac{\beta \ln n}{t} \Big ) $, namely $c_u = \Theta( \ln n)$. (For the choice of minimal $c_u$, we do not consider $t=\Theta(n)$, but consider the dependence on $n$ because this condition itself does not require $t=\Theta(n)$. However, the $k$-GLEP actually require $t=\Theta(n)$, then $c_u = \Theta(t)$.)

To lower-bound $\pi(x)$, we apply Paley–Zygmund inequality.
\begin{theorem}[Paley–Zygmund inequality]
    For any nonnegative random variable $X$ and any $0 < \theta <1$, 
    \begin{equation*}
        \Pr[ X> \theta \mathbb{E}[X] ] \geq (1-\theta)^2 \frac{(\mathbb{E}[X])^2}{\mathbb{E}[X^2]}
    \end{equation*}
\end{theorem}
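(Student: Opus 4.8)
The plan is to derive this lower-tail bound from the standard first-moment-plus-Cauchy--Schwarz argument. The organizing idea is that, since $X \geq 0$, the expectation $\mathbb{E}[X]$ can be split over the event $\{X > \theta\mathbb{E}[X]\}$ and its complement, and the contribution of the ``small'' part is too meager to account for all of $\mathbb{E}[X]$; the deficit must therefore be supplied by the ``large'' part, which in turn forces $\Pr[X > \theta\mathbb{E}[X]]$ to be bounded below.

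Concretely, I would first write
\begin{equation*}
    \mathbb{E}[X] = \mathbb{E}\big[X\,\mathbf{1}_{\{X \leq \theta\mathbb{E}[X]\}}\big] + \mathbb{E}\big[X\,\mathbf{1}_{\{X > \theta\mathbb{E}[X]\}}\big],
\end{equation*}
and bound the first summand using the pointwise inequality $X \leq \theta\mathbb{E}[X]$ valid on its event, giving $\mathbb{E}[X\,\mathbf{1}_{\{X \leq \theta\mathbb{E}[X]\}}] \leq \theta\mathbb{E}[X]$. Next I would bound the second summand by Cauchy--Schwarz applied to the factorization $X\cdot\mathbf{1}_{\{X>\theta\mathbb{E}[X]\}}$, using that an indicator equals its own square, to obtain $\mathbb{E}[X\,\mathbf{1}_{\{X>\theta\mathbb{E}[X]\}}] \leq \sqrt{\mathbb{E}[X^2]}\,\sqrt{\Pr[X>\theta\mathbb{E}[X]]}$.

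Substituting both estimates yields $\mathbb{E}[X] \leq \theta\mathbb{E}[X] + \sqrt{\mathbb{E}[X^2]}\,\sqrt{\Pr[X>\theta\mathbb{E}[X]]}$; moving the $\theta\mathbb{E}[X]$ term to the left, dividing by $\sqrt{\mathbb{E}[X^2]}$, and squaring both sides then isolates $\Pr[X>\theta\mathbb{E}[X]] \geq (1-\theta)^2(\mathbb{E}[X])^2/\mathbb{E}[X^2]$, which is exactly the claim.

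The only genuine choice --- and the step I would flag as the crux --- is to estimate the large-value contribution by Cauchy--Schwarz rather than by a cruder bound; this is what introduces the second moment $\mathbb{E}[X^2]$ into the denominator and converts the elementary decomposition into a useful lower bound on the tail probability. Finiteness of $\mathbb{E}[X^2]$ (guaranteed in our application by the moment formula $\mathbb{E}[\pi(x)^{k_t}] = k_t!/D^{k_t}$ with $k_t=2$) ensures every quantity above is well defined, so no further analytic care is needed.
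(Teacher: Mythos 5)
Your proof is correct: it is the standard derivation of the Paley--Zygmund inequality, splitting $\mathbb{E}[X]$ over the event $\{X > \theta\mathbb{E}[X]\}$ and its complement, bounding the small part pointwise and the large part by Cauchy--Schwarz, then rearranging and squaring. Note that the paper itself offers no proof of this statement --- it is quoted as a classical inequality and immediately applied to lower-bound $\pi(x)$ in the $t$-design analysis --- so your argument simply supplies the canonical proof the paper leaves implicit; the only pedantic addition one might make is that dividing by $\sqrt{\mathbb{E}[X^2]}$ requires $\mathbb{E}[X^2]>0$, i.e.\ $X$ not almost surely zero, which is harmless in the paper's application.
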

According to Equation~\ref{t-design}, $ \mathbb{E}[\pi(x)]=2^{-n} $, $\mathbb{E}[\pi(x)^2]=\frac{2!}{2^{2n} }$. Set $\theta=c_l$, the failure probability is 
\begin{equation*}
    \Pr[\pi(x) \leq c_l 2^{-n} ] < 1-(1-c_l)^2 = 2c_l-c_l^2
\end{equation*}
$c_l =\mathcal{O}(n^{-\beta'}) $ will yield the failure probability in $\mathcal{O}(n^{-\beta}) $. 

Therefore, the probability $p= \Pr[c_l 2^{-n} \leq \pi(x) \leq c_u 2^{-n}] $ is reasonably high, when $c_u = \Theta(\ln n)$ and $c_l = \mathcal{O}(n^{-\beta'})$. Then applying Corollary~\ref{tail bound Bernoulli} as in Section~\ref{Haar random proof} proves the \textbf{Smoothness} condition holds with failure probability of $2^{-\frac{ 2(p-\alpha)^2N}{\ln 2}}$ which is exponentially small when $k > 1$.

The \textbf{Expansion} condition follows as in Section~\ref{Haar random proof}. For $t = \Theta(n)$, the graph $G$ induced by state $t$-design has full support over the hypercube with high probability.

State $t$-design satisfies $k$-GLEP with high probability, when $t=\Theta(n)$, $m>1$.

The extension to $\epsilon$-approximate $t$-designs is straightforward; we omitted it here for conciseness and leave it to the dedicated readers. See \cite[Lemma 9]{Brand_o_2016} and \cite[Lemma 13]{Mann2017OnTC} for relevant bounds on probability weights.

\subsection{States Prepared by Random Low-Depth Circuits}
\label{random low-depth circuits}
For current quantum devices, random quantum circuits are among the most experimentally accessible applications, and they have been used in benchmarking or demonstrating quantum supremacy. A natural question is: do states prepared with such circuits of arbitrary depth satisfy $k$-GLEP, and thus fall within the scope of our protocol?

Recent progress shows that $\epsilon$-approximate unitary $t$-designs can be generated in depth $\mathcal{O}(\log (n)$ by gluing local designs prepared on polylogarithmic-size patches, with the dependences on $t$ and $\epsilon$ inherited (polylogarithmically) from the local design primitives used on the patches~\cite{schuster2025randomunitariesextremelylow}. With long-range two-qubit gates and $\mathcal{O}(nt)$ ancilla qubits, one can achieve $\epsilon$-approximate $t$-designs in depth $\mathcal{O}(\log t \cdot \log \log (nt/ \epsilon)) $~\cite{cui2025unitarydesignsnearlyoptimal}. Since we have proved that $\epsilon$-approximate state $t$-designs satisfy $k$-GLEP, random-circuit depth sufficient to realize such designs (e.g., for $t= \Theta(n)$) places the resulting states within the scope of our protocol.

However, if the random circuits only exhibit anticoncentration, that is, their output distribution is spread relatively evenly across outcomes without being overly concentrated on a small subset, do they fall into the scope of the protocol? Anticoncentration is weaker than $t$-design and requires lower depths: for generic random two-local circuits, anticoncentration requires $\Theta(\log n) $ depth~\cite{PRXQuantum.3.010333}. We believe that states prepared by random circuits exhibiting anti-concentration also lie within the scope of our protocol, but we leave a rigorous proof of this as an open question.

\subsection{Ground State of Gapped Local Hamiltonian}
\label{Hamiltonian}
A Hamiltonian is referred to as \emph{$k$-local} if it can be expressed as a sum of terms, each acting on a bounded (constant)  number $k$ of qubits.
When the off-diagonal components of a Hamiltonian, in the computational basis, are real-valued and non-positive, the Hamiltonian is said to be \emph{stoquastic}.
A Hamiltonian is called \emph{gapped}, if the spectral gap $\gamma$, the difference between the ground state energy and the first excited state energy, is bounded below, away from zero. For a $k$-local Hamiltonian $H$, we denote the \emph{ground state} and \emph{ground state energy} as $\ket{\phi_0}$ and $\lambda_0$ respectively.
We will additionally assume that $\norm{H} \leq {\rm poly}(n)$.

It is well-known that estimating the ground state energy of a general local Hamiltonian is QMA-complete~\cite{KSV02}, even extending to physically relevant systems~\cite{PM17}.
Though it has been demonstrated that when provided with additional input, a state having good overlap with the ground state, the ground state energy estimate problem becomes tractable for quantum computers~\cite{waite2025physically, waite2025guided, CKFH+23, GLG22}.
While it remains an open problem to develop constructive algorithms that produce such states, authenticating their form is equally important.
In this section, we demonstrate how Protocol~\ref{prot:fidelity} can be extended (beyond stoquastic Hamiltonians) for the fidelity certification between a produced lab state and the ground state of a general gapped local Hamiltonian.

Huang, Preskill, and Soleimanifar~\cite{10756060} applied the work of Bravyi, Gosset, and Liu~\cite{PhysRevLett.128.220503} that proved the mixing time of discrete-time Markov chains (DTMCs) constructed from the unique ground state of a local stoquastic Hamiltonian with gap $\gamma$ and sensitivity $s$ is bounded by 
\begin{equation*}
    \tau \leq \frac{2Ns}{\gamma}
\end{equation*}
where $s=\max_{x \neq y}\frac{\vert \langle y \vert H \vert x \rangle \langle x \vert \phi_0 \rangle \vert}{\langle y \vert \phi_0 \rangle \vert} $ is the sensitivity parameter for a given Hamiltonian $H$ with ground state $\vert \phi_0 \rangle$, $\gamma \geq \frac{1}{\textnormal{poly}}$ is the spectral gap, and $N= \sum_{i=1}^k \binom{n}{i} = \mathcal{O}(n^k)$.

Bravyi, Gosset, and Liu~\cite{PhysRevLett.128.220503} showed that, for stoquastic Hamiltonians, $s \leq \max_y \langle y \vert H \vert y \rangle-E_0 $, $E_0$ is the ground state energy of the system, so with $s$ and $\frac{1}{\gamma}$ scaling polynomially in $n$, the mixing time is $\mathcal{O}(\textnormal{poly}(n))$. We note that the upper bound on the sensitivity parameter depends both on the Hamiltonian norm and the ratio of the ground state's amplitudes; the result, therefore, does not hold for arbitrary ground states. Huang, Preskill, and Soleimanifar used this bound on mixing time $\tau$ to show that their protocol applies to the unique ground state of a local stoquastic Hamiltonian, as long as the gap $\frac{1}{\gamma}$ and sensitivity $s$ are polynomially bounded.

Bravyi \textit{et. al} used the fixed-node Hamiltonian construction~\cite{Haaf1995proof} to transform a general local Hamiltonian $H$ into a stoquastic Hamiltonian $F$, while preserving ground state properties~\cite{Bravyi_2023}. They proved the ground states of $F$ and $H$ have ground states related via the well-known isometry $i \equiv \big(\begin{smallmatrix}0 & -1\\1 & 0\end{smallmatrix}\big)$; an additional cost is an increase to the locality~\cite{waite2025succinct}.
Specifically, let $\ket{\phi_0} = \sum_x (a_x + i b_x)\ket{x}$ be the ground state of $H$, then the ground state of the corresponding stoquastic Hamiltonian $F$ is defined as $\ket{\psi_0} = \sum_x a_x\ket{x}\ket{0} + b_x\ket{x}\ket{1}$. It follows that the ground state energy of $F$ is $\lambda_0$ and the spectral gap is at least $\gamma$. We refer to Ref.~\cite{Bravyi_2023} for the detailed fixed-node Hamiltonian construction of $F$ and the proof. Bravyi~\textit{et. al} use a continuous-time Markov chain (CTMC) approach in conjunction with Gillespie’s algorithm to prove that the mixing time is proportional to $\frac{1}{\gamma}$. We refer to Theorem 1 and Theorem 2 in Ref.~\cite{Bravyi_2023} for the details and proofs. That is, with $\frac{1}{\gamma}$ and $\Vert H \Vert$ polynomially bounded, the mixing time of the Markov chain corresponds to the unique ground state $\vert \phi_0 \rangle$ is polynomially bounded and $\vert \phi_0 \rangle$ is the ground state of a general gapped $k$-local Hamiltonian $H$ that is not necessarily stoquastic. Protocol~\ref{prot:fidelity} does not require running the Markov chain: the efficiency and accuracy of Protocol~\ref{prot:fidelity} only require the mixing time to be polynomially bound. We directly apply the result of Bravyi \textit{et. al}~\cite{Bravyi_2023} to extend the scope of Protocol~\ref{prot:fidelity} to the ground states of more general gapped local Hamiltonians that are not necessarily stoquastic, given that $\frac{1}{\gamma}$ and $\Vert H \Vert$ are polynomially bounded.

We do not prove with $k$-GLEP for this example because there exists sufficient research \cite{Bravyi_2023, PhysRevLett.128.220503} to directly bound the mixing time, which is all we need to guarantee the performance of the fidelity estimation protocol.

\subsection{W States \& Dicke states}
\label{W states}
W states are entangled quantum states characterized by the equal superposition of all basis states with a single excitation, 
\begin{equation*}
    \vert W_n \rangle = \frac{1}{\sqrt{n}} \sum_{\substack{x \in \{0,1\}^n \\ |x|_1 = 1}} \vert x \rangle,
\end{equation*}
where $|x|_1$ refers to the $L^1$-norm (Hamming weight). These states have physical relevance~\cite{weng2025high} and are robust against particle loss; bipartite entanglement is retained even when one qubit is traced out. W states demonstrate that $k$ must be larger than 1 for our protocol to be applicable because the neighboring vertices $x, y \in \supp{(\pi)}$ have Hamming distance $d(x, y)=2$.

The weight bound and smoothness conditions are straightforward once we write out $\pi(x)  := \vert \langle x\vert W_n \rangle \vert^2$: 
\begin{equation*}
    \pi (x) =\begin{cases}
        \frac{1}{n}, &\text{if }  \vert x \vert_1=1, \\
        0, &\text{otherwise }
    \end{cases}
\end{equation*}
For W states, $\vert \supp{(\pi)} \vert= n$. For every vertex $x$ of the graph induced by W state, $\pi(x) = \frac{1}{n}=\frac{1}{\vert \supp{(\pi)} \vert} $, so for $c'_u \geq 1$, the \textbf{Weight bound} holds.

For any $0 < c_l \leq 1$ and $c_u \geq 1$, $\frac{c_l}{\vert \supp{(\pi)} \vert} \leq \pi(x) \leq \frac{c_u}{\vert \supp{(\pi)} \vert} $, that is, all the vertices in the induced graph are good, so \textbf{Smoothness} condition holds. We note that for such a graph, any two vertices (all the neighbors) are reachable with 2 bit-flip, so $k=2$. W states are typical examples that $k=1$ cannot suffice. 

The path construction is similar to that in the proof of Lemma~\ref{Haar random 5 path} except that the support is only on the $n$ of the vertices on the hypercube and $k=2$, that is, we flip two bits at each step. And the $n$ vertices are all good vertices, so the \textbf{Expansion} condition holds. 

We can further generalize the proof to $n$-qubit Dicke state with $k$ excitation
\begin{equation*}
    \vert D_k^n \rangle = \frac{1}{\sqrt{\binom{n}{k}}} \sum_{\substack{x \in \{0,1\}^n \\ |x|_1 = k}} \vert x \rangle,
\end{equation*}
The \textbf{Weight bound} and \textbf{Smoothness} conditions are straightforward, similar to the W states we discussed above. And the Expansion condition can also be proved in the same way. The same arguments apply when the state is
\begin{equation*}
    \vert \psi_k^n \rangle =  \sum_{\substack{x \in \{0,1\}^n \\ |x|_1 = k}} \alpha_x \vert x \rangle,
\end{equation*}  
as long as the weights $\alpha_x$ (or $\pi(x):= \vert \alpha_x \vert^2$) are approximately uniformly distributed.

We note that the graph structure for these states is no longer the full hypercube, like Haar or $t$-designs, but rather the Johnson graph $J(n, k)$, whose vertices are the $n$-bit string of Hamming weight $k$, and two vertices are connected if and only if they differ by exchanging a single 1 and a single 0 (i.e. Hamming distance 2) Such graph structure is known to have expansion properties.

\section{Empirical Check}
\label{Empirical check}
This section analyzes the sample complexity of Protocol~\ref{Prot: Empirical Check} and clarifies the procedure. We first bound the sample size parameters $S$, $M$, and $R$ via the Hoeffding inequality and a union bound over the three condition testings. We then clarify the ``Make samples'' step, and discuss the parameter $C$. $S$, $M$, $R$ and $C$ together give the sample complexity of Protocol~\ref{Prot: Empirical Check}. We also explain the ``Estimate $\vert \supp{\pi} \vert$'' step in detail. Throughout, $\epsilon, \delta \in (0, 1)$ denote accuracy and confidence parameters.

Let $p_{\rm weight} $, $p_{\rm smooth} $, $p_{\rm expand} $ be the true acceptance probabilities for the weight bound, smoothness, and expansion tests,  $\hat{p}_{\rm weight} $, $\hat{p}_{\rm smooth} $, $\hat{p}_{\rm expand} $ be the empirical acceptance fractions over the $S$ seed samples.
We use $\hat{p}_{\rm weight} $ as an example for the complexity analysis of $S$.
We determine that the weight bound is satisfied if $p_{\rm weight} \geq 1-\epsilon$, but we can only observe $\hat{p}_{\rm weight}$, if the sampling error $\vert \hat{p}_{\rm weight} - p_{\rm weight} \vert \leq \frac{\epsilon}{2} $, then $\hat{p}_{\rm weight} \geq 1-\frac{\epsilon}{2}$ ensures $p_{\rm weight} \geq 1-\epsilon$. For each condition testing, if sample $x_i$ is accepted is Bernoulli, by Hoeffding inequality in Theorem~\ref{Hoeffding’s inequality}, 
\begin{equation*}
    \Pr [\vert \hat{p}_{\rm weight}-p_{\rm weight} \vert > \frac{\epsilon}{2} ] \leq 2 \exp \big (-2S \cdot(\frac{\epsilon}{2})^2 \big )=2 \exp \Big (-\frac{S \epsilon^2}{2} \Big )
\end{equation*}
By a union bound over the three conditions, we require $2 \exp \Big (-\frac{S \epsilon^2}{2} \Big ) \leq \frac{\delta}{3}$, thus
\begin{equation*}
    S \geq \frac{2}{\epsilon^2} \ln \Big (\frac{6}{\delta} \Big)
\end{equation*}
$M$ and $R$ can be bounded with the same logic, but note that a union applies over the $S$ sample, so they are bounded as 
\begin{equation*}
    M \geq \frac{2}{\epsilon^2} \ln \Big (\frac{6S}{\delta} \Big), \ \ R \geq \frac{2}{\epsilon^2} \ln \Big (\frac{6S}{\delta} \Big)
\end{equation*}
For the smoothness condition, we can visit all neighbors; in that case, $M=N= \mathcal{O}(n^k)$. For the expansion condition, when we check/reconstruct the short path, we need to visit $\mathcal{O}(n^k)$ vertices. Checking if there are ``$\alpha N$ pairwise internally-disjoint paths of length at most 5 with all good vertices'' is not always efficient, so we also propose a relaxed but sufficient check: The Expansion condition can be relaxed, as we noted in Section~\ref{Conditions for Fast Mixing}, we can thus search to find one short good-only path, which can be done using randomized breadth-first searches (BFS) restricted to a local $k$ neighborhood and bounded depth $d=5$. Each BFS trial require $\textnormal{poly}(N)$ queries, $N \approx n^k$.

We now clarify the ``Make samples'' step in Protocol~\ref{Prot: Empirical Check}. If we have physical (quantum) copies of $\vert \psi \rangle$, each measurement on the computational basis directly produces an exact sample from $\pi$, we do not need the ``Make samples'' step. Nevertheless, assuming having physical copies is not practical, we only assume query access to $\pi$, which is directly available via the query access to the amplitude $\Psi(x) = \langle x | \psi \rangle$ that is assumed by Protocol~\ref{prot:fidelity}. 

\begin{prop}[Rejection sampling]
\label{Rejection sampling}
Let $q(x)=2^{-n} $ on $\{0, 1 \}^n$ and let $2^n \geq C \geq \sup_x \frac{\pi(x)}{q(x)}$. Draw $x_U \sim q$ and accept it with probability $\frac{\pi(x_U)}{C q(x_U)}$. Then the distribution of accepted samples is exactly $\pi$.
\end{prop}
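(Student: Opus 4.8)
The plan is to carry out the standard single-trial analysis for rejection sampling. First I would confirm that the prescribed acceptance probability is well-defined as a probability: since $q(x) = 2^{-n} > 0$ for every $x \in \{0,1\}^n$, there is no division-by-zero issue, and the hypothesis $C \geq \sup_x \frac{\pi(x)}{q(x)}$ guarantees $\frac{\pi(x_U)}{C\, q(x_U)} \in [0,1]$ for all outcomes, so the accept/reject step is a legitimate Bernoulli trial.

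Next I would compute the joint probability that a single uniform draw yields a specific string $x$ and is accepted. This factors as the draw probability times the conditional acceptance probability, i.e. $q(x) \cdot \frac{\pi(x)}{C\, q(x)} = \frac{\pi(x)}{C}$, where the $q(x)$ factors cancel. Summing over all $x$ and invoking the normalization $\sum_x \pi(x) = 1$ gives the total per-trial acceptance probability $\frac{1}{C}$.

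I would then apply the definition of conditional probability to recover the law of an accepted sample: dividing the joint probability $\frac{\pi(x)}{C}$ by the marginal acceptance probability $\frac{1}{C}$ yields exactly $\pi(x)$. Because the draws are i.i.d.\ and the procedure repeats independently until acceptance, each accepted string is distributed according to $\pi$; hence the $S$ accepted seeds $x_1, \ldots, x_S$ are i.i.d.\ samples from $\pi$, as claimed.

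The argument involves no genuine obstacle; it is a textbook computation. The only point requiring care is the validity condition on $C$: one must check that $C$ upper-bounds $\sup_x \frac{\pi(x)}{q(x)} = 2^n \sup_x \pi(x)$ so the acceptance probabilities never exceed one, which is precisely why the input constraint $2^n \geq C \geq \sup_x \frac{\pi(x)}{2^{-n}}$ is imposed in Protocol~\ref{Prot: Empirical Check}. A secondary, purely practical remark (not needed for correctness) is that the expected number of uniform draws per accepted sample equals $C$, so a tighter choice of $C$ improves efficiency; this motivates choosing $C$ close to $2^n \sup_x \pi(x)$ whenever a bound on $\sup_x \pi(x)$ is available.
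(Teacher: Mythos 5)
Your proposal is correct and follows essentially the same route as the paper's proof: compute the joint probability of drawing $x$ and accepting, which equals $\pi(x)/C$ after the $q(x)$ factors cancel, note the marginal acceptance probability is $1/C$, and condition to recover $\pi$ exactly. Your added checks---that $\frac{\pi(x_U)}{C\,q(x_U)} \in [0,1]$ under the stated hypothesis on $C$, and that the expected number of draws per accepted sample is $C$---are sound observations the paper leaves implicit (the latter appears separately in the paper's complexity discussion), but they do not change the argument.
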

\begin{proof}
    For any event $E \subseteq \{0, 1 \}^n$
\begin{equation*}
    \Pr[x_U \in E \text{ and accepted} ]=\sum_{x \in E} q(x) \cdot \frac{\pi(x)}{C q(x) } = \frac{1}{C} \sum_{x \in E} \pi(x)
\end{equation*}
The unconditional acceptance probability per $x_U$ is $p:= \frac{1}{C}$, so
\begin{equation*}
    \Pr[ \text{accepted } x_U \in E \vert \text{ accepted}] =\sum_{x \in E} \pi(x)
\end{equation*}
i.e., the accepted sample is exactly $\pi$-distributed.
\end{proof}
Apply Proposition~\ref{Rejection sampling}, with only query access to $\pi$, we can make samples that represent $\pi$ via rejection sampling \cite{devroye2006nonuniform, von195various}. As the acceptance rate is $\frac{1}{C}$, this make samples step requires $\mathcal{O}(CS)$ samples (queries). 

In protocol~\ref{Prot: Empirical Check}, we assume $C$ is given, similar to the rejection sampling in~\cite{ von195various}. $C$ can be exponential in $n$, e.g., $C=2^n$ for a basis state; $2^{n-1} $ for GHZ; $\frac{2^n}{n}$ for W states.
For more uniform-like distribution, $C= \textnormal{poly}(n)$, e.g., for Haar random states, $C=\Theta(n)$ with high probability; for state $t$-design, $C= \mathcal{O}(n)$ with high probability. We note that the basis state and GHZ state are hard examples that are beyond the scope of protocol~\ref{prot:fidelity}. For W states, Dicke states or Hamming weight states, though $C$ is exponential in $n$, thus Protocol~\ref{Prot: Empirical Check} is not efficient, directly verifying if a target state belongs to such ensembles is feasible. In fact, for states with such small support ($\textnormal{poly}(n)$), they only fall into the scope of Protocol~\ref{prot:fidelity} if the weights are (almost) evenly distributed. In short, if a given $C$ is exponentially big, we can directly declare \textsc{Fail} for the ``Make samples'' step and verify if the target state has weights that are (almost) evenly distributed on a polynomial support (i.e., belong to the ensembles of $W$ states, Dicke states or Hamming weight states) and decide if Protocol~\ref{prot:fidelity} is applicable.

If $C$ is not given, we can find a $C$ adaptively with the following procedure, similar to standard Monte Carlo:
\begin{itemize}
    \item Set an initial $C_0$ (e.g., $C_0=c_u'$).
    \item For $x_U \sim \textnormal{Unif} ( \{0, 1 \}^n) $, if $ \frac{\pi(x_U)}{C 2^{-n}} \leq 1 $, accept $x_U$ with probability $\frac{\pi(x_U)}{C 2^{-n}}$; if $ \frac{\pi(x_U)}{C 2^{-n}} \geq 1 $, set $C= \max \{ 2C, \frac{\pi(x_U)}{ 2^{-n}}\} $, discard all previous accepted seed samples, and restart the sampler. After at most $\ceil{\log_2 (\frac{C_{\max} }{C_0}) }$ increases, $C$ exceeds $C_{\max}:= \max_{x_U} \frac{\pi(x_U)}{ 2^{-n}}$, the sampler runs with a valid $C$; all seed samples collected are i.i.d. $\sim \pi$
\end{itemize}

Another option to make seed samples is simulating the Markov chain induced by $\pi$, but we prefer to avoid this because Protocol~\ref{prot:fidelity} essentially does not require running the Markov chain, and in some sense Protocol~\ref{Prot: Empirical Check} tries to check if the Markov chain is guaranteed to be fast mixing.

We now clarify the ``Estimate  $\vert \supp{\pi} \vert$'' step in Protocol~\ref{Prot: Empirical Check}: if $\vert \supp{\pi} \vert$ is not known a priori, we estimate it with the seed samples we make in the first step. Estimating the support size is normally equivalent to a population estimation problem, which can be approached via the reverse birthday paradox~\cite{4557157}: sample until $r$ repeated outcomes are seen, and then estimate the support size as roughly $\frac{S_p^2}{2r}$, where $S_p$ is the number of samples needed to observe $r$ repeats. The number of samples required for an accurate estimation is the square root of the population size (i.e., support size). Detailed proof and performance guarantee can be found in Ref.~\cite{4557157}. 

With $S$ samples, we can estimate an ``effective support'' using the collision estimator $\hat{C}_2$ with the same gist of reverse birthday paradox~\cite{4557157}. The probability that two independent points $x_i, x_j$ are equal is $C_2= \Pr[x_i= x_j]=\sum_x \pi(x)^2 $. If $\pi$ is (almost) uniform on its support, each $\pi(x) \approx \frac{1}{\vert \supp{\pi} \vert}$, it follows that $\sum_x \pi(x)^2 \approx \vert \supp{\pi} \vert \cdot \frac{1}{\vert \supp{\pi} \vert^2} = \frac{1}{\vert \supp{\pi} \vert}$. If $\pi$ is non-uniform, this does not exactly give the support size, but still reflects an ``effective support'', which is the part that carries most of the probability mass. 

If the support size is exponential in $n$, a polynomial number of samples cannot yield an accurate estimate for $\vert \supp{\pi} \vert$. In such cases, the collision estimator will typically output a large estimate for $\vert \supp{\pi} \vert$, but not necessarily an accurate one. If we suspect the support size is large, and the collision estimator may not yield a reliable estimation, one option is to compute the lower confidence bound for the collision estimator, thus yielding an upper bound for $\vert \supp{\pi} \vert$. When checking the $k$-GLEP conditions, an overestimate of $\vert \supp{\pi} \vert$ makes the weight bound and smoothness condition stricter, preventing a possible ``spike'' that impedes fast mixing, therefore, an upper bound of $\vert \supp{\pi} \vert$ provides a more conservative check and avoid false \textsc{Pass}. The collision estimator $\hat{C}_2=\frac{1}{S(S-1)  } \sum_{i \neq j} \mathbf{1} [ x_i=x_j ] $ can be analyzed with U-statistic Hoeffding’s inequality~\cite{serfling1980approximation, Ai2022HoeffdingSerflingIF}, 
\begin{equation*}
    \Pr( \vert \hat{C}_2 -C_2 \vert \geq t ) \leq 2 \exp ( -c_{uh} S t^2 )
\end{equation*}
$c_{uh}$ is an coefficient, we can take $c_{uh}=\frac{1}{2}$ or $\frac{1}{4}$ for example, fix $\delta$, we have $t=\sqrt{ \frac{1}{c_{uh}} \cdot\frac{ \ln (2/\delta)}{ S } } $, then with probability at least $1-\delta$, $C_2 \in [\hat{C}_2- t, \hat{C}_2+ t] $. Define the conservative estimate $\hat{C}_2^L = \max \{ \hat{C}_2-t, 0 \} $, $\frac{1}{\hat{C}_2^L} $ provides a conservative upper bound (overestimate) for $\vert \supp{\pi} \vert$.

If no collision was observed among all $S$ samples $x_1, \ldots, x_S  \sim \pi$, the support is likely to be large, possibly exponential in $n$. As $C_2= \sum_x \pi(x)^2$, among $S$ samples, there are $\binom{S}{2}$ pairs, the probability of no collision is upper bounded by \footnote{Although the $\binom{S}{2}$ pair are not independent, this still provides a valid upper bound.}
\begin{equation*}
    \Pr[\text{no collision}] \leq (1-C_2)^{\binom{S}{2}} \leq e ^{-C_2 \binom{S}{2} }
\end{equation*}
Set $e ^{C_2 \binom{S}{2}} \geq \delta$, then with confidence at least $1-\delta$, $C_2 \leq \frac{\ln (1/\delta)}{\binom{S}{2} }$, that is $\vert \supp{\pi} \vert \geq \frac{\binom{S}{2}}{\ln (1/ \delta) }$.

Alternatively, when no collisions are observed, the support size can also be inferred from the rejection sampling parameter $C$ in the ``Make samples'' step. To see this: Suppose all nonzero $\pi(x)$ are equal, i.e., $\pi$ is uniformly distributed on its support. Then $\frac{\pi(x)}{2^{-n}}= \frac{1/\vert \supp{\pi} \vert}{2^{-n}}=\frac{2^n}{\vert \supp{\pi} \vert} $, $C=\sup_x \frac{\pi(x)}{2^{-n}}=\frac{2^n}{\vert \supp{\pi} \vert}$, therefore $\vert \supp{\pi} \vert =\frac{2^n}{C}$. For non-uniformly distributed $\pi$, $C$ gives an upper bound in concentration; the actual support could be larger, but $\frac{2^n}{C}$ still serves as a lower bound.

The total sample complexity for Protocol~\ref{Prot: Empirical Check} is $CS  + S M + S \mathcal{O}(R)$, that is  $\mathcal{O} \Big (C \frac{1}{\epsilon^2} \ln \big (\frac{1}{\delta} \big) + \frac{1}{\epsilon^2} \ln \big (\frac{1}{\delta} \big)+ \frac{1}{\epsilon^4} \ln^2 \big (\frac{1}{\delta} \big ) \cdot \textnormal{poly} (n^k) \Big) $, $k$ is a small number, as long as $C$ is polynomially bounded, the query complexity does not grow exponentially in $n$. For the cases when $C$ is exponential, directly verifying certain state ensembles or replacing the rejection sampling with simulating the Markov chain (or given access to $S$ samples representing $\pi$) restores a polynomial complexity of the empirical check. This concludes the proof of Result~\ref{result:empirical-check}. We note that this empirical check certifies that the $k$-GLEP conditions hold on most of the probability mass; it does not guarantee they hold pointwise for all basis states. The empirical check is more of a diagnostic tool.

\section{Certifying Mixed States}
\label{Mixed states}
We consider a mixed target state
\begin{equation}
\label{fidelity  simple}
    \sigma = \sum_{i=1}^M p_i \vert \psi_i \rangle \langle \psi_i \vert, \ \ p_i \geq 0, \sum_{i=1}^M p_i=1, 1 \leq i \leq M
\end{equation}
where each $\{ \vert \psi_i \rangle \}$ is a pure state satisfying $k$-GLEP, thus lies within our protocol's scope. We define $f_i= \langle \psi_i \vert \rho \vert \psi_i \rangle $, which is the fidelity of $\rho$ against the pure state $\vert \psi_i \rangle$. And recall Uhlmann fidelity $F(\rho, \sigma)= \Vert \sqrt{\rho} \sqrt{\sigma} \Vert_1^2= [\Tr\sqrt{\sqrt{\rho} \sigma \sqrt{\rho} } ]^2 $. This section relates $F(\rho, \sigma)  $ to $f_i$ and proves Theorem~\ref{Mixed States theorem}.

$\Tr\sqrt{\sqrt{\rho} \sigma \sqrt{\rho} }$ is jointly concave (see~\cite[Theorem 9.7]{Nielsen_Chuang_2010} for a proof), so 
\begin{equation*}
    \Tr\sqrt{\sqrt{\rho} \sigma \sqrt{\rho} }=\sqrt{F(\rho, \sigma)}=\sqrt{F(\rho, \sum_i p_i \vert \psi \rangle \langle \psi_i \vert ) } \geq \sum_i^M p_i \sqrt{F(\rho, \vert \psi_i \rangle \langle \psi_i \vert) }= \sum_i^M p_i \sqrt{f_i} 
\end{equation*}
That is,
\begin{equation}
\label{mixed lower bound}
    F(\rho, \sigma)  \geq \Big ( \sum_{i=1}^M p_i \sqrt{f_i}  \Big )^2
\end{equation}
which provides the lower bound.

To obtain an upper bound, we use trace subadditivity for concave functions, that is, Rotfel’d trace inequality.
\begin{theorem}[Rotfel’d trace inequality]
\label{Rotfel’d trace inequality}
    If $A, B  \succeq 0$ and $f: [0, \infty) \rightarrow [0, \infty)$ is concave with $f(0) \geq 0$, then 
    \begin{equation*}
        \Tr f(A+B) \leq \Tr f(A) + \Tr f(B)
    \end{equation*}
\end{theorem}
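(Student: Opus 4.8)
The plan is to reduce the matrix statement to a one-parameter family of elementary building blocks and to verify the trace inequality for these by a direct variational argument. First I would normalize to the case $f(0)=0$. Set $g(t)=f(t)-f(0)$, which is concave with $g(0)=0$. Since $f$ is concave and nonnegative on $[0,\infty)$, it cannot have a strictly negative slope (concavity would then force $f\to-\infty$, contradicting $f\ge 0$), so $f$, and hence $g$, is nondecreasing and $g\ge 0$. If the claimed inequality holds for $g$, then writing $n$ for the dimension,
\begin{equation*}
\Tr f(A+B)-n f(0)=\Tr g(A+B)\le \Tr g(A)+\Tr g(B)=\Tr f(A)+\Tr f(B)-2n f(0),
\end{equation*}
whence $\Tr f(A+B)\le \Tr f(A)+\Tr f(B)-nf(0)\le \Tr f(A)+\Tr f(B)$ because $f(0)\ge 0$. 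Thus it suffices to treat $g$ with $g(0)=0$, $g\ge 0$, concave and nondecreasing.

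Second, I would use the integral representation of such a $g$. Since $g'$ is nonnegative and nonincreasing, one can write $g(t)=ct+\int_0^\infty g_s(t)\,d\mu(s)$ with $g_s(t):=\min(t,s)=t\wedge s$, a constant $c\ge 0$, and a nonnegative measure $\mu$; each $g_s$ is concave, nondecreasing and vanishes at $0$. Because $\Tr h(X)=\sum_j h(\lambda_j(X))$ is linear in the scalar function $h$, both sides of the claimed inequality are linear functionals of $g$, so after interchanging the finite eigenvalue sum with $\int d\mu$ it is enough to prove the trace inequality separately for the blocks $h(t)=t$ and $h(t)=g_s(t)$ for each fixed $s>0$. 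The linear block gives equality, since $\Tr(A+B)=\Tr A+\Tr B$.

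Third, for the block $g_s$ I would rewrite $t\wedge s=t-(t-s)_+$, so that the required inequality $\Tr g_s(A+B)\le \Tr g_s(A)+\Tr g_s(B)$ is, after cancelling the additive trace terms, equivalent to the superadditivity of the positive part,
\begin{equation*}
\Tr (A+B-sI)_+\ \ge\ \Tr (A-sI)_+ + \Tr (B-sI)_+ .
\end{equation*}
To prove this I would use the variational formula $\Tr (X-sI)_+=\max_{V}\big[\Tr(\Pi_V X)-s\dim V\big]$, the maximum ranging over subspaces $V$ with orthogonal projector $\Pi_V$; this is the maximum of the linear functional $P\mapsto\Tr\!\big(P(X-sI)\big)$ over $0\preceq P\preceq I$, attained at an extreme point, i.e.\ a projection. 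Let $V_A,V_B$ attain the maxima for $A$ and $B$, and set $V=V_A+V_B$. Since $V\supseteq V_A,V_B$ gives $\Pi_V\succeq \Pi_{V_A},\Pi_{V_B}$, and since $A,B\succeq 0$, monotonicity of the trace yields $\Tr(\Pi_V A)\ge \Tr(\Pi_{V_A}A)$ and $\Tr(\Pi_V B)\ge \Tr(\Pi_{V_B}B)$, while $\dim V\le \dim V_A+\dim V_B$. Testing the variational formula for $A+B$ against $V$ then gives
\begin{equation*}
\Tr(A+B-sI)_+\ \ge\ \Tr(\Pi_V A)+\Tr(\Pi_V B)-s\dim V\ \ge\ \big[\Tr(\Pi_{V_A}A)-s\dim V_A\big]+\big[\Tr(\Pi_{V_B}B)-s\dim V_B\big],
\end{equation*}
which is exactly the displayed superadditivity.

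The main obstacle is this last step: the positive-part superadditivity is the genuine inequality doing all the work, and the subtlety is that testing the variational bound for $A$ and $B$ against their \emph{separate} optimal subspaces would overcount the $-s\dim$ penalty. Passing through the single subspace-sum $V_A+V_B$, combined with the order monotonicity $\Tr(\Pi_V X)\ge \Tr(\Pi_W X)$ for $V\supseteq W$ and $X\succeq 0$ together with dimension subadditivity, is what recovers the correct constant. Establishing the integral representation of $g$ and justifying the Fubini interchange of $\int d\mu$ with the finite eigenvalue sum is routine but should be stated carefully.
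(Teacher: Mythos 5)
Your proof is correct, but note that the paper itself contains no proof of this theorem: it states the inequality and defers entirely to the cited references~\cite{BOURIN2007512, article_Lee}, so any complete argument is necessarily a different route from the paper's treatment. What you give is a clean, self-contained version of the classical proof of the trace form of Rotfel'd's inequality: reduce to $f(0)=0$; represent a nonnegative, nondecreasing, concave $g$ with $g(0)=0$ as $g(t)=ct+\int_0^\infty \min(t,s)\,d\mu(s)$; exploit linearity of $h\mapsto \Tr h(A)+\Tr h(B)-\Tr h(A+B)$ in the scalar function $h$ (Tonelli suffices for the interchange, since $\mu\geq 0$ and all integrands are nonnegative, even though $\mu$ may have infinite total mass, as it does for $\sqrt{t}$); and finally establish superadditivity of $X\mapsto \Tr(X-sI)_+$ via the Ky Fan--type variational formula, where passing through the single subspace $V_A+V_B$, together with $\Pi_V\succeq \Pi_{V_A}$ for $V\supseteq V_A$, $\Tr(\Pi_V A)\geq \Tr(\Pi_{V_A}A)$ for $A\succeq 0$, and $\dim(V_A+V_B)\leq \dim V_A+\dim V_B$, recovers the correct constant --- you correctly identify this as the crux. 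Each of these steps checks out. Compared with the references, which obtain the trace inequality as a corollary of stronger operator- or symmetric-norm-level subadditivity statements, your argument is more elementary, uses only finite-dimensional linear algebra plus a standard integral representation, and would make the paper's appendix self-contained.

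One small patch is needed for full generality of the statement as written. Your integral representation implicitly assumes $g$ is continuous at $0$. Concave functions are automatically continuous on $(0,\infty)$, but a concave $f\colon[0,\infty)\to[0,\infty)$ may jump upward at the origin (e.g.\ $f(0)=0$ and $f\equiv 1$ on $(0,\infty)$ is concave and nondecreasing), in which case $g$ is not of the form $ct+\int \min(t,s)\,d\mu(s)$. The fix is one line: split off an additional block $c_0\,\mathbbm{1}(t>0)$ with $c_0=g(0^+)\geq 0$; for this block the required trace inequality is exactly rank subadditivity, $\operatorname{rank}(A+B)\leq \operatorname{rank}(A)+\operatorname{rank}(B)$, which holds for $A,B\succeq 0$ because $\ker(A+B)=\ker A\cap \ker B$, so the range of $A+B$ lies in the sum of the ranges of $A$ and $B$. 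For the paper's sole application, $f(t)=\sqrt{t}$ with $X_i=p_i\sqrt{\rho}\,\vert \psi_i\rangle\langle\psi_i\vert\sqrt{\rho}$, the function is continuous, so your proof as written already covers everything the paper needs.
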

Theorem~\ref{Rotfel’d trace inequality} is proved in Refs.~\cite{BOURIN2007512, article_Lee}. Let $f(t)=\sqrt{t}$, we have $\Tr \sqrt{A+B} \leq \Tr \sqrt{A}+ \Tr \sqrt{B} $. Set $X_i := p_i \sqrt{\rho} \vert \psi_i \rangle \langle \psi_i \vert \sqrt{\rho}$, then we find $\sqrt{\rho} \sigma \sqrt{\rho} =\sum_{i=1}^M X_i $. 
Each $X_i$ is a rank-1 positive semidefinite \footnote{Define a vector $\vert w_i \rangle:= \sqrt{\rho} \vert \psi_i \rangle$, then $X_i:= p_i \vert w_i \rangle \langle w_i \vert $, $X_i$ is rank-1. For any $\vert v \rangle$, $\langle v \vert X_i \vert v \rangle = p_i \vert \langle w_i \vert v \rangle \vert^2 \geq 0  $, $X_i$ is psd.}, so $\Tr(X_i)= p_i  \langle \psi_i \vert \sqrt{\rho} \sqrt{\rho} \vert \psi_i \rangle= p_i  \langle \psi_i \vert \rho \vert \psi_i \rangle=p_i f_i $, we have $\Tr \sqrt{X_i}=\sqrt{p_i} \Vert \sqrt{\rho} \vert \psi_i \rangle \Vert = \sqrt{p_i} \sqrt{\langle \psi_i \vert \rho \vert \psi_i \rangle}=\sqrt{p_i f_i} $. Applying Theorem~\ref{Rotfel’d trace inequality} yields 
\begin{equation*}
    \Tr\sqrt{\sqrt{\rho} \sigma \sqrt{\rho} } = \Tr \sqrt{ \sum_{i=1}^M X_i} \leq \sum_{i=1}^M \Tr \sqrt{X_i}= \sum_{i=1}^M \sqrt{p_i f_i},
\end{equation*}
thus we have
\begin{equation}
\label{mixed upper bound}
    F(\rho, \sigma) \leq \Big (\sum_{i=1}^M \sqrt{p_i f_i} \Big )^2
\end{equation}
Combining Equation~\ref{mixed lower bound} and~\ref{mixed upper bound} gives Theorem~\ref{Mixed States theorem}. In the special case where $[\rho, \sigma]=0$, namely $\rho$ and $\sigma$ commute, and $\sigma=\sum_{i=1}^M p_i \vert \psi_i \rangle \langle \psi_i \vert$ in a common eigenbasis with $\rho=\sum_{i=1}^M r_i \vert \psi_i \rangle \langle \psi_i \vert$, then $F(\rho, \sigma)= \Big (\sum_{i=1}^M \sqrt{r_i p_i} \Big)^2 $, which is the standard reduction of Uhlmann fidelity to the classical Bhattacharyya coefficient.
\end{appendices}
\printbibliography
\end{document}